\theoremstyle{plain}
\newtheorem{myth}{Theorem}[section]
\newtheorem{mylm}[myth]{Lemma}
\newtheorem{mypr}[myth]{Proposition}
\newtheorem{mycor}[myth]{Corollary}
\theoremstyle{remark}
\newtheorem{mydef}[myth]{Definition}
\newtheorem{myass}[myth]{Assumption}
\newcommand{\ds}{\displaystyle}
\newcommand{\mc}{\mathcal}
\newcommand{\ul}{\underline}
\newcommand{\ol}{\overline}
\newcommand{\expoc}{E_{t,z,x}}
\newcommand{\x}{X^{u,t,x}}
\newcommand{\z}{Z^{u,t,z,x}}
\numberwithin{equation}{section}
\DeclareMathOperator*{\esssup}{ess\,sup}
\DeclareMathOperator*{\essinf}{ess\,inf}
\begin{document}
\title{\textbf{A Stochastic Control Approach to Bid-Ask Price Modelling}}
\author[1]{Engel John C. Dela Vega}
\author[1,2]{Robert J. Elliott\footnote{Corresponding Author; Email: relliott@ucalgary.ca}}
\affil[1]{UniSA Business, University of South Australia, Adelaide, SA 5000, Australia}
\affil[2]{Haskayne School of Business, University of Calgary, Calgary, Alberta, Canada T2N 1N4}
\date{}
	\maketitle
	\vspace{-1cm}
	\begin{abstract}
		This paper develops a model for the bid and ask prices of a European type asset by formulating a stochastic control problem. The state process is governed by a modified geometric Brownian motion whose drift and diffusion coefficients depend on a Markov chain. A Girsanov theorem for Markov chains is implemented for the change of coefficients, including the diffusion coefficient which cannot be changed by the usual Girsanov theorem for Brownian motion. The price of a European type asset is then determined using an Esscher transform and a system of partial differential equations. A dynamic programming principle and a maximum/minimum principle associated with the stochastic control problem are then derived to model bid and ask prices. These prices are not quotes of traders or market makers but represent estimates in our model on which reasonable quantities could be traded.
		
		\vspace{12pt}
		
		\noindent\emph{Key Words:} stochastic optimal control, Markov chains, regime-switching, two price finance, Esscher transform, European options, dynamic programming principle, HJB equation, maximum principle
	\end{abstract}

	\section{Introduction}
	\vspace{-0.5cm}
	Stochastic optimal control problems involving Markov processes have been extensively studied \cite{bertsekas:measurableselection, elliott:hmm, soner:controlledprocess, gihman:controlledprocess, zhenting:controlledprocess}. The most common methods in solving these control problems are the dynamic programming principle and the stochastic maximum principle. Stochastic modelling under a Markovian regime-switching environment incorporates jumps in prices due to changes in the state of the economy, which is modelled by a Markov chain. In the context of stochastic control theory, regime-switching state equations have also been investigated. Maximum principles of regime-switching state equations are developed in several papers \cite{ref1:opt,ref2:opt,elliott:stochmaxprin}. A dynamic programming principle under regime-switching is discussed in \cite{azevedo:control}.
	
	This paper aims to discuss bid and ask prices by formulating a control problem. These two prices are modelled as different interpretations of dynamics for the asset by supposing the coefficients in the dynamics are functions of a continuous-time finite state Markov chain whose evolution depends on parameters which can be thought of as control variables. In turn, these parameters give rise to a family of probability measures which can be considered as possible future scenarios. 
	
	%Most extensions of the Black-Scholes geometric Brownian dynamics for a price process consider changes in the drift coefficient which can be effected using Girsanov's theorem. Changes in the diffusion coefficient usually require more technical machinery such as Peng's $G$-Brownian motion \cite{peng:g,peng:gmulti,zhang:ggirsanov}. A change in the diffusion coefficient through a Girsanov-type measure change for Markov chains is also introduced.
	
	The paper is organized as follows: the price process dynamics are introduced in Section \ref{ustart} as a modified geometric Brownian motion where the drift and diffusion coefficients depend on the Markov chain, together with a jump term. Section \ref{controlsection} introduces controlled Markov chains. The chain is then controlled by the first change of measure, discussed in Section \ref{change of measure for markov chains}, which allows changes in the diffusion coefficient by changing the rate matrix of the chain. %Section \ref{example of rate matrix B} gives an example of a new rate matrix for the Markov chain. 
	The second measure change in Section \ref{esscher} involves the use of the Esscher transform to change the drift coefficient and to find risk-neutral dynamics for the underlying process. The price of a European type asset is then determined using a system of partial differential equations and the homotopy analysis method in Section \ref{uend}. Finally, the bid and ask prices are characterized as solutions of optimal control problems in Section \ref{bid-ask price models}.

	\section{Regime-Switching Process}\label{ustart}
	\vspace{-0.2cm}
	Consider a complete probability space $(\Omega,\mc{F},P)$, where $P$ is a real-world probability measure. Write $\mc{T}:=[0,T]$. Let $\mathbf{W}:=\{W_t\}_{t\in \mc{T}}$ be a standard Brownian motion on $(\Omega,\mc{F},P)$. Assume that states in an economy are modelled by a continuous-time Markov chain $\mathbf{X}:=\{X_t\}_{t\in\mc{T}}$ on $(\Omega,\mc{F},P)$ with finite state space. Without loss of generality we can identify the state space with $\mc{S}:=\{e_1,e_2,\ldots,e_N \}\subset \mathbb{R}^N$, where $e_i\in\mathbb{R}^N$ and the $j$th component of $e_i$ is the Kronecker delta $\delta_{ij}$ for each $i,j=1,\ldots,N$.
	
	We suppose $\mathbf{X}$ and $\mathbf{W}$ are independent processes. In this section, we introduce a model for a price process whose drift and diffusion coefficients depend on the chain $\mathbf{X}$ and in addition has a jump term. 
	
	Suppose that the dynamics of a process $\{\ol{S}_t\}_{t\in\mc{T}}$ are described by
	\begin{align*}
		d\ol{S}_t=\mu_t\ol{S}_tdt+\sigma_t\ol{S}_tdW_t,\quad \ol{S}_0=s.
	\end{align*} 
	Here, $\{\mu_t\}_{t\in\mc{T}}$ and $\{\sigma_t\}_{t\in\mc{T}}$ depend on $\mathbf{X}$ and are given, respectively, by
	$\mu_t:=\mu(t,X_t)=\langle\mu,X_t\rangle$ and $\sigma_t:=\sigma(t,X_t)=\langle\sigma,X_t\rangle$,
	where $\mu=(\mu_1,\mu_2,\ldots,\mu_N)^{\top}\in \mathbb{R}^N$ and $\sigma=(\sigma_1,\sigma_2,\ldots,\sigma_N)^{\top}\in \mathbb{R}^N$ with $\sigma_i>0$ for all $i=1,\ldots, N$. Let $Y_t:=\ln(\ol{S}_t/\ol{S}_0)$ denote the logarithmic return of the process $\{\ol{S}_t\}_{t\in\mc{T}}$. Then, the dynamics of the process $\{\ol{S}_t\}_{t\in\mc{T}}$ can be written as \[\ol{S}_t=\ol{S}_0\exp \ol{Y}_t,\] where $d\ol{Y}_t=\left(\mu_t-\frac{1}{2}\sigma^2_t\right)dt+\sigma_tdW_t.$
	
	Consider now a model for a price process $S_t:=\ol{S}_t\langle\alpha,X_t\rangle$, where $\alpha:=(\alpha_1,\ldots,\alpha_N)^{\top}\in\mathbb{R}^N$ with $\alpha_i>0$ for all $i=1,\ldots,N$. Furthermore, for every $i=1,\ldots,N$, the components $\alpha_i$ satisfy exactly one of the following: (a) $0<\alpha_i<1$, (b) $\alpha_i=1$, or (c) $\alpha_i>1$. The process $S_t$ can be interpreted as a scaled version of the original process. This serves as a pure jump component of the price process.
	
	Let $A:=[a_{ji}]_{i,j=1,\ldots,N}$ be a family of rate matrices of the chain $\mathbf{X}$ under $P$, where $a_{ji}$ is the instantaneous transition intensity of the chain from state $e_i$ to state $e_j$. Note that for $i,j=1,\ldots,N$, the $a_{ji}$ satisfy (1) $a_{ji}\geq 0$, for $i\neq j$ and (2) $\sum_{j=1}^Na_{ji}=0$. The second condition implies that $a_{ii}\leq 0$ for $i=1,\ldots, N$.
	
	Let $\mathbb{F}^{\mathbf{X}}:=\{\mc{F}_t^{\mathbf{X}}\}_{t\in\mc{T}}$ be the $P$-augmented natural filtration generated by $\mathbf{X}$. Then the semimartingale dynamics for $\mathbf{X}$ under $P$ are given by 
	\begin{equation}\label{chain}
		X_t=X_0+\int_{0}^{t} AX_sds+M_t,
	\end{equation}
	where $\mathbf{M}:=\{M_t\}_{t\in \mc{T}}$ is an $\mathbb{R}^N$-valued, square-integrable, $(\mathbb{F}^{\mathbf{X}},P)$-martingale  \cite{elliott:hmm}.
	
	The dynamics of $\{S_t\}_{t\in\mc{T}}$ are then described by
	\begin{align}\label{prior change of measure}
		dS_t&= \langle\alpha,X_t\rangle d\ol{S}_t +\ol{S}_{t-}\langle\alpha,dX_t\rangle\nonumber\\
		%&=\langle\alpha,X_t\rangle (\mu_t\ol{S}_tdt+\sigma_t\ol{S}_tdW_t) +\ol{S}_t\langle\alpha,AX_tdt+dM_t\rangle\\
		&=\mu_tS_tdt+\sigma_tS_tdW_t +\ol{S}_t\langle\alpha,AX_t\rangle dt+\ol{S}_t\langle\alpha,dM_t\rangle.
	\end{align}
	Note that $\ol{S}_{t-}=\ol{S}_t$ since the process $\{\ol{S}_t\}_{t\in\mc{T}}$ is continuous.
	
	\section{Controlled Markov chains}\label{controlsection}
	We review here ideas relating to the control of a Markov chain as discussed in \cite{soner:controlledprocess, miller:control, miller:controlmarkov, miller:control2}. Consider the same complete probability space $(\Omega,\mc{F},P)$. Suppose $\textsf{U}$ is a set of control values where \textsf{U} is a compact subset of $\mathbb{R}^N_+$. Define the set of admissible control functions, denoted by $\mc{U}$, as the set of $\mathbb{F}^{\mathbf{X}}$-predictable functions $u:\mc{T}\rightarrow\textsf{U}$. We wish to consider the situation where the rate matrix $A$ in \eqref{chain} is replaced by a family of rate matrices $B(u):=[b_{ji}(u)]_{i,j=1,\ldots,N}$.
	
	As above, for $i,j=1,\ldots,N$ and $u\in\mc{U}$, the terms $b_{ji}(u)$ satisfy (1) $b_{ji}(u)\geq0$, for $i\neq j$ and (2) $\sum_{j=1}^Nb_{ji}(u)=0$. The second condition then implies that $b_{ii}(u)\leq0$ for $i=1,\ldots, N$.
	
	In the next section, we will show how for any control function $u\in\mc{U}$, the chain $\mathbf{X}=\mathbf{X}^u=\{X^u_t\}_{t\in\mc{T}}$ then has dynamics
	\begin{equation}\label{chainb}
		X^u_t=X^u_0+\int_0^t B(u(s))X^u_sds+M^u_t,
	\end{equation}
	where $\mathbf{M}^u:=\{M^u_t\}_{t\in \mc{T}}$ is an $\mathbb{R}^N$-valued, square-integrable martingale with initial condition $M^u(0)=0$ and (matrix) quadratic variation
	\begin{align}\label{quadchar}
		\langle M^{u}\rangle_t&=\int_0^t\left\{\mbox{diag}[B(u(s))X^u_s]-B(u(s))\mbox{diag}[X^u_s]-\mbox{diag}[X^u_s]B(u(s))^{\top} \right\}ds.
	\end{align}
	The chain $\mathbf{X}^u$ is then a controlled Markov chain. In the context of asset pricing under regime-switching, we can interpret the control $u(t)$ as a factor that influences the rate matrix at time $t$. The probabilities of each state at each time $t$ are then changing depending on $u(t)$.

	%Section3
	\section{Change of Measure for Markov Chains}\label{change of measure for markov chains}
	In this section, we provide a new approach and show how the dynamics \eqref{chainb} can be constructed using a Girsanov-type of measure change using methods from \cite{dufour:filter}. The results will be presented without proofs as the proofs are similar. We begin by considering a control function $u(t)=(u_1(t),\ldots, u_N(t))^{\top}\allowbreak\in\mc{U}$ and, as above, the family of rate matrices $B(u(t)):=[b_{ji}(u_i(t))]_{i,j=1,\ldots,N}$ for the chain $\mathbf{X}$.
	
	Recall that the off-diagonal entries of the rate matrix represent the rate of switching between states. Having large-valued off-diagonal entries implies that jumping to another state takes place faster. Therefore, the chain can model rapid changes in asset prices.
	
	Suppose now $a_{ji}>0$ for all $i,j=1,\ldots,N$. Define, for each $t\in\mc{T}$,
	\begin{align*}
		D(u(t)):=\left[\frac{b_{ji}(u_i(t))}{a_{ji}}\right]_{i,j=1,\ldots,N}=[d_{ji}(u_i(t))]_{i,j},
	\end{align*}
	and write
	\begin{align*}
		D_0(u(t))&:=D(u(t))-\mbox{diag}(d_{11}(u_1),\ldots,d_{NN}(u_N)).
	\end{align*}
	Similarly,  write 
	\begin{align*}
		A_0:=A-\mbox{diag}(a_{11},a_{22},\ldots,a_{NN}).
	\end{align*}
	These matrices can be thought of as the original matrices $D(u(t))$ and $A$, but with zeroes as their diagonal entries.
	
	\begin{mypr}
		Let $\mathbf{J}:=\{J(t)\}_{t\in\mc{T}}$ be a vector-valued counting process on $(\Omega,\mathcal{F},P)$, where for each $t\in\mc{T}$, \[J(t):=\int_0^t(I-\mathbf{diag}[X_{s-}])dX_s.\] Then $J_i(t)$ counts the number of jumps of chain $\mathbf{X}$ to state $e_i$ up to time t, for each $i=1,\ldots,N$.
	\end{mypr}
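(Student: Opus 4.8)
The plan is to evaluate the vector integral defining $\mathbf{J}$ pathwise, using the fact that a continuous‑time Markov chain on the finite set $\mc{S}$ is a piecewise‑constant, pure‑jump process of finite variation. First I would fix $t\in\mc{T}$ and recall that, almost surely, $\mathbf{X}$ has only finitely many jumps on $[0,t]$; list the jump times in $(0,t]$ as $0<\tau_1<\cdots<\tau_m\le t$. Since $\mathbf{X}$ has finite variation and the integrand $I-\mathbf{diag}[X_{s-}]$ is left‑continuous, hence predictable and locally bounded, the stochastic integral against $\mathbf{X}$ coincides with the pathwise Lebesgue--Stieltjes integral, and the latter is carried entirely by the jumps of $\mathbf{X}$, so
\[
J(t)=\int_0^t\bigl(I-\mathbf{diag}[X_{s-}]\bigr)\,dX_s=\sum_{k=1}^{m}\bigl(I-\mathbf{diag}[X_{\tau_k-}]\bigr)\,\Delta X_{\tau_k},\qquad \Delta X_{\tau_k}:=X_{\tau_k}-X_{\tau_k-}.
\]

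The second step is a one‑line linear‑algebra computation of a generic summand. At a jump time $\tau$ the chain moves from $X_{\tau-}=e_p$ to $X_{\tau}=e_q$ with $q\ne p$, and $\mathbf{diag}[e_p]$ is the matrix whose only nonzero entry is a $1$ in position $(p,p)$, so $\mathbf{diag}[e_p]X_{\tau-}=e_p$ while $\mathbf{diag}[e_p]X_{\tau}=\mathbf{diag}[e_p]e_q=0$. Hence
\[
\bigl(I-\mathbf{diag}[X_{\tau-}]\bigr)\Delta X_\tau=\bigl(I-\mathbf{diag}[e_p]\bigr)X_\tau-\bigl(I-\mathbf{diag}[e_p]\bigr)X_{\tau-}=X_\tau-0=X_\tau ,
\]
i.e. every jump contributes precisely the unit vector of the state just entered, and therefore $J(t)=\sum_{k=1}^{m}X_{\tau_k}$.

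Finally I would read off the components. For each $i=1,\ldots,N$, $J_i(t)=\langle J(t),e_i\rangle$ equals the number of indices $k$ with $X_{\tau_k}=e_i$, that is, the number of $s\in(0,t]$ for which $X_{s-}\ne e_i$ and $X_s=e_i$ --- exactly the number of jumps of $\mathbf{X}$ into state $e_i$ up to time $t$. It then follows immediately that $J_i(0)=0$, that $t\mapsto J_i(t)$ is right‑continuous, nondecreasing and takes values in $\{0,1,2,\ldots\}$, and that at each jump time exactly one coordinate of $\mathbf{J}$ increases, by one unit; thus $\mathbf{J}$ is a genuine $\mathbb{F}^{\mathbf{X}}$‑adapted vector counting process with the asserted interpretation. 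I do not foresee a genuine obstacle; the only step needing a little care is the first one, namely justifying that the integral collapses to a finite sum over jump times, which rests on the finite‑variation/pure‑jump nature of $\mathbf{X}$ together with the predictability (left‑continuity) of $I-\mathbf{diag}[X_{s-}]$. As an optional complement one may also identify, from the semimartingale form \eqref{chain}, the $(\mathbb{F}^{\mathbf{X}},P)$‑compensator of $J_i$ as $\int_0^t\sum_{j\ne i}a_{ij}\langle X_s,e_j\rangle\,ds$, but this is not required for the claim.
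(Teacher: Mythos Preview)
Your proposal is correct and follows essentially the same approach as the paper: both argue that the stochastic integral reduces to a sum over jump times and then verify, via the elementary computation $\bigl(I-\mathbf{diag}[X_{\tau-}]\bigr)\Delta X_\tau=X_\tau$ when $X_{\tau-}=e_p\neq e_q=X_\tau$, that each jump contributes the unit vector of the state just entered. Your write-up is simply more explicit than the paper's terse version, in particular in justifying the collapse of the integral to a finite sum and in reading off the componentwise counting interpretation.
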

	%%\iffalse
	\begin{proof}
		Let $0< s\leq t$. Suppose $X_{s-}=e_i$ and $\Delta X_s=X_s-X_{s-}$.  The vector $\Delta X_s$ is the zero vector unless a jump to another state occurs. Then, $(I-\mathbf{diag}[X_{s-}])\Delta X_s=X_s$. Summing over all $s\in(0,t]$ yields the result.
	\end{proof}
	%\fi
	
	\begin{mypr}\label{Jbar}
		The process $\overline{\mathbf{J}}:=\{\overline{J}(t)\}_{t\in\mc{T}}$ defined by 
		\[\overline{J}(t):=J(t)-\int_0^tA_0X_sds,\quad t\in\mc{T}\] is an $\mathbb{R}^N$-valued, $(\mathbb{F}^{\mathbf{X}},P)$-martingale.
	\end{mypr}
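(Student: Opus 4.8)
The plan is to exploit the semimartingale decomposition \eqref{chain} of $\mathbf{X}$ to split $J(t)$ into an absolutely continuous part plus a stochastic integral against the martingale $\mathbf{M}$, and then to recognise the absolutely continuous part as exactly $\int_0^t A_0 X_s\,ds$; what is left is then the stochastic integral, which is a martingale.

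First I would substitute $dX_s = AX_s\,ds + dM_s$ from \eqref{chain} into the definition of $J$, obtaining
\[
J(t) = \int_0^t (I-\mbox{diag}[X_{s-}])AX_s\,ds + \int_0^t (I-\mbox{diag}[X_{s-}])\,dM_s .
\]
In the first (Lebesgue) integral we have $X_{s-}=X_s$ for all but the countably many jump times, so $X_{s-}$ may be replaced by $X_s$ there. The key algebraic observation is the pointwise identity $(I-\mbox{diag}[e_i])Ae_i = A_0 e_i$ for each $i=1,\dots,N$: indeed $Ae_i$ is the $i$th column of $A$, premultiplication by $I-\mbox{diag}[e_i]$ annihilates its $i$th coordinate, and the resulting vector is precisely the $i$th column of $A_0 = A-\mbox{diag}(a_{11},\dots,a_{NN})$. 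Since $X_s\in\mc{S}$ for every $s$, this gives $(I-\mbox{diag}[X_s])AX_s = A_0 X_s$ identically, so the first integral equals $\int_0^t A_0 X_s\,ds$ and therefore
\[
\overline{J}(t) = \int_0^t (I-\mbox{diag}[X_{s-}])\,dM_s .
\]

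It then remains only to confirm that this stochastic integral is a genuine $(\mathbb{F}^{\mathbf{X}},P)$-martingale (with $\overline{J}(0)=0$), not merely a local martingale. The integrand $I-\mbox{diag}[X_{s-}]$ is $\mathbb{F}^{\mathbf{X}}$-predictable and uniformly bounded (its entries lie in $\{0,1\}$), while $\mathbf{M}$ is square-integrable by \eqref{chain}; hence the integral is square-integrable and a true martingale. I expect the only real bookkeeping to be this integrability check together with the justification of the $X_{s-}\rightsquigarrow X_s$ replacement under the $ds$-integral; the column-wise identity $(I-\mbox{diag}[e_i])Ae_i = A_0 e_i$ is the conceptual heart of the argument but is routine once written out.
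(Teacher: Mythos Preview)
Your proposal is correct and follows essentially the same route as the paper: substitute the semimartingale decomposition \eqref{chain} into $dJ(t)$, reduce $(I-\mbox{diag}[X_{t-}])AX_t$ to $A_0X_t$, and read off $\overline{J}(t)=\int_0^t(I-\mbox{diag}[X_{s-}])\,dM_s$. Your write-up is in fact more careful than the paper's, which leaves the column identity $(I-\mbox{diag}[e_i])Ae_i=A_0e_i$ and the true-martingale check implicit.
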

	%%\iffalse
	\begin{proof}
		\begin{align*}
			(I-\mbox{diag}[X_{t-}])dX_t	&=(I-\mbox{diag}[X_{t-}])(AX_tdt+dM_t) \\
			dJ(t)&=A_0X_tdt+(I-\mbox{diag}[X_{t-}])dM_t \\
			&=dJ(t)-d\overline{J}(t)+(I-\mbox{diag}[X_{t-}])dM_t.
		\end{align*}
		Rearranging the equation yields
		\begin{align*}
			d\overline{J}(t)	&=(I-\mbox{diag}[X_{t-}])dM_t,
		\end{align*}
		which proves the result.
	\end{proof} 
	%\fi
	Define for each $t\in\mc{T}$ and $u\in\mc{U}$,
	\[\Lambda^u_1(t):=1+\int_0^t\Lambda^u_1(s-)[D_0(u(s-))X_{s-}-\mathbf{1}]^{\top}d\overline{J}(s),\] where $\mathbf{1}:=(1,\ldots,1)^{\top}\in\mathbb{R}^N$. From Proposition \ref{Jbar}, $\{\Lambda^u_1(t)\}_{t\in\mc{T}}$ is an $(\mathbb{F}^{\mathbf{X}},P)$-local martingale. Since $u(t)$ is bounded, $\{\Lambda^u_1(t)\}_{t\in\mc{T}}$ is an $(\mathbb{F}^{\mathbf{X}},P)$-martingale.
	
	The control of the chain $\mathbf{X}$ will be described by the following change of measure using the density $\Lambda_1^u$. For each $u\in\mc{U}$, define a new probability measure $P^{(u)}\sim P$ on $\mc{F}_T$:
	\[\frac{dP^{(u)}}{dP}\bigg|_{\mc{F}_T}:=\Lambda^u_1(T).\]

	\begin{mypr}\label{girsanovchain}
		For each $u\in\mc{U}$, the chain $\mathbf{X}$ is a Markov chain with rate matrix $B(u)$ under $P^{(u)}$.
	\end{mypr}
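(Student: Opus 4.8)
The plan is to establish the semimartingale representation \eqref{chainb} for $\mathbf{X}$ under $P^{(u)}$ and then conclude via the martingale characterisation of a finite-state Markov chain. Since $\Lambda^u_1$ is a strictly positive $(\mathbb{F}^{\mathbf{X}},P)$-martingale with $\Lambda^u_1(0)=1$ and $dP^{(u)}/dP|_{\mc{F}_T}=\Lambda^u_1(T)$, the abstract Bayes rule gives that an $\mathbb{F}^{\mathbf{X}}$-adapted, $P^{(u)}$-integrable process $\mathbf{N}$ is a $(\mathbb{F}^{\mathbf{X}},P^{(u)})$-martingale if and only if $\{\Lambda^u_1(t)N_t\}_{t\in\mc{T}}$ is a $(\mathbb{F}^{\mathbf{X}},P)$-martingale. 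Applying this with $N_t:=X_t-X_0-\int_0^tB(u(s))X_s\,ds$, it suffices to show that the $P$-drift (i.e. the predictable finite-variation part) of $\{\Lambda^u_1(t)X_t\}_{t\in\mc{T}}$ equals $\{\Lambda^u_1(t)B(u(t))X_t\}_{t\in\mc{T}}$; then $M^u_t:=N_t$ is the martingale in \eqref{chainb}. (Since $X_t$ is bounded, $\Lambda^u_1 N$ is dominated by a constant multiple of the uniformly integrable martingale $\Lambda^u_1$, so a routine localisation argument upgrades the resulting local martingale to a true martingale.)

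To compute this drift I would use the product rule,
\[
d\big(\Lambda^u_1(t)X_t\big)=\Lambda^u_1(t-)\,dX_t+X_{t-}\,d\Lambda^u_1(t)+d[\Lambda^u_1,\mathbf{X}]_t,
\]
substituting $dX_t=AX_t\,dt+dM_t$, $d\Lambda^u_1(t)=\Lambda^u_1(t-)\big[D_0(u(t-))X_{t-}-\mathbf{1}\big]^{\top}d\overline{J}(t)$, and $d\overline{J}(t)=(I-\mbox{diag}[X_{t-}])\,dM_t$ from Proposition \ref{Jbar}. The increments $\Lambda^u_1(t-)\,dM_t$ and $X_{t-}\,d\Lambda^u_1(t)$ are stochastic integrals of predictable integrands against the $P$-martingales $\mathbf{M}$ and $\overline{\mathbf{J}}$, so they contribute no drift, and neither does the martingale part of $[\Lambda^u_1,\mathbf{X}]$. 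Hence the $P$-drift of $\Lambda^u_1 X$ is the sum of $\Lambda^u_1(t-)AX_t\,dt$ and the predictable compensator of the pure-jump process $[\Lambda^u_1,\mathbf{X}]$.

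Since $\mathbf{X}$ is a step process, $[\Lambda^u_1,\mathbf{X}]_t=\sum_{s\le t}\Delta\Lambda^u_1(s)\,\Delta X_s$. On a transition $X_{t-}=e_i\to X_t=e_j$ with $j\neq i$ one has $\Delta X_t=e_j-e_i$ and $\Delta\overline{J}(t)=(I-\mbox{diag}[e_i])(e_j-e_i)=e_j$, so, as $D_0$ has zero diagonal,
\[
\Delta\Lambda^u_1(t)=\Lambda^u_1(t-)\big[D_0(u(t-))e_i-\mathbf{1}\big]^{\top}e_j=\Lambda^u_1(t-)\big(d_{ji}(u_i(t-))-1\big),
\]
hence $\Delta[\Lambda^u_1,\mathbf{X}]_t=\Lambda^u_1(t-)\big(d_{ji}(u_i(t-))-1\big)(e_j-e_i)$. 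Because the $e_i\to e_j$ transition has $P$-intensity $a_{ji}>0$, the compensator of $[\Lambda^u_1,\mathbf{X}]$ has density, on $\{X_{t-}=e_i\}$, equal to $\Lambda^u_1(t-)\sum_{j\neq i}a_{ji}\big(d_{ji}(u_i(t-))-1\big)(e_j-e_i)$. Adding $\Lambda^u_1(t-)AX_t\,dt=\Lambda^u_1(t-)\sum_{j\neq i}a_{ji}(e_j-e_i)\,dt$ and using $a_{ji}d_{ji}(u_i)=b_{ji}(u_i)$ together with $b_{ii}(u_i)=-\sum_{j\neq i}b_{ji}(u_i)$, the total drift on $\{X_{t-}=e_i\}$ collapses to
\[
\Lambda^u_1(t-)\sum_{j\neq i}b_{ji}(u_i(t-))(e_j-e_i)=\Lambda^u_1(t-)B(u(t))e_i=\Lambda^u_1(t-)B(u(t))X_t,
\]
which is what is needed; the predictable quadratic variation \eqref{quadchar} then follows routinely from the $B(u)$-intensities of $\mathbf{X}$ under $P^{(u)}$.

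Finally, $\Lambda^u_1(0)=1$ gives $P^{(u)}=P$ on $\mc{F}_0$, so the initial law of $\mathbf{X}$ is unchanged; since the pair (initial law, generator $B(u)$) characterises the law of a finite-state continuous-time Markov chain via the associated martingale problem \cite{elliott:hmm}, $\mathbf{X}$ is a Markov chain with rate matrix $B(u)$ under $P^{(u)}$. I expect the main obstacle to be the bookkeeping in the third step: reading off $\Delta\overline{J}$ and $\Delta\Lambda^u_1$ at a transition, identifying the $P$-compensator of $[\Lambda^u_1,\mathbf{X}]$ from the intensities $a_{ji}$ — which is precisely where the standing hypothesis $a_{ji}>0$ enters, so that $d_{ji}=b_{ji}/a_{ji}$ is defined — and verifying that the drift collapses to $\Lambda^u_1(t-)B(u(t))X_t$.
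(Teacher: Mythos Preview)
Your argument is correct: you use the abstract Bayes rule to reduce the claim to computing the $P$-drift of $\Lambda^u_1 X$, and your jump-by-jump identification of the compensator of $[\Lambda^u_1,\mathbf{X}]$ is accurate, so the collapse to $\Lambda^u_1(t-)B(u(t))X_t$ goes through.

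The paper follows the same Girsanov/Bayes-rule strategy but organises the computation differently. Rather than working with $X$ directly, it introduces $J^*(t):=J(t)-\int_0^t B_0(u(s))X_s\,ds$, shows by a matrix-algebraic product-rule calculation that $\Lambda^u_1 J^*$ is a $(\mathbb{F}^{\mathbf{X}},P)$-martingale (the key identity being $\mbox{diag}[A_0X_s][D_0(u(s))X_s-\mathbf{1}]=B_0(u(s))X_s-A_0X_s$), deduces that $J^*$ is a $P^{(u)}$-martingale, and then recovers the semimartingale decomposition of $X$ by pre-multiplying $(I-\mbox{diag}[X_{t-}])\,dX_t$ by $(I-X_{t-}\mathbf{1}^{\top})$. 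Your route is more direct---no auxiliary counting process and no final algebraic trick---at the cost of an explicit state-by-state analysis of the jumps and their $P$-intensities; the paper's route keeps everything in matrix form and never conditions on $\{X_{t-}=e_i\}$, which is tidier when $N$ is large but requires the extra step of passing from $J^*$ back to $X$. Both arguments use the hypothesis $a_{ji}>0$ in exactly the same place, namely to make $d_{ji}=b_{ji}/a_{ji}$ well defined.
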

	%%\iffalse
	\begin{proof}
		Write $B_0(u(t)):=B(u(t))-\mbox{diag}(b_{11}(u_1(t)),\ldots,b_{NN}(u_N(t)))$. Write 
		\begin{align*}
			J^*(t)=J(t)-\int_0^tB_0(u(s))X_sds.
		\end{align*}
		The dynamics of $\{\Lambda_1(t)J^*(t)\}_{t\in\mc{T}}$ are then given by
		\begin{align*}
			\Lambda^u_1(t)J^*(t)
			&= \ds\sum_{0<s\leq t}\Lambda^u_1(s-)\Delta J(s)-\int_0^t\Lambda^u_1(s)B_0(u(s))X_{s}ds+\int_0^tJ^*(s-)d\Lambda^u_1(s) \\
			&\quad+\ds\sum_{0<s\leq t}\Lambda^u_1(s-)[D_0(u(s-))X_{s-}-\mathbf{1}]^{\top}\Delta J(s)\Delta J(s)^{\top}	\\
			&=\ds\ds\sum_{0<s\leq t}\Lambda^u_1(s-)\Delta J(s)-\int_0^t\Lambda^u_1(s)B_0(u(s))X_{s}ds+\int_0^tJ^*(s-)d\Lambda^u_1(s)\\
			&\quad+\ds\int_0^t\Lambda^u_1(s-)\mbox{diag}[\Delta J(s)][D_0(u(s-))X_{s-}-\mathbf{1}]	\\
			&=\ds\ds\sum_{0<s\leq t}\Lambda^u_1(s-)\Delta J(s)-\int_0^t\Lambda^u_1(s)B_0(u(s))X_{s}ds+\int_0^tJ^*(s-)d\Lambda^u_1(s)\\
			&\quad+\ds\int_0^t\Lambda^u_1(s-)\mbox{diag}[A_0X_{s-}][D_0(u(s))X_{s-}-\mathbf{1}]ds	\\
			&\quad+\ds\int_0^t\Lambda^u_1(s-)\mbox{diag}[\Delta\overline{J}(s)][D_0(u(s))X_{s-}-\mathbf{1}].	
		\end{align*}
		Since $\mbox{diag}[A_0X_s][D_0(u(s))X_s-\mathbf{1}]=B_0(u(s))X_s-A_0X_s$, then
		\begin{align*}
			\Lambda^u_1(t)J^*(t)&=\ds\int_0^t\Lambda^u_1(s-)d\overline{J}(s)+\int_0^tJ^*(s-)d\Lambda^u_1(s)\\
			&\quad+\ds\int_0^t\Lambda^u_1(s-)\mbox{diag}[d\overline{J}(s)][D_0(u(s-))X_{s-}-\mathbf{1}].
		\end{align*}
		Therefore, $\{\Lambda^u_1(t)J^*(t)\}_{t\in\mc{T}}$ is an $(\mathbb{F}^{\mathbf{X}},P)$-martingale. This implies that, under $P^{(u)}$, $\{J^*(t)\}_{t\in\mc{T}}$ is an $(\mathbb{F}^{\mathbf{X}},P^{(u)})$-martingale. Then,
		\begin{align*}
			(I-\mbox{diag}[X_{t-}])dX_t&=A_0X_tdt+(I-\mbox{diag}[X_{t-}])dM_t \\
			&=dJ^*(t)-d\overline{J}(t)+B_0(u(t))X_tdt+(I-\mbox{diag}[X_{t-}])dM_t\\
			&=dJ^*(t)+B_0(u(t))X_tdt.
		\end{align*}
		Pre-multiplying by $(I-X_{t-}\mathbf{1}^{\top})$ gives
		\begin{align*}
			(I-X_{t-}\mathbf{1}^{\top})(I-\mbox{diag}[X_{t-}])dX_t&=(I-X_{t-}\mathbf{1}^{\top})dJ^*(t)+(I-X_{t-}\mathbf{1}^{\top})B_0(u(t))X_{t-}dt.
		\end{align*}
		Then,
		\begin{align*}
			dX_t&=(I-X_{t-}\mathbf{1}^{\top})dJ^*(t)+B(u(t))X_tdt.
		\end{align*}
		Hence, the result.
	\end{proof}
	%\fi
	Consequently, under $P^{(u)}$, the chain $\mathbf{X}=\mathbf{X}^u$ has the following semimartingale dynamics
	\begin{align*}
		X^u_t=X^u_0+\int_0^tB(u(s))X^u_sds+M^{u}_t.
	\end{align*}
	Here, $\mathbf{M}^u:=\{M^u_t\}_{t\in\mc{T}}$ is an $\mathbb{R}^N$-valued, $(\mathbb{F}^{\mathbf{X}},P^{(u)})$-martingale with initial condition $M^{u}(0)=0$ and quadratic variation \eqref{quadchar}. 
	
	%%\iffalse
	\section{An Example of a Rate Matrix B(u)}\label{example of rate matrix B}
	
	In this section, we provide an example of a family of rate matrices $B(u(t))$. Suppose $u_i:\mc{T}\times\Omega\rightarrow\textsf{U}_i$ for each $i=1,\ldots,N$. Write $u(t)=(u_1(t),\ldots,u_N(t))^{\top}$ and define
	\[B(u(t)):=A\mathbf{diag}[u(t)].\]
	Then, for all $i,j=1\ldots,N$, with $i\neq j$, $b_{ji}(u_i(t))=u_i(t)a_{ji}\geq 0$, $t\in\mc{T}$. Furthermore,
	\[\sum_{j=1}^{N}b_{ji}(u_i(t))=u_i(t)\sum_{j=1}^{N}a_{ji}=0,\] which implies that $b_{ii}(u_i(t))\leq 0$ for each $i=1,\ldots,N$. This proves that $B(u(t))$ is indeed a family of rate matrices.
	
	Further, for each $t\in\mc{T}$, \[D(u(t)):=\left[\frac{b_{ji}(u_i(t))}{a_{ji}}\right]_{i,j=1,\ldots,N}=\left[{\begin{array}{cccc}
			u_1(t)&u_1(t)&\cdots&u_1(t)\\
			u_2(t)&u_2(t)&\cdots&u_2(t)\\
			\vdots   &\vdots   &\ddots&\vdots\\ 
			u_N(t)&u_N(t)&\cdots&u_N(t)\\
	\end{array}}\right]\]
	and
	\[D_0(u(t)):=D(u(t))-\mbox{diag}[u(t)].\]
	%\fi

	\section{Esscher Transform}\label{esscher}
	
	An earlier change of measure adopted in actuarial science \cite{gerbershiu:esscher} is the Esscher transform. It was first introduced by Esscher in \cite{esscher:esscher}. We extend this to our regime-switching dynamics. Write $\mathbb{F}^{\mathbf{W}}:=\{\mc{F}_t^{\mathbf{W}}\}_{t\in\mc{T}}$ for the $P$-augmentation of the natural filtration generated by $\mathbf{W}$. For each $t\in\mc{T}$, we define $\mc{G}_t:=\mc{F}_t^{\mathbf{X}}\lor \mc{F}_t^{\mathbf{W}}$, the minimal augmented $\sigma$-field generated by the two $\sigma$-fields $\mc{F}_t^{\mathbf{X}}$ and $\mc{F}_t^{\mathbf{W}}$. Write $\mathbb{G}:=\{\mc{G}_t\}_{t\in\mc{T}}$. Let $\theta_t:=\theta(t,X_t)$ be a regime switching Esscher parameter such that 
	\begin{align*}
		\theta_t=\langle\theta(t),X_t\rangle,
	\end{align*}
	where $\theta(t):=(\theta_1(t),\ldots,\theta_N(t))^{\top}\in\mathbb{R}^N$. Then, the regime switching Esscher transform on $\mc{G}_t$ with respect to $\{\theta_s \}_{s\in [0,t]}$ is given by:
	\begin{equation*}
		\Lambda_2(t):=\frac{\exp\left(\int_0^t\theta_s d\ol{Y}_s\right)}{E^P\left[\exp\left(\int_0^t\theta_s d\ol{Y}_s\right)\bigg| \mc{F}^{\mathbf{X}}_T\right]}, \quad t\in\mc{T}.
	\end{equation*}
	%Let $Z_t=\exp\left(\int_0^t\theta_s d\ol{Y}_s\right)$. 
	Using Ito's formula and taking conditional expectations given $\mc{F}^{\mathbf{X}}_T$ yields
	%%\iffalse 
	\begin{align*}
		Z_t&=Z_0+\int_0^t\left[\theta_s\left(\mu_s-\frac{1}{2}\sigma^2_s\right)+\frac{1}{2}\theta^2_s\sigma^2_s\right]Z_sds+\int_0^t\theta_s\sigma_sZ_sdW_s.
	\end{align*}
	Taking the conditional expectation of $Z_t$ given $\mc{F}^{\mathbf{X}}_T$ yields
	\begin{align*}
		E^P\left[Z_t|\mc{F}^{\mathbf{X}}_T\right]&=Z_0+\int_0^t\left[\theta_s\left(\mu_s-\frac{1}{2}\sigma^2_s\right)+\frac{1}{2}\theta^2_s\sigma^2_s\right]E^P[Z_s|\mc{F}^{\mathbf{X}}_T]ds\\
		&=Z_0\exp\left[\ds\int_0^t\theta_s\left(\mu_s-\frac{1}{2}\sigma_s^2\right)ds+\frac{1}{2}\int_0^t\theta^2_s\sigma^2_sds \right] .
	\end{align*}
	Thus, for $t\in\mc{T}$, the Radon-Nikodym derivative of the regime switching Esscher transform is given by
	%\fi
	\begin{align*}
		\Lambda_2(t)&=\frac{\exp\left[\ds\int_0^t\theta_s\left(\mu_s-\frac{1}{2}\sigma_s^2\right)ds+\int_0^t\theta_s\sigma_sdW_s\right]}{\exp\left[\ds\int_0^t \theta_s\left(\mu_s-\frac{1}{2}\sigma^2_s\right)ds+\frac{1}{2}\int_0^t\theta^2_s\sigma^2_sds \right]}\\
		&=\exp\left(\int_0^t\theta_s\sigma_s dW_s-\frac{1}{2}\int_0^t\theta^2_s\sigma^2_sds\right).
	\end{align*}
	Since $\theta_t$ and $\sigma_t$ are bounded, then $E^P\left[\exp\left(\frac{1}{2}\int_0^t|\theta_s\sigma_s|^2ds\right)\right]<\infty$. This implies that $\Lambda_2(t)$ is a $(\mathbb{G},P)$-martingale. Using the density $\Lambda_2$, we can implement a measure change for the continuous part of the price process in \eqref{prior change of measure}.
	
	Consider now the $\mathbb{G}$-adapted process $\Lambda^u:=\{\Lambda^u(t)\}_{t\in\mc{T}}$ defined by
	\begin{align*}
		\Lambda^u(t):=\Lambda^u_1(t)\cdot\Lambda_2(t), \quad t\in\mc{T}.
	\end{align*}
	Define a new probability measure $Q^u\sim P$ on $\mc{G}_T$:
	\begin{align*}
		\frac{dQ^u}{dP}\bigg|_{\mc{G}_T}:=\Lambda^u(T).
	\end{align*}
	
	As $\mathbf{X}$ and $\mathbf{W}$ are independent, using the standard Girsanov's Theorem for Brownian motion, the process $\mathbf{W}^{\theta}:=\{W^{\theta}_t\}_{t\in\mc{T}}$ defined by
	$W^{\theta}_t:=W_t-\int_0^t\theta_u\sigma_udu$ is a $(\mathbb{G},Q^u)$-standard Brownian motion. Furthermore, using Proposition \ref{girsanovchain}, the chain $\mathbf{X}$ remains a Markov chain with rate matrix $B(u(t))$ under $Q^u$.
	
	The dynamics under $Q^u$ of $S_t$ can then be rewritten as 
	\begin{align*}
		dS_t=(\mu_t+\theta_t\sigma_t)S_tdt+\sigma_tS_tdW^{\theta}_t +\ol{S}_t\langle\alpha,B(u(t))X^u_t\rangle dt+\ol{S}_{t}\langle\alpha,dM^{u}_t\rangle.
	\end{align*}
	
	By the fundamental theorem of asset pricing for semimartingales bounded from below, arbitrage opportunities do not exist if and only if there exists an equivalent local martingale measure under which the discounted price process is a martingale \cite{delbaenschachermayer:FTOAP}. Proposition \ref{emm} below gives a condition that allows the discounted price process to be a martingale. To establish this result, we first state the following lemma.
	
	\begin{mylm}\label{Sbar in terms of S}
		Define $\alpha^{-1}:=(\alpha_1^{-1},\alpha_2^{-1},\ldots,\alpha_N^{-1})^{\top}\in\mathbb{R}^N$. Then
		\begin{align*}
			\ol{S}_t\langle\alpha,B(u(t))X^u_t\rangle=S_t\left\langle\alpha,B(u(t))\mathbf{diag}(\alpha^{-1})X^u_t\right\rangle.
		\end{align*}
	\end{mylm}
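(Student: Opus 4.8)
The plan is to exploit the fact that the controlled chain $\mathbf{X}^u$ takes values only in the set of standard unit vectors $\mc{S}=\{e_1,\ldots,e_N\}$, so that pre-multiplying $X^u_t$ by a diagonal matrix merely rescales it by a single coordinate. First I would fix $t\in\mc{T}$ and argue pathwise on the event $\{X^u_t=e_i\}$: there $\mathbf{diag}(\alpha^{-1})X^u_t=\mathbf{diag}(\alpha^{-1})e_i=\alpha_i^{-1}e_i$, while $\langle\alpha,X^u_t\rangle=\langle\alpha,e_i\rangle=\alpha_i$, so that $\mathbf{diag}(\alpha^{-1})X^u_t=\langle\alpha,X^u_t\rangle^{-1}X^u_t$ on this event. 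Since $\alpha_i>0$ for all $i$ the scalar $\langle\alpha,X^u_t\rangle$ never vanishes, and as these events partition $\Omega$ for each fixed $t$, the identity $\mathbf{diag}(\alpha^{-1})X^u_t=\langle\alpha,X^u_t\rangle^{-1}X^u_t$ holds almost surely.

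Next I would substitute this into the right-hand side of the claimed equality and use bilinearity of $\langle\cdot,\cdot\rangle$ together with the definition $S_t=\ol{S}_t\langle\alpha,X^u_t\rangle$ introduced in Section \ref{ustart}:
\begin{align*}
S_t\left\langle\alpha,B(u(t))\mathbf{diag}(\alpha^{-1})X^u_t\right\rangle
&=S_t\langle\alpha,X^u_t\rangle^{-1}\langle\alpha,B(u(t))X^u_t\rangle\\
&=\ol{S}_t\langle\alpha,X^u_t\rangle\,\langle\alpha,X^u_t\rangle^{-1}\langle\alpha,B(u(t))X^u_t\rangle\\
&=\ol{S}_t\langle\alpha,B(u(t))X^u_t\rangle,
\end{align*}
which is precisely the asserted identity.

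Honestly, there is essentially no obstacle here: the whole content is the observation that a diagonal matrix acts on a canonical basis vector by scalar multiplication, combined with the elementary identity $\langle\alpha,e_i\rangle^{-1}=\alpha_i^{-1}$. The only points that require a word of care are (i) checking that $\langle\alpha,X^u_t\rangle>0$ almost surely, so that the reciprocal is legitimate — this is immediate from the standing assumption $\alpha_i>0$ for all $i$ — and (ii) noting that the argument is valid $\omega$ by $\omega$ for every fixed $t$, hence as an equality of processes.
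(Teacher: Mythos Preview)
Your proof is correct and follows essentially the same approach as the paper: both arguments fix $X^u_t=e_i$, use that $\mathbf{diag}(\alpha^{-1})e_i=\alpha_i^{-1}e_i$ together with $S_t=\ol{S}_t\langle\alpha,X^u_t\rangle$, and conclude by linearity of the inner product. The only cosmetic difference is that the paper expands the inner product into the sum $\sum_j\alpha_jb_{ji}\alpha_i^{-1}$ before repackaging it, whereas you go directly via the vector identity $\mathbf{diag}(\alpha^{-1})X^u_t=\langle\alpha,X^u_t\rangle^{-1}X^u_t$.
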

	%%\iffalse
	\begin{proof}Suppose $X^u_t=e_i$. Then
		\begin{align*}
			\ol{S}_t\langle\alpha,B(u(t))e_i\rangle&=S_t\langle\alpha^{-1},e_i\rangle\langle\alpha,B(u(t))e_i\rangle\\
			&=S_t\left(\alpha_i^{-1}\right)\left(\sum_{j=1}^N\alpha_jb_{ji}\right)\\
			&=S_t\left(\sum_{j=1}^N\alpha_jb_{ji}\alpha_i^{-1}\right)\\
			&=S_t\left\langle\alpha,B(u(t))\mathbf{diag}(\alpha^{-1})e_i\right\rangle,
		\end{align*}
		which proves the result.
	\end{proof}
	%\fi
	Using It\^{o}'s product rule and Lemma \ref{Sbar in terms of S} yields the following proposition.
	\begin{mypr}\label{emm}
		Define a risk-free interest rate by $r_t:=\langle r,X_t\rangle$, where $r=(r_1,\ldots,r_N)^{\top}\in\mathbb{R}^N$. For each $t\in\mc{T}$ and some $u\in\mc{U}$, let the discounted price process $\{\widetilde{S}_t \}_{t\in\mc{T}}$ be defined by
		\begin{align*}
			\widetilde{S}_t=e^{-\int_0^tr_udu}S_t.
		\end{align*}
		For $i=1,\ldots,N$, define
		\begin{align*}
			\theta^u_i(t)=\frac{1}{\sigma_i}\left(r_i-\mu_i-\sum_{j=1}^N\alpha_jb_{ji}(u_i(t))\alpha_i^{-1}\right).
		\end{align*}
		Then $\{\widetilde{S}_t \}_{t\in\mc{T}}$ is a $(\mathbb{G},Q^u)$-martingale if and only if\[\theta_t=\langle\theta^u(t),X_t^u\rangle, \]
		where $\theta^u(t)=(\theta^u_1(t),\ldots,\theta^u_N(t))^{\top}\in\mathbb{R}^N$.
	\end{mypr}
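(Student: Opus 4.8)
The plan is to compute the semimartingale decomposition of $\widetilde{S}$ under $Q^u$ and read off when its finite‑variation part vanishes. First I would recall the $Q^u$‑dynamics of $S_t$ displayed just above the lemma,
\begin{align*}
dS_t=(\mu_t+\theta_t\sigma_t)S_t\,dt+\sigma_tS_t\,dW^{\theta}_t+\ol{S}_t\langle\alpha,B(u(t))X^u_t\rangle\,dt+\ol{S}_t\langle\alpha,dM^{u}_t\rangle,
\end{align*}
and apply It\^{o}'s product rule to $\widetilde{S}_t=e^{-\int_0^t r_u du}S_t$. Since the discount factor is continuous and of finite variation with $d(e^{-\int_0^t r_u du})=-r_te^{-\int_0^t r_u du}\,dt$, no covariation term is produced, and using Lemma \ref{Sbar in terms of S} to replace $\ol{S}_t\langle\alpha,B(u(t))X^u_t\rangle$ by $S_t\langle\alpha,B(u(t))\mathbf{diag}(\alpha^{-1})X^u_t\rangle$ one obtains
\begin{align*}
d\widetilde{S}_t=\widetilde{S}_t\Big[-r_t+\mu_t+\theta_t\sigma_t+\langle\alpha,B(u(t))\mathbf{diag}(\alpha^{-1})X^u_t\rangle\Big]dt+\widetilde{S}_t\sigma_t\,dW^{\theta}_t+e^{-\int_0^t r_u du}\ol{S}_t\langle\alpha,dM^{u}_t\rangle .
\end{align*}

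Next I would argue that this $dt$‑term is exactly the predictable finite‑variation part of the special semimartingale $\widetilde{S}$. Indeed $\mathbf{W}^{\theta}$ is a $(\mathbb{G},Q^u)$‑Brownian motion and $\mathbf{M}^u$ is a $(\mathbb{G},Q^u)$‑martingale (the remark following Proposition \ref{girsanovchain}); the integrands $\widetilde{S}_t\sigma_t$ and $e^{-\int_0^t r_u du}\ol{S}_t\alpha$ are predictable and locally bounded, so both stochastic integrals are $Q^u$‑local martingales, and as $\mathbf{W}^{\theta}$ is continuous while $\mathbf{M}^u$ is purely discontinuous there is no cross‑variation producing extra drift. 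By uniqueness of the canonical decomposition, $\widetilde{S}$ is a $(\mathbb{G},Q^u)$‑local martingale if and only if the bracketed coefficient vanishes $dt\otimes dQ^u$‑a.e.; since $\widetilde{S}_t>0$ this is equivalent to
\begin{align*}
\theta_t\sigma_t=r_t-\mu_t-\langle\alpha,B(u(t))\mathbf{diag}(\alpha^{-1})X^u_t\rangle\qquad dt\otimes dQ^u\text{-a.e.}
\end{align*}

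Finally I would identify this condition regime by regime. On $\{X^u_t=e_i\}$ one has $\sigma_t=\sigma_i$, $r_t=r_i$, $\mu_t=\mu_i$ and $\langle\alpha,B(u(t))\mathbf{diag}(\alpha^{-1})e_i\rangle=\alpha_i^{-1}\sum_{j=1}^N\alpha_jb_{ji}(u_i(t))$, so dividing by $\sigma_i>0$ the displayed identity reads $\theta_t=\theta^u_i(t)$; letting $i$ range over $1,\ldots,N$ this is precisely $\theta_t=\langle\theta^u(t),X^u_t\rangle$, which proves the ``only if'' direction. For the converse one substitutes this $\theta$ back to see the drift is identically zero, hence $\widetilde{S}$ is a $Q^u$‑local martingale, and then upgrades it to a genuine martingale: the Brownian part is a Dol\'eans exponential with bounded volatility ($\sigma$ and $\theta$ are bounded), the jump part lives on a finite‑state chain with bounded jump sizes, and $\ol{S}$ has moments of all orders on $\mc{T}$, so uniform integrability on $[0,T]$ holds and the localizing sequence can be removed. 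I expect the genuine work to be exactly this last integrability upgrade, together with the care needed to pass from ``zero drift $dt\otimes dQ^u$‑a.e.'' to the pointwise identification of the Esscher parameter on each regime; the remainder is bookkeeping with It\^{o}'s product rule and Lemma \ref{Sbar in terms of S}.
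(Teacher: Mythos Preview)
Your proposal is correct and follows essentially the same route as the paper: apply It\^{o}'s product rule to $\widetilde{S}_t=e^{-\int_0^t r_u du}S_t$, invoke Lemma~\ref{Sbar in terms of S} to rewrite the jump compensator in terms of $S_t$, and conclude that the martingale property holds if and only if the finite-variation term vanishes. Your write-up is in fact more careful than the paper's, which stops at ``the finite variation term is indistinguishable from the zero process'' without the regime-by-regime identification or the local-to-true martingale upgrade you sketch.
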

	%%\iffalse
	\begin{proof}
		Using Ito's product rule and Lemma \ref{Sbar in terms of S},
		\begin{align*}
			d\widetilde{S}_t&= e^{-\int_0^tr_udu}\bigg[-r_tS_tdt+(\mu_t+\theta_t\sigma_t)S_tdt+\sigma_tS_tdW^{\theta}_t+\ol{S}_t\langle\alpha,B(u(t))X^u_t\rangle dt+\ol{S}_t\langle\alpha,dM^{u}_t\rangle\bigg] \\		
			&=\bigg[\mu_t+\theta_t\sigma_t-r_t+\left\langle\alpha,B(u(t))\mathbf{diag}(\alpha^{-1})X^u_t\right\rangle\bigg]\widetilde{S}_tdt+\sigma_t\widetilde{S}_tdW^{\theta}_t+\ol{S}_te^{-\int_0^tr_udu}\langle\alpha,dM^{u}_t\rangle.
		\end{align*}
		This is a $(\mathbb{G},Q^u)$-martingale if and only if the finite variation term is indistinguishable from the zero process.
	\end{proof}
	%\fi
	With the appropriate choice of $\theta_t$ such that $\widetilde{S}_t$ is a $(\mathbb{G},Q^u)$-martingale, the dynamics under $Q^u$ of $S_t$ can be written as:
	\begin{equation*}
		dS_t=r_tS_tdt+\sigma_tS_tdW^{\theta}_t +\ol{S}_t\langle\alpha,dM^{u}_t\rangle
	\end{equation*}
	or
	\begin{align}\label{sbar2 chapter 3}
		dS_t&=r_tS_tdt+\sigma_tS_tdW^{\theta}_t+S_{t-}\langle\alpha^{-1},X^u_{t-}\rangle\langle\alpha,dX^u_t\rangle-S_t\left\langle\alpha,B(u(t))\mathbf{diag}(\alpha^{-1})X^u_t\right\rangle dt.
	\end{align}

	\section{Pricing European Call Options via Homotopy}\label{uend}
	In this section, we shall determine the price of a European call option. We first show that the price satisfies a system of partial differential equations (PDEs). We then solve these PDEs using the homotopy analysis method. The discussion follows closely with that of \cite{elliottchansiu:homotopyoptionpricing}. The results will be presented without proofs.
	
	Consider a vanilla European call option with underlying $S$, strike price $K$, and maturity at time $T$. Given $S_t=s$ and $X_t^u=x$, then the call price at time $t$ is given by
	\begin{align*}
		C(t,s,x):=E^{Q^u}_{t,s,x}\left[e^{-\int_0^tr_udu}(S_T-K)^{+}\right],
	\end{align*}
	where $Q^u$ is a risk-neutral measure and $E^{\cdot}_{t,s,x}$ is the conditional expectation given $S_t=s$ and $X_t=x$. Since $S_t:=\ol{S}_t\langle\alpha,X_t^u \rangle$, then
	\begin{align*}
		C(t,s,x)&=E^{Q^u}_{t,s,x}\left[e^{-\int_t^Tr_{u}du}(S_T-K)^{+}\right]\\ &=E^{Q^u}_{t,s,x}\left[e^{-\int_t^Tr_{u}du}\left(\ol{S}_T\langle\alpha,X^u_T\rangle-K\right)^{+}\right]\\
		&=:\ol{C}(t,\ol{s},x)
	\end{align*} 
	
	\begin{mypr}
		For each $i=1,\ldots,N$, write $\ol{C}_i:=\ol{C}_i(t,\ol{s})=\ol{C}(t,\ol{s},e_i)$ and $\mathbf{\ol{C}}:=(\ol{C}_1,\ldots,\ol{C}_N)^{\top}\in\mathbb{R}^N$. Let $\mc{O}:=(0,T)\times(0,\infty)$ be an open set. Suppose that for each $i=1,\ldots,N$, $\ol{C}_i\in {\mc{C}}^{1,2}(\mc{O})$, where ${\mc{C}}^{1,2}(\mc{O})$ is the space of functions $f(t,\ol{s})$ such that $f$ is continuously differentiable in $t$ and twice continuously differentiable in $\ol{s}$. Then, for each $i=1,\ldots,N$ and each $(t,\ol{s})\in\mc{O}$, $\ol{C}_i$ satisfies the following system of PDEs:
		\begin{align}\label{pde for cbar}
			0&=\frac{\partial \ol{C}_i}{\partial t}+r_i\ol{s}\frac{\partial \ol{C}_i}{\partial \ol{s}}+\frac{1}{2}\sigma_i^2\ol{s}^2\frac{\partial^2 \ol{C}_i}{\partial \ol{s}^2}-r_i\ol{C}_i+\langle \mathbf{\ol{C}},B(u(t))e_i\rangle,
		\end{align} 
		with terminal condition $\ol{C}_i(T,\ol{s}):=\ol{C}(T,\ol{s},e_i)=(\ol{s}-K)^+$.
	\end{mypr}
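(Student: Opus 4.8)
The plan is a Feynman--Kac / Dynkin argument: attach to $\ol C$ a discounted value process, show it is a $(\mathbb{G},Q^u)$-martingale, apply It\^o's formula, and force the finite-variation part to vanish; this yields \eqref{pde for cbar} pointwise, and the terminal condition comes straight from the definition of $\ol C$.

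First I would record the joint $Q^u$-dynamics of $(\ol S_t,X^u_t)$. Under $P$ the continuous process $\ol S$ satisfies $d\ol S_t=\mu_t\ol S_tdt+\sigma_t\ol S_tdW_t$; since $W^{\theta}$ is a $(\mathbb{G},Q^u)$-Brownian motion (Section~\ref{esscher}), $\ol S$ stays a regime-switching geometric Brownian motion under $Q^u$, and a short computation with the explicit Esscher parameter $\theta^u$ of Proposition~\ref{emm} -- equivalently, the fact that $\{\widetilde S_t\}$ is a $(\mathbb{G},Q^u)$-martingale -- pins down its $Q^u$-drift, so that $d\ol S_t=r_t\ol S_tdt+\sigma_t\ol S_tdW^{\theta}_t$. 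By Proposition~\ref{girsanovchain}, $\mathbf{X}=\mathbf{X}^u$ is under $Q^u$ a Markov chain with rate matrix $B(u(t))$ and semimartingale representation $X^u_t=X^u_0+\int_0^tB(u(s))X^u_sds+M^u_t$; as $\mathbf{W}$ and $\mathbf{X}$ are independent and $\ol S$ is continuous, the covariation $[\ol S,X^u]$ vanishes identically. Thus $(\ol S_t,X^u_t)$ is a (time-inhomogeneous) Markov process under $Q^u$.

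Next, set $V_t:=e^{-\int_0^tr_udu}\ol C(t,\ol S_t,X^u_t)$. By the Markov and tower properties, $V_t=E^{Q^u}[e^{-\int_0^Tr_udu}(\ol S_T\langle\alpha,X^u_T\rangle-K)^+\mid\mc{G}_t]$, so $V$ is a $(\mathbb{G},Q^u)$-martingale. Since $X^u_t\in\{e_1,\dots,e_N\}$ we may write $\ol C(t,\ol S_t,X^u_t)=\langle\mathbf{\ol{C}}(t,\ol S_t),X^u_t\rangle$, which is linear in the chain variable, so It\^o's formula -- legitimate by the assumed $\mc{C}^{1,2}(\mc{O})$-regularity of each $\ol C_i$ -- generates only first-order chain terms: on $\{X^u_{t-}=e_i\}$ the continuous part contributes the drift $e^{-\int_0^tr_udu}\big(\partial_t\ol C_i+r_i\ol S_t\partial_{\ol s}\ol C_i+\tfrac12\sigma_i^2\ol S_t^2\partial_{\ol s\ol s}\ol C_i-r_i\ol C_i\big)dt$ together with the local-martingale increment $e^{-\int_0^tr_udu}\sigma_i\ol S_t\partial_{\ol s}\ol C_i\,dW^{\theta}_t$, while the chain part $e^{-\int_0^tr_udu}\langle\mathbf{\ol{C}}(t,\ol S_t),dX^u_t\rangle$ splits, via the semimartingale representation of $X^u$, into the drift $e^{-\int_0^tr_udu}\langle\mathbf{\ol{C}},B(u(t))e_i\rangle\,dt$ and the local martingale $e^{-\int_0^tr_udu}\langle\mathbf{\ol{C}},dM^u_t\rangle$. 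Collecting terms exhibits $V$ as a continuous finite-variation process plus a local martingale; as $V$ is a true martingale, uniqueness of the semimartingale decomposition forces the finite-variation part to be indistinguishable from $0$, i.e. for a.e. $t$, $Q^u$-a.s., $\partial_t\ol C_i+r_i\ol S_t\partial_{\ol s}\ol C_i+\tfrac12\sigma_i^2\ol S_t^2\partial_{\ol s\ol s}\ol C_i-r_i\ol C_i+\langle\mathbf{\ol{C}},B(u(t))e_i\rangle=0$ on $\{X^u_t=e_i\}$. Since under $Q^u$ the chain visits each state with positive probability, the law of $\ol S_t$ has full support in $(0,\infty)$, and all functions involved are continuous on $\mc{O}$, this identity upgrades to \eqref{pde for cbar} for every $i$ and every $(t,\ol s)\in\mc{O}$; the terminal condition is obtained by evaluating the defining conditional expectation at $t=T$.

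The genuinely delicate points are analytic rather than algebraic. First, one must verify that $V$ is a \emph{true} $(\mathbb{G},Q^u)$-martingale, not merely a local one, so that the decomposition argument applies; this rests on boundedness of $r$, $\sigma$, $\theta^u$ and of the control values, together with integrability of the payoff, and on the local-martingale nature of the $dW^{\theta}$- and $dM^u$-integrals after localization. Second, the chain contribution must be compensated through $X^u=X^u_0+\int_0^tB(u(s))X^u_sds+M^u_t$ rather than via a classical second-order It\^o expansion for the pure-jump chain, and one must check that no cross term $[\ol S,X^u]$ survives -- this is where continuity of $\ol S$ is used. Finally, promoting the ``a.e. $t$, $Q^u$-a.s.'' identity to a pointwise PDE on the open set $\mc{O}$ requires the support/continuity argument indicated above, and it is precisely here that the hypothesis $\ol C_i\in\mc{C}^{1,2}(\mc{O})$ is essential.
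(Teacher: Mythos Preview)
Your proposal is correct and follows essentially the same Feynman--Kac martingale argument as the paper: form the discounted process $e^{-\int_0^tr_udu}\ol{C}(t,\ol S_t,X^u_t)$, use the Markov property to identify it as a $(\mathbb{G},Q^u)$-martingale, expand via It\^o using $\ol C(t,\ol s,x)=\langle\mathbf{\ol C},x\rangle$ and the semimartingale decomposition of $X^u$, and set the finite-variation part to zero. Your extra remarks on true-versus-local martingality and on the support/continuity argument needed to pass from an a.e.\ identity to the pointwise PDE go slightly beyond what the paper writes out, but the core strategy is identical.
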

	%%\iffalse
	\begin{proof}
		
		For each $i=1,\ldots,N$, write $\ol{C}_i:=\ol{C}_i(t,\ol{s})=\ol{C}(t,\ol{s},e_i)$ and $\mathbf{\ol{C}}:=(\ol{C}_1,\ldots,\ol{C}_N)^{\top}\in\mathbb{R}^N$. Then for $x\in\mc{S}$, $\ol{C}(t,\ol{s},x)=\langle\mathbf{\ol{C}},x\rangle$. For $i=1,\ldots,N$, by It\^{o}'s formula,
		\begin{align*}
			d\ol{C}_i&=\frac{\partial \ol{C}_i}{\partial t}dt+\frac{\partial \ol{C}_i}{\partial \ol{s}}d\ol{s}+\frac{1}{2}\sigma_t^2\ol{s}^2\frac{\partial^2 \ol{C}_i}{\partial \ol{s}^2}dt\\
			&=\frac{\partial \ol{C}_i}{\partial t}dt+\frac{\partial \ol{C}_i}{\partial \ol{s}}(r_i\ol{s}dt+\sigma_i\ol{s}dW_t^{\theta})+\frac{1}{2}\sigma_i^2\ol{s}^2\frac{\partial^2 \ol{C}_i}{\partial \ol{s}^2}dt\\
			&=\left(\frac{\partial \ol{C}_i}{\partial t}+r_i\ol{s}\frac{\partial \ol{C}_i}{\partial \ol{s}}+\frac{1}{2}\sigma_i^2\ol{s}^2\frac{\partial^2 \ol{C}_i}{\partial \ol{s}^2}\right)dt+\sigma_i\ol{s}\frac{\partial \ol{C}_i}{\partial \ol{s}}dW_t^{\theta}  
		\end{align*}
		Write $d\mathbf{\ol{C}}(t,\ol{s}):=(d\ol{C}_1,\ldots,d\ol{C}_N)^{\top}\in\mathbb{R}^N$. Then,
		\begin{align}\label{dCbar}
			d\ol{C}(t,\ol{s},X^u_t)=\langle \mathbf{\ol{C}},dX_t^u\rangle+\langle d\mathbf{\ol{C}}(t,\ol{s}),X_t^u\rangle.
		\end{align}
		
		Since $\{(\ol{S}_t,X_t^u)\}_{t\in\mc{T}}$ is jointly Markovian with respect to $\mathbb{G}$ under $Q^u$, then for each $t\in\mc{T}$, the discounted price of a European call option given $\ol{S}_t=\ol{s}$ and $X_t^u=x$ is given by
		\begin{align*}
			\widetilde{C}(t,\ol{s},x)&=e^{-\int_0^tr_{u}du}\ol{C}(t,\ol{s},x)\\
			&=e^{-\int_0^tr_{u}du}E^{Q^u}\left[e^{-\int_t^Tr_{u}du}\left(\ol{S}_T\langle\alpha,X^u_T\rangle-K\right)^{+}|\ol{S}_t=\ol{s},X^u_t=x\right]\\
			&=E^{Q^u}\left[e^{-\int_0^Tr_{u}du}\left(\ol{S}_T\langle\alpha,X^u_T\rangle-K\right)^{+}|\mc{G}_t\right].
		\end{align*}
		Hence, $\{\widetilde{C}(t,\ol{s},X^u_t)\}_{t\in\mc{T}}$ is a $(\mathbb{G},Q^u)$-martingale.
		
		For each $i=1,\ldots,N$, write $\widetilde{C}_i:=\widetilde{C}_i(t,\ol{s})=\widetilde{C}(t,\ol{s},e_i)$ and $\mathbf{\widetilde{C}}:=(\widetilde{C}_1,\ldots,\widetilde{C}_N)^{\top}\in\mathbb{R}^N$. Then for $x\in\mc{S}$, $\widetilde{C}(t,\ol{s},x)=\langle\mathbf{\widetilde{C}},x\rangle$.
		From \eqref{dCbar},
		\begin{align*}
			d\widetilde{C}(t,\ol{s},X^u_t)&=e^{-\int_0^tr_{u}du}\bigg[-r_t\langle\mathbf{\ol{C}},X^u_t\rangle dt+d\ol{C}(t,\ol{s},X^u_t)\bigg]\\
			&=e^{-\int_0^tr_{u}du}\bigg[-r_t\langle\mathbf{\ol{C}},X^u_t\rangle dt+\langle \mathbf{\ol{C}},dX_t^u\rangle+\langle d\mathbf{\ol{C}}(t,\ol{s}),X_t^u\rangle\bigg]\\
			&=e^{-\int_0^tr_{u}du}\bigg[-r_t\langle\mathbf{\ol{C}},X^u_t\rangle dt+\langle \mathbf{\ol{C}},B(u(t))X_t^udt+dM_t^u\rangle+\langle d\mathbf{\ol{C}}(t,\ol{s}),X_t^u\rangle\bigg].
		\end{align*}
		Then, for each $i=1,\ldots,N$,
		\begin{align*}
			d\widetilde{C}(t,\ol{s},e_i)&=e^{-\int_0^tr_{u}du}\bigg[-r_i\ol{C}_i dt+\langle \mathbf{\ol{C}},B(u(t))e_i\rangle dt+\langle \mathbf{\ol{C}},dM_t^u\rangle\\
			&\quad+\left(\frac{\partial \ol{C}_i}{\partial t}+r_i\ol{s}\frac{\partial \ol{C}_i}{\partial \ol{s}}+\frac{1}{2}\sigma_i^2\ol{s}^2\frac{\partial^2 \ol{C}_i}{\partial \ol{s}^2}\right)dt+\frac{\partial \ol{C}_i}{\partial \ol{s}}\sigma_i\ol{s}dW_t^{\theta}  \bigg]\\
			&=e^{-\int_0^tr_{u}du}\bigg[\left(\frac{\partial \ol{C}_i}{\partial t}+r_i\ol{s}\frac{\partial \ol{C}_i}{\partial \ol{s}}+\frac{1}{2}\sigma_i^2\ol{s}^2\frac{\partial^2 \ol{C}_i}{\partial \ol{s}^2}-r_i\ol{C}_i+\langle \mathbf{\ol{C}},B(u(t))e_i\rangle\right)dt\\
			&\quad+\langle \mathbf{\ol{C}},dM_t^u\rangle+\sigma_i\ol{s}\frac{\partial \ol{C}_i}{\partial \ol{s}}dW_t^{\theta}  \bigg].
		\end{align*}
		Since $\{\widetilde{C}(t,\ol{s},X^u_t)\}_{t\in\mc{T}}$ is a $(\mathbb{G},Q^u)$-martingale, the finite variation term should be identical to the zero process. The result then follows.
	\end{proof}
	%\fi
	Since $S_t:=\ol{S}_t\langle\alpha,X_t^u \rangle$ and $C(t,s,x)=\ol{C}(t,\ol{s},x)$, then the following corollary holds.
	
	\begin{mycor}
		Write ${C}_i:={C}_i(t,{s})={C}(t,{s},e_i)$ for each $i=1,\ldots,N$ and the vector $\mathbf{{C}}:=({C}_1,\ldots,{C}_N)^{\top}\in\mathbb{R}^N$. Then, for each $i=1,\ldots,N$ ${C}_i$ satisfies the following system of PDEs:
		\begin{align}\label{C as pde}
			0&=\frac{\partial {C}_i}{\partial t}+r_i{s}\frac{\partial {C}_i}{\partial {s}}+\frac{1}{2}\sigma_i^2{s}^2\frac{\partial^2 {C}_i}{\partial {s}^2}-r_i{C}_i+\langle \mathbf{{C}},B(u(t))e_i\rangle,
		\end{align} 
		with terminal condition ${C}_i(T,{s}):={C}(T,{s},e_i)=({s}-K)^+$.
	\end{mycor}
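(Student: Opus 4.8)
The plan is to read off the system \eqref{C as pde} for $\mathbf{C}$ directly from the system \eqref{pde for cbar} for $\mathbf{\ol{C}}$ via the purely deterministic change of variable induced by $S_t=\ol{S}_t\langle\alpha,X_t^u\rangle$; no fresh stochastic analysis is required. First I would fix $i\in\{1,\ldots,N\}$ and record the pointwise identity $C_i(t,s)=\ol{C}_i(t,s/\alpha_i)$, which is nothing but $C(t,s,e_i)=\ol{C}(t,\ol{s},e_i)$ written with $s=\langle\alpha,e_i\rangle\ol{s}=\alpha_i\ol{s}$. Since each $\ol{C}_i\in\mc{C}^{1,2}(\mc{O})$ by the preceding Proposition, so is $C_i$, and the chain rule gives, with the right-hand sides evaluated at $(t,s/\alpha_i)$,
\[
\frac{\partial C_i}{\partial t}=\frac{\partial \ol{C}_i}{\partial t},\qquad
\frac{\partial C_i}{\partial s}=\frac{1}{\alpha_i}\,\frac{\partial \ol{C}_i}{\partial \ol{s}},\qquad
\frac{\partial^2 C_i}{\partial s^2}=\frac{1}{\alpha_i^2}\,\frac{\partial^2 \ol{C}_i}{\partial \ol{s}^2}.
\]

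Next I would substitute $\ol{s}=s/\alpha_i$ into \eqref{pde for cbar} and rewrite each term through $C_i$. The drift term becomes $r_i\ol{s}\,\partial_{\ol{s}}\ol{C}_i=r_i(s/\alpha_i)\bigl(\alpha_i\,\partial_s C_i\bigr)=r_i s\,\partial_s C_i$ and the diffusion term becomes $\tfrac12\sigma_i^2\ol{s}^2\,\partial_{\ol{s}\ol{s}}\ol{C}_i=\tfrac12\sigma_i^2(s/\alpha_i)^2\bigl(\alpha_i^2\,\partial_{ss}C_i\bigr)=\tfrac12\sigma_i^2 s^2\,\partial_{ss}C_i$, so the factors of $\alpha_i$ cancel; meanwhile $\partial_t\ol{C}_i$ and $-r_i\ol{C}_i$ translate verbatim to $\partial_t C_i$ and $-r_i C_i$, and the terminal condition transforms under the same substitution into $C_i(T,s)=(s-K)^+$ (the maturity payoff $(\ol{S}_T\langle\alpha,X_T^u\rangle-K)^+$ being $(\alpha_i\ol{s}-K)^+$ in state $e_i$, which equals $(s-K)^+$). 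Collecting the five terms then produces \eqref{C as pde}.

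The step I expect to be the actual obstacle is the regime-coupling term $\langle\mathbf{\ol{C}},B(u(t))e_i\rangle=\sum_j b_{ji}(u_i(t))\,\ol{C}_j(t,s/\alpha_i)$: here every $\ol{C}_j$ is evaluated at the common point $\ol{s}=s/\alpha_i$, whereas $\ol{C}_j(t,\cdot)=C_j(t,\alpha_j\,\cdot)$, so $\ol{C}_j(t,s/\alpha_i)=C_j(t,\alpha_j s/\alpha_i)$ — the value of $C_j$ at the price level attained just after a transition $e_i\to e_j$, which by \eqref{sbar2 chapter 3} rescales $S_{t-}$ by $\alpha_j/\alpha_i$. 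Landing on \eqref{C as pde} as written therefore requires interpreting the components of $\mathbf{C}$ in that inner product at these post-jump arguments. If one prefers to avoid this bookkeeping, the alternative — which I would actually favour for transparency — is to repeat the martingale argument of the preceding Proposition directly for $\widetilde{C}(t,S_t,X_t^u):=e^{-\int_0^tr_u du}C(t,S_t,X_t^u)$: apply It\^o's product rule under the $Q^u$-dynamics \eqref{sbar2 chapter 3}, use that the jumps of $\mathbf{X}^u$ out of $e_i$ carry compensator intensities $b_{ji}(u_i(t))$ and send $S_{t-}$ to $\tfrac{\alpha_j}{\alpha_i}S_{t-}$ (so the jump compensator contributes $\sum_j b_{ji}(u_i(t))\,C_j(t,\tfrac{\alpha_j}{\alpha_i}s)$ after using $\sum_j b_{ji}(u_i(t))=0$), and conclude by forcing the finite-variation part of the resulting $(\mathbb{G},Q^u)$-martingale to vanish.
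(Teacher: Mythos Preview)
Your approach is essentially the paper's: both derive \eqref{C as pde} from \eqref{pde for cbar} by the deterministic change of variable $s=\alpha_i\ol{s}$ and the chain rule. The paper computes $\partial_{\ol{s}}\ol{C}_i=\langle\alpha,x\rangle\,\partial_s C_i$ and $\partial_{\ol{s}\ol{s}}\ol{C}_i=\langle\alpha,x\rangle^2\,\partial_{ss}C_i$, then substitutes $\ol{s}=s\langle\alpha^{-1},x\rangle$ so that the factors $\langle\alpha^{-1},x\rangle\langle\alpha,x\rangle$ cancel in the drift and diffusion terms---exactly your computation, written from the other direction.

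Where you go further than the paper is on the coupling term. The paper simply writes $\langle\mathbf{\ol{C}},B(u(t))e_i\rangle\mapsto\langle\mathbf{C},B(u(t))e_i\rangle$ without comment, whereas you correctly point out that $\ol{C}_j(t,s/\alpha_i)=C_j(t,\alpha_j s/\alpha_i)$, so the $j$th component of $\mathbf{C}$ in that inner product is really the post-jump value $C_j(t,\alpha_j s/\alpha_i)$ rather than $C_j(t,s)$. Your reading is the accurate one: the paper's proof suppresses this bookkeeping, and the stated form of \eqref{C as pde} should be understood with that convention (or, equivalently, the coupling term should be written as $\sum_j b_{ji}(u_i(t))\,C_j(t,\alpha_j s/\alpha_i)$). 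Your alternative route---applying It\^o directly to $e^{-\int_0^t r_u\,du}C(t,S_t,X_t^u)$ under the $Q^u$-dynamics \eqref{sbar2 chapter 3} and setting the finite-variation part to zero---would make this jump structure explicit and is arguably cleaner, though it is not what the paper does.
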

	%%\iffalse
	\begin{proof}
		Rewriting ${\partial \ol{C}_i}/{\partial \ol{s}}$ and ${\partial^2 \ol{C}_i}/{\partial \ol{s}^2}$ yields:
		\begin{align*}
			\frac{\partial \ol{C}_i}{\partial \ol{s}}=\frac{\partial \ol{C}_i}{\partial {s}}\cdot\frac{\partial s}{\partial \ol{s}}=\frac{\partial}{\partial {s}}\ol{C}(t,\ol{s})\cdot\frac{\partial}{\partial \ol{s}}\ol{s}\langle\alpha,x\rangle=\frac{\partial {C}_i}{\partial {s}}\cdot\langle\alpha,x\rangle
		\end{align*}
		and
		\begin{align*}
			\frac{\partial^2 \ol{C}_i}{\partial \ol{s}^2}=\frac{\partial}{\partial \ol{s}}\left[\frac{\partial \ol{C}_i}{\partial \ol{s}}\right]=\frac{\partial}{\partial \ol{s}}\left[\frac{\partial {C}_i}{\partial {s}}\cdot\langle\alpha,x\rangle\right]=\langle\alpha,x\rangle\frac{\partial}{\partial \ol{s}}\left[\frac{\partial {C}_i}{\partial {s}}\right]=\langle\alpha,x\rangle\frac{\partial}{\partial {s}}\left[\frac{\partial {C}_i}{\partial \ol{s}}\right]=\langle\alpha,x\rangle^2\frac{\partial^2{C}_i}{\partial {s}^2}.
		\end{align*}
		Hence, \eqref{pde for cbar} can be expressed as
		\begin{align*}
			\frac{\partial {C}_i}{\partial t}+r_i{s}\langle\alpha^{-1},x\rangle\frac{\partial {C}_i}{\partial {s}}\cdot\langle\alpha,x\rangle+\frac{1}{2}\sigma_i^2{s}^2\langle\alpha^{-1},x\rangle^2\langle\alpha,x\rangle^2\frac{\partial^2{C}_i}{\partial {s}^2}-r_i{C}_i+\langle \mathbf{{C}},B(u(t))e_i\rangle=0,
		\end{align*}
		which yields the result.
	\end{proof}
	%\fi
	We can solve for the above PDEs using the homotopy analysis method (HAM), first introduced in \cite{liao:homotopyanalysismethod}, which expresses the price of a European call option as an infinite series. The discussion below follows closely with that of \cite{elliottchansiu:homotopyoptionpricing, parkkim:homotopyoptionpricing}.
	
	Using the transformation $x=\log(s)$ and writing
	\begin{align*}
		\mc{L}_i:=\frac{\partial}{\partial t}+r_i\left(\frac{\partial}{\partial x}-1\right)+\frac{1}{2}\sigma_i^2\left(\frac{\partial^2}{\partial x^2}-\frac{\partial}{\partial x}\right),
	\end{align*} 
	it can be shown that
	%\iffalse
	\begin{align*}
		\frac{\partial {C}_i}{\partial s}(t,s)%=\frac{\partial {C}_i}{\partial {s}}(t,x)=\frac{\partial {C}_i}{\partial {x}}(t,x)\cdot\frac{\partial x}{\partial s}
		=\frac{\partial {C}_i}{\partial {x}}(t,x)\cdot\frac{1}{s}
	\end{align*}
	and
	\begin{align*}
		\frac{\partial^2 {C}_i}{\partial s^2}(t,s)%=\frac{\partial}{\partial s}\left[\frac{\partial C_i}{\partial x}(t,x)\cdot\frac{1}{s}\right]
		=\frac{1}{s^2}\cdot \frac{\partial^2 C_i}{\partial x^2}(t,x)-\frac{1}{s^2}\cdot\frac{\partial C_i}{\partial x}(t,x).
	\end{align*}
	%\fi
	\eqref{C as pde} can be rewritten as
	%\iffalse
	\begin{align}\label{C as pde in x}
		0=\frac{\partial {C}_i}{\partial t}(t,x)+r_i\frac{\partial {C}_i}{\partial {x}}(t,x)+\frac{1}{2}\sigma_i^2\left[\frac{\partial^2 {C}_i}{\partial {x}^2}(t,x)-\frac{\partial C_i}{\partial x^2}(t,x)\right]-r_i{C}_i(t,x)+\langle \mathbf{{C}}(t,x),B(u(t))e_i\rangle,
	\end{align} 
	with terminal condition ${C}_i(T,x):={C}(T,{x},e_i)=(e^x-K)^+$. Write
	\begin{align*}
		\mc{L}_i:=\frac{\partial}{\partial t}+r_i\left(\frac{\partial}{\partial x}-1\right)+\frac{1}{2}\sigma_i^2\left(\frac{\partial^2}{\partial x^2}-\frac{\partial}{\partial x}\right).
	\end{align*}
	Then, \eqref{C as pde in x} can be simplified in terms of notation as
	%\fi
	\begin{align*}
		\mc{L}_iC_i(t,x)+\langle\mathbf{{C}}(t,x),B(u(t))e_i\rangle=0.
	\end{align*}
	We then have the following result.
	
	\begin{myth}
		For each $i=1,\ldots,N$, $C_i(t,x)$ can be represented as
		\begin{align*}
			C_i(t,x)=\sum_{m=0}^{\infty}\frac{\hat{C}^m_i(t,x)}{m!},
		\end{align*}
		given the assumption that
		\begin{align*}
			\hat{C}_i^0(t,x)=e^xN(d_{1,i})-Ke^{-r_i(T-t)}N(d_{2,i}),
		\end{align*}
		where
		\begin{align*}
			d_{1,i}=\frac{\log\left(\frac{e^x}{K}\right)+\left(r_i+\frac{1}{2}\sigma_i^2\right)(T-t)}{\sigma_i\sqrt{T-t}},
		\end{align*}
		$d_{2,i}=d_{1,i}-\sigma_i\sqrt{T-t}$, and $N(\cdot)$ is the cumulative distribution function of a standard normal random variable. Write $\tau:=T-t$ and $\beta_i:=\frac{2r_i}{\sigma_i^2}$. Then $\hat{C}^m_i(t,x)$, $m=1,2,\ldots$, is recursively given by
		\begin{align*}
			&\hat{C}^m_i(t,x)=\frac{\breve{C}^m_i(t,x)}{\gamma_i(\tau,x)},\\
			&\breve{C}^m_i(t,x)=\int_t^{T}\int_{\mathbb{R}}G^m_i(T-s,\xi)H_i(T-s,x,\xi) d\xi ds,\\
			&G^m_i(t,x)=\gamma_i(T-t,x)\mc{L}_i\hat{C}^m_i(t,x),\\
			&H_i(t,x,\xi)=\frac{1}{\sqrt{2\pi\sigma_i^2t}}\exp\left[\frac{-(x-\xi)^2}{2\sigma_i^2t}\right],\\
			&\gamma_i(t,x)=\exp\left[\frac{1}{2}(\beta_i-1)x+\frac{1}{8}\sigma_i^2(\beta_i+1)^2t\right].
		\end{align*}
	\end{myth}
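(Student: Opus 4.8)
The plan is to run the homotopy analysis method in the form used in \cite{elliottchansiu:homotopyoptionpricing, liao:homotopyanalysismethod}, taking the auxiliary linear operator to be $\mc{L}_i$ itself, so that the coupled system \eqref{C as pde in x} is continuously deformed into $N$ decoupled Black--Scholes equations. First I would introduce an embedding parameter $p\in[0,1]$ and, for each $i=1,\ldots,N$, a deformation $\phi_i(t,x;p)$ solving
\begin{align*}
	\mc{L}_i\phi_i(t,x;p)+p\,\langle\mathbf{\Phi}(t,x;p),B(u(t))e_i\rangle=0,\qquad \phi_i(T,x;p)=(e^x-K)^+,
\end{align*}
where $\mathbf{\Phi}:=(\phi_1,\ldots,\phi_N)^{\top}$. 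At $p=1$ this is precisely \eqref{C as pde in x}, so $\phi_i(t,x;1)=C_i(t,x)$; at $p=0$ it reduces to $\mc{L}_i\phi_i(t,x;0)=0$ with terminal payoff $(e^x-K)^+$, i.e.\ (in the variable $s=e^x$) the classical Black--Scholes equation with volatility $\sigma_i$ and short rate $r_i$, whose solution is the prescribed $\hat{C}_i^0$. Assuming $p\mapsto\phi_i(t,x;p)$ is analytic near $[0,1]$, I would write its Maclaurin expansion $\phi_i(t,x;p)=\sum_{m\ge0}\hat{C}_i^m(t,x)p^m/m!$ with $\hat{C}_i^m:=\partial_p^m\phi_i|_{p=0}$, so that setting $p=1$ yields $C_i=\sum_{m\ge0}\hat{C}_i^m/m!$.

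Next I would substitute this series into the deformation equation and equate coefficients of $p^m$. The $p^0$ term reproduces $\mc{L}_i\hat{C}_i^0=0$ together with the full terminal condition, which is the stated Black--Scholes formula (this step is classical; one only checks that the given $\hat{C}_i^0$ solves it). For $m\ge1$ one obtains the \emph{decoupled}, linear, terminal-value problem
\begin{align*}
	\mc{L}_i\hat{C}_i^m(t,x)=-m\,\langle\hat{\mathbf{C}}^{m-1}(t,x),B(u(t))e_i\rangle,\qquad \hat{C}_i^m(T,x)=0,
\end{align*}
with $\hat{\mathbf{C}}^{m-1}:=(\hat{C}_1^{m-1},\ldots,\hat{C}_N^{m-1})^{\top}$; the terminal datum is zero because the payoff is already matched at order $0$. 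The right-hand side involves only $\hat{\mathbf{C}}^{m-1}$, so this determines $\hat{C}_i^m$ recursively, and $G_i^m(t,x)=\gamma_i(T-t,x)\mc{L}_i\hat{C}_i^m(t,x)$ is thereby a known function once $\hat{\mathbf{C}}^{m-1}$ is in hand.

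It then remains to solve each inhomogeneous problem explicitly. Passing to time-to-maturity $\tau:=T-t$ and making the substitution $\hat{C}_i^m=\breve{C}_i^m/\gamma_i$ with $\gamma_i(\tau,x)=\exp[\tfrac12(\beta_i-1)x+\tfrac18\sigma_i^2(\beta_i+1)^2\tau]$, $\beta_i=2r_i/\sigma_i^2$, is the standard reduction of the Black--Scholes operator to the heat operator: a direct computation shows that $\gamma_i\,\mc{L}_i(\,\cdot/\gamma_i\,)$ applied to $\breve{C}_i^m$ equals $-(\partial_\tau-\tfrac12\sigma_i^2\partial_{xx})\breve{C}_i^m$, so $\breve{C}_i^m$ solves a forward heat equation with zero initial datum and source built from $G_i^m$. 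Duhamel's principle against the heat kernel $H_i(t,x,\xi)=(2\pi\sigma_i^2t)^{-1/2}\exp[-(x-\xi)^2/(2\sigma_i^2t)]$ then expresses $\breve{C}_i^m$ as the space--time convolution of $G_i^m$ with $H_i$, namely $\breve{C}_i^m(t,x)=\int_t^T\int_{\mathbb{R}}G_i^m(T-s,\xi)H_i(T-s,x,\xi)\,d\xi\,ds$ (up to the routine sign/variable bookkeeping), hence $\hat{C}_i^m=\breve{C}_i^m/\gamma_i$, which is the recursion of the theorem.

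The main obstacle is the analytic closing step: showing the Maclaurin series really converges at $p=1$ and that its sum genuinely solves \eqref{C as pde in x}, so that term-by-term application of $\mc{L}_i$ is legitimate. I would pass to $\psi_i^m:=\hat{C}_i^m/m!$, which by the recursion solves $\mc{L}_i\psi_i^m=-\langle\mathbf{\Psi}^{m-1},B(u(t))e_i\rangle$ with zero terminal data; since $H_i(t,\cdot,\cdot)$ is a probability density and $u$, hence $\sup_t\|B(u(t))\|$, is bounded, a Picard-type induction on the Duhamel integral over $[t,T]$ should give $\|\psi_i^m(t,\cdot)\|\le C\,\|B\|_\infty^m(T-t)^m/m!$ on bounded $x$-regions, yielding locally uniform convergence and justifying differentiation under the sum. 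The genuinely delicate point is that $\gamma_i$ grows exponentially in $x$, so the ratio $\gamma_i(\tau,\xi)/\gamma_i(\tau,x)$ entering the Duhamel integral is controlled only on bounded spatial sets (or under an exponential-growth hypothesis on the iterates, compatible with the call payoff $e^x$); reconciling this with the integrability needed for the $\xi$-integrals defining $\breve{C}_i^m$ is where the real work lies, and is presumably why the result is quoted here without proof (cf.\ \cite{elliottchansiu:homotopyoptionpricing, parkkim:homotopyoptionpricing}).
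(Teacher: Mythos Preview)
Your proposal follows essentially the same homotopy analysis argument as the paper: introduce an embedding parameter $p$, take the Black--Scholes price as the zeroth-order term so that $\mc{L}_i\hat{C}_i^0=0$, match powers of $p$ to obtain the recursive inhomogeneous problems with zero terminal data, and then reduce each $\mc{L}_i$-equation to a heat equation via the exponential weight $\gamma_i$ before applying Duhamel's formula with the Gaussian kernel $H_i$. Two small remarks: your coefficient $m$ in the $m$th-order recursion is in fact what the Taylor expansion with $\hat{C}_i^m/m!$ gives (the paper writes the recursion without it), and your discussion of convergence at $p=1$ goes beyond the paper, which simply passes to the limit $p\to1^-$ formally and refers to \cite{elliottchansiu:homotopyoptionpricing, parkkim:homotopyoptionpricing} for justification.
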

	%\iffalse
	\begin{proof}
		We construct a homotopy of \eqref{C as pde in x} by considering an embedding parameter $p\in[0,1]$ and unknown functions $\hat{C}_i(t,x,p)$ which satisfy the following system:
		\begin{align}\label{homotopy of pde}
			(1-p)\mc{L}_i\left[\hat{C}_i(t,x,p)-\hat{C}_i^0(t,x)\right]+p\left[\mc{L}_i\hat{C}_i(t,x,p)+\langle\mathbf{\hat{C}}(t,x,p),B(u(t))e_i\rangle\right]=0,
		\end{align}
		with terminal condition $\hat{C}_i(T,x,p):=\hat{C}(T,{x},p,e_i)=(e^x-K)^+$ and $\hat{C}_i^0(t,x)=\hat{C}_i(t,x,0)$ for $i=1,\ldots,N$. This is also called an initial guess for $C_i(t,x)$. 
		
		We choose $\hat{C}_i^0(t,x)$ as the Black-Scholes-Merton option price. This implies that $\hat{C}_i^0(t,x)$ satisfies
		\begin{align*}
			\begin{cases}
				\mc{L}_i\hat{C}_i^0(t,x)=0\\
				\hat{C}_i^0(0,x)=(e^x-K)^+.
			\end{cases}
		\end{align*}
		The system in \eqref{homotopy of pde} can then be rewritten as
		\begin{align}\label{simplified homotopy of pde}
			\begin{cases}
				\mc{L}_i\hat{C}_i(t,x,p)+p\langle\mathbf{\hat{C}}(t,x,p),B(u(t))e_i\rangle=0\\
				\hat{C}_i(T,x,p)=(e^x-K)^+,
			\end{cases}
		\end{align}
		with the same terminal condition. Note that if $p=1$, we get
		\begin{align}\label{homotopy p=1}
			\begin{cases}
				\mc{L}_i\hat{C}_i(t,x,1)+\langle\mathbf{\hat{C}}(t,x,1),B(u(t))e_i\rangle=0\\
				\hat{C}_i(T,x,1)=(e^x-K)^+.
			\end{cases} 
		\end{align}
		Consider the Taylor's expansion for $\hat{C}_i(t,x,p)$ about $p=0$ given by
		\begin{align}\label{taylor expansion for c hat}
			\hat{C}_i(t,x,p)=\sum_{m=0}^{\infty}p^m\frac{\hat{C}^m_i(t,x)}{m!},
		\end{align}
		where
		\begin{align*}
			\hat{C}^m_i(t,x)=\frac{\partial^m}{\partial p^m}\hat{C}_i(t,x,p)\bigg|_{p=0}.
		\end{align*}
		Using \eqref{C as pde in x}, \eqref{homotopy p=1}, and \eqref{taylor expansion for c hat} yields
		\begin{align*}
			C_i(t,x)=\lim_{p\to 1^-}\hat{C}_i(t,x,p)=\sum_{m=0}^{\infty}\frac{\hat{C}^m_i(t,x)}{m!}.
		\end{align*}
		We want to solve for $\hat{C}^m_i(t,x)$. Substituting \eqref{taylor expansion for c hat} to \eqref{simplified homotopy of pde} yields the following recursive relations for $m=1,2,\ldots$
		\begin{align}\label{recursive relations}
			\begin{cases}
				\mc{L}_i\hat{C}^m_i(t,x)+\langle\mathbf{\hat{C}}^{m-1}(t,x),B(u(t))e_i\rangle=0\\
				\hat{C}^m_i(T,x)=0.
			\end{cases} 
		\end{align}
		Write $\breve{C}^m_i(t,x)=\gamma_i(\tau,x)\hat{C}_i^m(t,x).$ Using \eqref{recursive relations} yields
		\begin{align*}
			\frac{\partial\breve{C}^m_i}{\partial\tau}+\frac{1}{2}\sigma_i^2\frac{\partial^2\breve{C}^m_i}{\partial x^2}&=-\gamma_i(\tau,x)\mc{L}_i\hat{C}^m_i(t,x)\\
			&=\gamma_i(\tau,x)\langle\mathbf{\hat{C}}^{m-1}(T-\tau,x),B(u(T-\tau))e_i\rangle,
		\end{align*}
		with initial condition $\breve{C}^m_i(0,x)=0$. This type of nonhomogeneous diffusion equation has a known solution (see \cite{parkkim:homotopydiffusion} or Chapter 1.3 of \cite{kevorkian:pde}) given by
		\begin{align*}
			\breve{C}^m_i(t,x)=\int_t^{T}\int_{\mathbb{R}}G^m_i(s,\xi)H_i(T-s,x,\xi) d\xi ds,
		\end{align*}
		which proves the result.
	\end{proof}
	%\fi
	
	\section{Bid and Ask Prices}\label{bid-ask price models}	
	Uncertainty in this model is modelled by uncertainty in the dynamics of the chain $\mathbf{X}$. In turn this represents uncertainty in both the jump and diffusion coefficients of the price process. This is motivated by the buying or selling agent supposing that the market is acting in the most adverse way contrary to their interests. Therefore, a model for the ask price is represented by the supremum over possible trajectories of $\mathbf{X}$ and a model for the bid price by the infimum. These are described by maximum and minimum principles. respectively, for related control problems. A dynamic programming principle and a verification theorem are also given below.
	
	As in \cite{cohen:bsde}, define $\Psi_t$ to be the positive semidefinite matrix given by
	\begin{equation}\label{psi}
		\Psi_t:=\mbox{diag}(AX_{t-})-A\mbox{diag}(X_{t-})-\mbox{diag}(X_{t-})A^{\top}
	\end{equation}
	and, for $Y\in\mathbb{R}^N$,
	\begin{equation*}
		\|Y \|^2_{X_{t-}}=Y^{\top}\Psi_tY.
	\end{equation*}
	Then $\|\cdot\|_{X_{t-}}$ is a stochastic seminorm on $\mathbb{R}^N$ satisfying
	\begin{equation*}
		Y_t^{\top}d\langle M\rangle_tY_t=\|Y_t\|_{X_{t-}}^2dt,
	\end{equation*}
	where $\langle M\rangle$ is the predictable quadratic variation of the martingale defined in \eqref{chain}.
	
	Below $(\textsf{U},d)$ will be a Polish metric space. We adapt some methods used in \cite{elliott:stochastic}. Suppose the maps $f,\sigma,\eta:\mc{T}\times\mathbb{R}\times\mc{S}\mapsto\mathbb{R}$ and $g:\mathbb{R}\times\mc{S}\mapsto\mathbb{R}$ satisfy the following assumption:
	
	%\begin{myass}{2}
	%	For each fixed $x\in\mc{S}$, $f(\cdot,\cdot,x)$, $\sigma(\cdot,\cdot,x)$, $\eta(\cdot,\cdot,x)$, and $g(\cdot,x)$ are uniformly continuous.
	%\end{myass}
	\begin{myass}\label{assumption:Lipschitz}
		For each $x\in\mc{S}$ and $t\in\mc{T}$, there exists a constant $C_1>0$ such that for all $z_1,z_2\in{\mathbb{R}}$,
		\begin{align*}
			|f(t,z_1,x)-f(t,z_2,x)|&\leq C_1|z_1-z_2|\\
			|\sigma(t,z_1,x)-\sigma(t,z_2,x)|&\leq C_1|z_1-z_2|\\
			|\eta(t,z_1,x)-\eta(t,z_2,x)|&\leq C_1|z_1-z_2|\\
			|g(z_1,x)-g(z_2,x)|&\leq C_1|z_1-z_2|.
		\end{align*} 
	\end{myass}

	Consider a process $\{Z_t\}_{t\in\mc{T}}$ given by the following dynamics:
	\begin{align}\label{stateequation}
		dZ_t&=f(t,Z_t,X_t)dt+\sigma(t,Z_t,X_t)dW_t+\eta(t,Z_{t-},X_{t-})\langle\alpha,dX_t\rangle.
	\end{align}
	
	\begin{mylm}\label{unique}
		Suppose Assumption \ref{assumption:Lipschitz} holds. Then for any $(t,z,x)\in[0,T)\times\mathbb{R}\times\mc{S}$, the SDE \eqref{stateequation} admits a unique solution with initial conditions $Z_t=z$ and $X_t=x$.
	\end{mylm}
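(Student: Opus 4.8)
The plan is to reduce \eqref{stateequation} to a classical It\^o SDE on each interval between jumps of $\mathbf{X}$. Since $\mathbf{X}$ takes values in the finite set $\mc{S}$ its paths are piecewise constant, so on any interval on which $\mathbf{X}$ does not jump the term $\eta(s,Z_{s-},X_{s-})\langle\alpha,dX_s\rangle$ vanishes and \eqref{stateequation} becomes a diffusion with a frozen regime; the finitely many jumps are then dealt with by an explicit deterministic update and a recursion.

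First I would set up the decomposition. Fix $(t,z,x)$ and let $t=\tau_0<\tau_1<\tau_2<\cdots$ be the successive jump times of $\mathbf{X}$ in $(t,T]$, with $X_t=x$; these are $\mathbb{G}$-stopping times, and because $\mc{S}$ is finite and the rate matrix $A$ is bounded, $N_T:=\#\{k:\tau_k\le T\}<\infty$ almost surely. On $[\tau_k,\tau_{k+1})$ we have $X_s\equiv X_{\tau_k}=:e_{i_k}$, so there \eqref{stateequation} reads
\begin{align*}
dZ_s=f(s,Z_s,e_{i_k})\,ds+\sigma(s,Z_s,e_{i_k})\,dW_s .
\end{align*}

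Next, on each such interval I would invoke the classical existence and uniqueness theorem for It\^o SDEs: by Assumption \ref{assumption:Lipschitz} the maps $z\mapsto f(\cdot,z,e_{i_k})$ and $z\mapsto\sigma(\cdot,z,e_{i_k})$ are Lipschitz with a constant we may take uniform in $s\in\mc{T}$ and over the finite set $\mc{S}$, and combining this with $\sup_{s\in\mc{T},\,y\in\mc{S}}(|f(s,0,y)|+|\sigma(s,0,y)|)<\infty$ (finite since $\mc{T}$ is compact and $\mc{S}$ is finite) yields the linear growth bound. Hence, for any $\mc{G}_{\tau_k}$-measurable, square-integrable initial value, there is a unique strong $\mathbb{G}$-adapted solution on $[\tau_k,\tau_{k+1}\wedge T]$; this is where the independence of $\mathbf{W}$ and $\mathbf{X}$ is used, so that conditionally on the regime the driver is a genuine $\mathbb{G}$-Brownian motion. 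I would then build the solution recursively: starting from $Z_{\tau_0}=z$, solve on $[\tau_0,\tau_1)$, set
\begin{align*}
Z_{\tau_1}:=Z_{\tau_1-}+\eta\big(\tau_1,Z_{\tau_1-},X_{\tau_1-}\big)\big\langle\alpha,\Delta X_{\tau_1}\big\rangle,
\end{align*}
solve on $[\tau_1,\tau_2)$ with that value, and iterate. Since $N_T<\infty$ a.s.\ the recursion terminates and defines $Z$ on all of $[t,T]$, and by construction $Z$ satisfies \eqref{stateequation} with the prescribed initial conditions.

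For uniqueness I would run the same induction: two solutions must coincide on $[\tau_0,\tau_1)$ by pathwise uniqueness of the It\^o SDE there, hence have equal left limits at $\tau_1$ and therefore equal jumps, so they agree at $\tau_1$; repeating over the finitely many jump times gives indistinguishability on $[t,T]$. The only point requiring genuine care is the interface between the pure-jump structure of $\mathbf{X}$ and the Brownian SDE theory: verifying that the $\tau_k$ are $\mathbb{G}$-stopping times, that one may legitimately restart the diffusion at $\tau_k$ with a $\mc{G}_{\tau_k}$-measurable initial condition, and that adaptedness and square-integrability propagate through the patching. Everything else is routine.
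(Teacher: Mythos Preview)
Your argument is correct, but it takes a genuinely different route from the paper. The paper's own proof is a one-line citation of general existence and uniqueness theorems for SDEs driven by semimartingales with Lipschitz coefficients (Theorem~16.3.11 of \cite{elliott:stochastic} and Theorem~6 of \cite{protter:stochastic}); it treats $\langle\alpha,dX_t\rangle$ simply as integration against a semimartingale and appeals to the heavy machinery directly. You instead exploit the special structure of the problem---that $\mathbf{X}$ is piecewise constant with finitely many jumps on $[t,T]$---to reduce everything to the classical It\^o theory for Brownian SDEs, patching pathwise across the jump times with the explicit update $Z_{\tau_k}=Z_{\tau_k-}+\eta(\tau_k,Z_{\tau_k-},X_{\tau_k-})\langle\alpha,\Delta X_{\tau_k}\rangle$. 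Your approach is more elementary and self-contained, and it makes the role of the finite-state assumption transparent; the paper's approach is shorter and does not rely on the inter-jump decomposition, so it would survive unchanged if $\mathbf{X}$ were replaced by a more general semimartingale. One small point worth tightening: your linear-growth step uses $\sup_{s\in\mc{T},\,y\in\mc{S}}(|f(s,0,y)|+|\sigma(s,0,y)|)<\infty$, which does not follow from Assumption~\ref{assumption:Lipschitz} as literally stated (the assumption is Lipschitz in $z$ for each fixed $(t,x)$, with no regularity in $t$); this is harmless here because the concrete coefficients in \eqref{zdynamics} vanish at $z=0$, and the cited general theorems need a comparable hypothesis anyway, but you should flag it.
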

	\begin{proof}
		See Theorem 16.3.11 of \cite{elliott:stochastic} or Theorem 6 of \cite{protter:stochastic} for the proof.
	\end{proof}

	Now, write $S_t=Z_t$. From \eqref{prior change of measure}, the dynamics of $\{Z_t\}_{t\in\mc{T}}$ are given by
	\begin{equation}\label{zdynamics}
		dZ_t=\mu_tZ_tdt+\sigma_tZ_tdW_t +Z_{t-}\langle\alpha^{-1},X_{t-}\rangle\langle\alpha,dX_t\rangle.
	\end{equation}
	Note that \eqref{zdynamics} is of the form \eqref{stateequation} with $f(t,Z_t,X_t)=\mu_tZ_t$, $\sigma(t,Z_t,X_t)=\sigma_tZ_t$ and $\eta(t,Z_{t-},X_{t-})=\langle\alpha^{-1},X_{t-}\rangle Z_{t-}$. The following lemma is a consequence of Lemma \ref{unique}.
	%\iffalse
	\begin{mylm}
		For any $(t,z,x)\in[0,T)\times\mathbb{R}\times\mc{S}$, the SDE \eqref{zdynamics} admits a unique solution with initial conditions $Z_t=z$ and $X_t=x$.
	\end{mylm}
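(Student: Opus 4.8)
The plan is to recognise \eqref{zdynamics} as a special case of \eqref{stateequation} and then check that its coefficients meet Assumption \ref{assumption:Lipschitz}, after which the claim follows at once from Lemma \ref{unique}. As already observed in the text, \eqref{zdynamics} has the form \eqref{stateequation} with $f(t,z,x)=\langle\mu,x\rangle z$, $\sigma(t,z,x)=\langle\sigma,x\rangle z$ and $\eta(t,z,x)=\langle\alpha^{-1},x\rangle z$. The map $g$ does not enter the state equation at all, so one may simply take $g\equiv 0$, which trivially satisfies the Lipschitz bound; only the conditions on $f$, $\sigma$, $\eta$ require checking.

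First I would fix $x=e_i\in\mc{S}$ and $t\in\mc{T}$ and note that, for $z_1,z_2\in\mathbb{R}$,
\[
|f(t,z_1,e_i)-f(t,z_2,e_i)|=|\mu_i|\,|z_1-z_2|,\quad
|\sigma(t,z_1,e_i)-\sigma(t,z_2,e_i)|=\sigma_i\,|z_1-z_2|,\quad
|\eta(t,z_1,e_i)-\eta(t,z_2,e_i)|=\alpha_i^{-1}|z_1-z_2|,
\]
since each coefficient is linear in $z$ with slope determined by the current regime. Because the state space $\mc{S}$ contains only the $N$ points $e_1,\dots,e_N$, the quantity
\[
C_1:=\max_{1\le i\le N}\bigl\{|\mu_i|,\ \sigma_i,\ \alpha_i^{-1}\bigr\}
\]
is finite and serves as a single Lipschitz constant valid uniformly over $x\in\mc{S}$ and $t\in\mc{T}$; hence Assumption \ref{assumption:Lipschitz} holds. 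I would then invoke Lemma \ref{unique} with this $C_1$ to conclude that, for every $(t,z,x)\in[0,T)\times\mathbb{R}\times\mc{S}$, the SDE \eqref{zdynamics} has a unique solution with $Z_t=z$ and $X_t=x$.

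There is no genuine obstacle here: the argument is an elementary verification, and the only substantive point is that the required Lipschitz estimates are needed only in the $z$ variable, exactly as Assumption \ref{assumption:Lipschitz} is phrased, with the finiteness of $\mc{S}$ turning the finitely many regime-dependent slopes into one global constant. (Linear growth in $z$ is automatic from this Lipschitz bound together with $f(t,0,x)=\sigma(t,0,x)=\eta(t,0,x)=0$, so no separate integrability hypothesis is needed beyond what Lemma \ref{unique} already assumes.)
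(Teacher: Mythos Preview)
Your proposal is correct and follows essentially the same route as the paper: verify that the linear-in-$z$ coefficients of \eqref{zdynamics} satisfy Assumption~\ref{assumption:Lipschitz} and then invoke Lemma~\ref{unique}. The only cosmetic difference is that the paper chooses $C_1=\max\{|\langle\mu,x\rangle|,|\langle\sigma,x\rangle|,|\langle\alpha^{-1},x\rangle|\}$ for each fixed $x$ (as Assumption~\ref{assumption:Lipschitz} permits), whereas you take a single constant uniform over $\mc{S}$; both choices work.
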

	\begin{proof}
		For each $x\in\mc{S}$ and $t\in\mc{T}$, choose $C_1=\ds\max\{|\langle\mu,x\rangle|,|\langle\sigma,x\rangle|,|\langle\alpha^{-1},x\rangle|\}$. Then, Assumption \ref{assumption:Lipschitz} holds. By Lemma \ref{unique}, the result follows.
	\end{proof}
	%\fi
	We denote the solution of \eqref{zdynamics} with initial conditions $z$ and $x$ by $\{Z_s^{t,z,x}\}_{s\in[t,T]}$.
	
	Consider an objective functional given by
	\begin{equation}\label{objfunc}
		\mc{J}(t,z,x;u)=\expoc\left[g(Z_T^u,X_T^u)\right].
	\end{equation}
	Here, $\expoc(\cdot)$ is the conditional expectation given $Z^u_{t}=z$ and $X^u_{t}=x$ under $Q^u$. By Assumption \ref{assumption:Lipschitz}, the objective functional \eqref{objfunc} is well-defined. Write $E^{Q^u}[g(\z_t,\x_t)]\allowbreak=\expoc\left[g(Z_t^u,X_t^u)\right]$. 
	
	We shall consider predictable processes $\varphi_1:\Omega\times\mc{T}\mapsto\mathbb{R}$ and $\varphi_2:\Omega\times\mc{T}\mapsto\mathbb{R}^N$ which satisfy the following assumption.
	\begin{myass}\label{square integrable}
		For each $x\in\mc{S}$, $t\in\mc{T}$ and $u\in\textsf{U}$,
		\begin{align*}
			E\bigg[\int_0^T(\|\mc{J}(t,z,x,u)\|^2+\|\varphi_1(t)\|^2+\|\varphi_2(t)\|_{X_{t-}}^2)dt\bigg]<\infty.
		\end{align*}		
	\end{myass}
	For the rest of the section, we assume that Assumption \ref{assumption:Lipschitz} and Assumption \ref{square integrable} hold.
	
	We shall show that \eqref{objfunc} is a solution to a backward stochastic differential equation (BSDE). We first prove the following lemmas.
	
	Recall from Section \ref{change of measure for markov chains}  that $D(u(t))=[b_{ji}(u_i)/a_{ji}]_{i,j=1,\ldots,N}$ and
	\begin{align*}
		D_0(u)=D(u)-\mbox{diag}(d_{11}(u_1),\ldots,d_{NN}(u_N)).
	\end{align*}
	Furthermore, from Section \ref{esscher}, we have a regime switching parameter $\theta_t=\langle\theta,X_t^u\rangle$, where $\theta:=(\theta_1,\ldots,\theta_N)^{\top}$. The following lemma can be proved by using the fact that $\theta$ and $\sigma$ are bounded by some constant $K$ and expanding the second term via the stochastic seminorm definition.
	
	\begin{mylm}\label{boundedness}
		For each $x\in\mc{S}$, $t\in\mc{T}$ and $u\in\textsf{U}$, there exists a constant $C_2>0$ such that
		\begin{align*}
			|\langle\theta\sigma,x\rangle|^2+\|D_0(u)X_{t-}-\mathbf{1}\|^2_{X_{t-}}<C_1,
		\end{align*}
		where 
		\begin{align*}
			C_1&=K+\max_{u\in\textsf{U}}\left\{\sum_{\substack{k=1\\k\neq j}}^N\frac{(b_{kj}(u_j))^2}{a_{kj}}\right\}.
		\end{align*}
	\end{mylm}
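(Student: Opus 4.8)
The plan is to fix a state $e_j\in\mc{S}$, evaluate everything at $X_{t-}=x=e_j$, bound the two summands separately, and then finish with an appeal to compactness of $\textsf{U}$. The reduction to a single state is natural because both $\langle\theta\sigma,x\rangle$ and the seminorm $\|\cdot\|_{X_{t-}}$ are computed state-by-state.

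For the first summand, since $x=e_j$ we have $\langle\theta\sigma,x\rangle=\theta_j\sigma_j$, where (by Proposition \ref{emm}) $\theta_j$ is a continuous function of the control component $u_j$ ranging over a compact set and $\sigma_j>0$ is a fixed constant; hence $|\langle\theta\sigma,x\rangle|^2$ is bounded, uniformly in $j$, $t$ and $u$, and I denote by $K$ any strict upper bound for it. This is the only place where boundedness of $\theta$ and $\sigma$ enters.

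For the second summand I would expand the stochastic seminorm from its definition. Setting $X_{t-}=e_j$ in \eqref{psi} and writing $\Psi=\mbox{diag}(Ae_j)-A\,\mbox{diag}(e_j)-\mbox{diag}(e_j)A^{\top}$, a direct computation of $Y^{\top}\Psi Y$ for $Y\in\mathbb{R}^N$ gives $\sum_k a_{kj}Y_k^2-2Y_j\sum_k a_{kj}Y_k$, which, using the column-sum identity $\sum_k a_{kj}=0$ to absorb the diagonal terms, rearranges into the familiar form $\|Y\|^2_{e_j}=\sum_{k\neq j}a_{kj}(Y_k-Y_j)^2$. Now take $Y:=D_0(u)e_j-\mathbf{1}$: recalling that $D_0(u)$ is $D(u)=[b_{\ell i}(u_i)/a_{\ell i}]_{i,\ell}$ with its diagonal set to zero, the $j$-th column $D_0(u)e_j$ has $k$-th entry $b_{kj}(u_j)/a_{kj}$ for $k\neq j$ and $0$ for $k=j$, so $Y_k=b_{kj}(u_j)/a_{kj}-1$ for $k\neq j$ and $Y_j=-1$. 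Hence the $-1$'s cancel, $Y_k-Y_j=b_{kj}(u_j)/a_{kj}$, and
\[
\|D_0(u)e_j-\mathbf{1}\|^2_{e_j}=\sum_{k\neq j}a_{kj}\Big(\tfrac{b_{kj}(u_j)}{a_{kj}}\Big)^2=\sum_{k\neq j}\frac{(b_{kj}(u_j))^2}{a_{kj}},
\]
which is finite since $a_{kj}>0$ for all $k,j$ (assumed in Section \ref{change of measure for markov chains}).

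Combining the two estimates, and noting that $\textsf{U}$ is compact while $u\mapsto\sum_{k\neq j}(b_{kj}(u_j))^2/a_{kj}$ is continuous, so that the maximum is attained, gives
\[
|\langle\theta\sigma,x\rangle|^2+\|D_0(u)X_{t-}-\mathbf{1}\|^2_{X_{t-}}<K+\max_{u\in\textsf{U}}\Big\{\sum_{\substack{k=1\\ k\neq j}}^N\frac{(b_{kj}(u_j))^2}{a_{kj}}\Big\}=C_1,
\]
as claimed. I do not expect a genuine obstacle; the only step that needs a little care is the matrix algebra for $Y^{\top}\Psi Y$ together with spotting the cancellation $Y_k-Y_j=b_{kj}(u_j)/a_{kj}$ that eliminates the $\mathbf{1}$ — everything else is bookkeeping plus the compactness of the control set.
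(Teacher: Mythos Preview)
Your proof is correct and follows essentially the same approach as the paper: fix $X_{t-}=e_j$, bound the first summand by boundedness of $\theta$ and $\sigma$, and compute the seminorm directly to obtain $\sum_{k\neq j}(b_{kj}(u_j))^2/a_{kj}$. The only cosmetic difference is that the paper writes out $\Psi_t$ explicitly as an $N\times N$ matrix and multiplies $\Psi_t(D_0(u)e_j-\mathbf{1})$ entry by entry, whereas you first derive the clean identity $\|Y\|^2_{e_j}=\sum_{k\neq j}a_{kj}(Y_k-Y_j)^2$ and then substitute --- the same linear algebra, slightly more streamlined.
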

	%\iffalse
	\begin{proof}
		Since $\theta$ and $\sigma$ are bounded, then the first term is bounded by some constant $K$. Suppose $X_{t-}=e_j\in\mc{S}$. For $i,m,n=1,\ldots,N$ and some vector $\mathbf{a}=(a_1,\ldots,a_N)^{\top}$, define the $N\times N$ matrices $\mbox{col}_i(\mathbf{a})=[c_{i,mn}]$ and $\mbox{row}_i(\mathbf{a})=[r_{i,mn}]$ such that 
		\begin{equation*}
			c_{i,mn}=r_{i,nm}=\begin{cases}
				a_m,&\mbox{if $n=i$}\\
				0,&\mbox{otherwise}.
			\end{cases}
		\end{equation*}
		Then,
		\begin{align*}
			D_0(u)X_{t-}-\mathbf{1}&=(d_{1j}-1,\ldots,-1,\ldots,d_{Nj}-1)^{\top}\\
			\mbox{diag}(AX_{t-})&=\mbox{diag}(a_{1j},\ldots,a_{Nj})\\
			A\mbox{diag}(X_{t-})&=\mbox{col}_j(a_{1j},\ldots,a_{Nj})\\
			\mbox{diag}(X_{t-})A^{\top}&=\mbox{row}_j(a_{1j},\ldots,a_{Nj}).
		\end{align*}
		From \eqref{psi}, the diagonal entries, the $j$th column and the $j$th row of $\Psi_t$ have entries of the form $\pm a_{mj}$ and zero elsewhere. That is,
		\begin{align*}
			\Psi_t=\left[{\begin{array}{ccccccc}
					a_{1j}&0&\cdots&-a_{1j}&\cdots&0\\
					0&a_{2j}&\cdots&-a_{2j}&\cdots&0\\
					\vdots   &\vdots   &\ddots&\vdots&\vdots&\vdots\\ 
					-a_{1j}&-a_{2j}&\cdots&-a_{jj}&\cdots&-a_{Nj}\\
					\vdots&\vdots&\vdots&\vdots&\vdots&\vdots\\
					0&0&\cdots&-a_{Nj}&\cdots&a_{Nj}
			\end{array}}\right].
		\end{align*}
		Multiplying $(D_0(u)X_{t-}-\mathbf{1})$ to the right of $\Psi_t$ yields
		\begin{align*}
			\Psi_t(D_0(u)X_{t-}-\mathbf{1})=\left[{\begin{array}{c}
					d_{1j}(u)a_{1j}-a_{1j}+a_{1j}\\
					d_{2j}(u)a_{2j}-a_{2j}+a_{2j}\\
					\vdots\\
					-\ds\sum_{\substack{k=1\\k\neq j}}^Nd_{kj}(u)a_{kj}+\ds\sum_{k=1}^Na_{kj}\\
					\vdots\\
					d_{Nj}(u)a_{Nj}-a_{Nj}+a_{Nj}
			\end{array}}\right].
		\end{align*}
		From Sections 3 and 4, we have $\sum_{k=1}^Na_{kj}=0$ and $d_{kj}(u)a_{kj}=b_{kj}(u_j)$. Then, multiplying $(D_0(u)X_{t-}-\mathbf{1})^{\top}$ to the left of the above vector yields
		\begin{align*}
			\|D_0(u)X_{t-}-\mathbf{1}\|^2_{X_{t-}}&=\ds\sum_{\substack{k=1\\k\neq j}}^N[b_{kj}(u_j)d_{kj}(u)-b_{kj}(u_j)]+\ds\sum_{\substack{k=1\\k\neq j}}^Nb_{kj}(u_j)\\
			&=\ds\sum_{\substack{k=1\\k\neq j}}^N\frac{(b_{kj}(u_j))^2}{a_{kj}}.
		\end{align*}
		Hence, $\|D_0(u)X_{t-}-\mathbf{1}\|^2_{X_{t-}}$ is also bounded. Choose 
		\begin{align*}
			C_2&=K+\max_{u\in\textsf{U}}\left\{\sum_{\substack{k=1\\k\neq j}}^N\frac{(b_{kj}(u_j))^2}{a_{kj}}\right\}.
		\end{align*}
		The result then follows.
	\end{proof}
	%\fi 
	The following lemma is a result of using the predicatable quadratic variation of $\mc{E}(\Theta)$, Gr\"{o}nwall's inequality, and Lemma \ref{boundedness}.
	\begin{mylm}\label{squareintegrablemartingale}
		Define
		\begin{equation*}
			\Theta_t=\int_0^t\langle\theta\sigma,X_s\rangle dW_s+\int_0^t\langle D_0(u)X_{s-}-\mathbf{1}, dM_s\rangle.
		\end{equation*}
		Then the stochastic exponential $\mc{E}(\Theta)$ is a square integrable martingale for $t\leq T$.
	\end{mylm}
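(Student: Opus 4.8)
The plan is to show that $N_t:=\mc{E}(\Theta)_t$ is a local martingale whose second moment is bounded uniformly on $\mc{T}$, and then to upgrade this to the genuine square-integrable martingale property by uniform integrability.

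\textbf{Structure of $\Theta$ and its bracket.} First I would note that $\Theta$ is a $(\mathbb{G},P)$-local martingale, being the sum of the It\^o integral of the bounded process $\langle\theta\sigma,X_{\cdot}\rangle$ against the Brownian motion $\mathbf{W}$ and the stochastic integral of $D_0(u)X_{\cdot-}-\mathbf{1}$ against the square-integrable martingale $\mathbf{M}$ of \eqref{chain}; hence $N=\mc{E}(\Theta)$ solves $dN_t=N_{t-}\,d\Theta_t$ with $N_0=1$ and is a $(\mathbb{G},P)$-local martingale. Since $\mathbf{W}$ is continuous while $\mathbf{M}$ is purely discontinuous (and $\mathbf{W},\mathbf{X}$ are independent), the two driving martingales have no cross variation, so
\begin{align*}
	d\langle\Theta\rangle_t=|\langle\theta\sigma,X_t\rangle|^2\,dt+(D_0(u)X_{t-}-\mathbf{1})^{\top}\,d\langle M\rangle_t\,(D_0(u)X_{t-}-\mathbf{1})=\Big(|\langle\theta\sigma,X_t\rangle|^2+\|D_0(u)X_{t-}-\mathbf{1}\|_{X_{t-}}^2\Big)\,dt,
\end{align*}
using the seminorm identity $Y_t^{\top}\,d\langle M\rangle_t\,Y_t=\|Y_t\|_{X_{t-}}^2\,dt$ associated with \eqref{psi}. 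By Lemma \ref{boundedness} the bracketed term is bounded by the constant $C_2$, so $d\langle\Theta\rangle_t\le C_2\,dt$ on $\mc{T}$.

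\textbf{An a priori $L^2$ bound via Gr\"onwall.} Using that $N^2-\langle N\rangle$ is a local martingale with $\langle N\rangle_t=\int_0^tN_{s-}^2\,d\langle\Theta\rangle_s$, we have $N_t^2=1+\text{(local martingale)}+\int_0^tN_{s-}^2\,d\langle\Theta\rangle_s$. Let $(\tau_n)_{n\ge1}$ be a localizing sequence reducing the local-martingale part. Stopping at $\tau_n$, taking expectations, and using $d\langle\Theta\rangle_s\le C_2\,ds$,
\begin{align*}
	E\big[N_{t\wedge\tau_n}^2\big]\le 1+C_2\int_0^tE\big[N_{s\wedge\tau_n}^2\big]\,ds,\qquad t\in\mc{T},
\end{align*}
so Gr\"onwall's inequality yields $E[N_{t\wedge\tau_n}^2]\le e^{C_2t}\le e^{C_2T}$ for every $n$. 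Letting $n\to\infty$ and applying Fatou's lemma gives $\sup_{t\in\mc{T}}E[N_t^2]\le e^{C_2T}<\infty$. Then, for each fixed $t$, the family $\{N_{t\wedge\tau_n}\}_{n\ge1}$ is bounded in $L^2$, hence uniformly integrable, so $N_{t\wedge\tau_n}\to N_t$ in $L^1$; passing to the limit in $E[N_{t\wedge\tau_n}\mid\mc{G}_s]=N_{s\wedge\tau_n}$ gives $E[N_t\mid\mc{G}_s]=N_s$ for all $s\le t\le T$. Thus $N=\mc{E}(\Theta)$ is a $(\mathbb{G},P)$-martingale, square-integrable by the bound just obtained.

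The main obstacle I anticipate is making this bootstrap airtight: one has to introduce the localizing sequence \emph{before} taking expectations, so the argument does not presuppose the finiteness of $E[N_t^2]$ that is being proved, and then verify that Fatou's lemma (for the a priori bound) and $L^2$-boundedness $\Rightarrow$ uniform integrability (for the martingale property) apply as stated. The remaining ingredients — the vanishing of the cross variation $\langle W,M\rangle$ and the identification of the jump part of $\langle\Theta\rangle$ with $\int_0^{\cdot}\|D_0(u)X_{s-}-\mathbf{1}\|^2_{X_{s-}}\,ds$ via \eqref{psi} — are routine but should be recorded explicitly.
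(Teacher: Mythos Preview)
Your argument is correct and follows essentially the same route as the paper: compute $\langle\mc{E}(\Theta)\rangle_t=\int_0^t\mc{E}(\Theta)_{s-}^2\big(|\langle\theta\sigma,X_s\rangle|^2+\|D_0(u)X_{s-}-\mathbf{1}\|^2_{X_{s-}}\big)\,ds$, localize, apply Gr\"onwall together with Lemma~\ref{boundedness}, and pass to the limit. The only cosmetic difference is that the paper passes to the limit via monotone convergence on $1_{\{t\le T_n\}}\mc{E}(\Theta)_t^2$ rather than Fatou, and leaves the upgrade from local martingale to true martingale implicit, whereas you spell out the $L^2$-boundedness $\Rightarrow$ uniform integrability step.
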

	%\iffalse
	\begin{proof}
		The predictable quadratic variation of $\mc{E}(\Theta)$ is given by
		\begin{equation*}
			\langle\mc{E}(\Theta)\rangle_t=\int_0^t\mc{E}(\Theta)^2_{s-}\left(|\langle\theta\sigma,X_s\rangle|^2+\|D_0(u)X_{s-}-\mathbf{1}\|^2_{X_{s-}}\right)ds.
		\end{equation*}
		Since $\mc{E}(\Theta)_{0}=1$, then using It\^{o}'s formula yields 
		\begin{align*}
			\mc{E}(\Theta)_{t}^2=1+2\int_0^t\mc{E}(\Theta)_{t}d\mc{E}(\Theta)_{t}+\langle\mc{E}(\Theta)\rangle_t.
		\end{align*}
		Hence, for some localizing sequence $T_n\uparrow\infty$ such that $\mc{E}(\Theta)_{T_n}$ and $\langle\mc{E}(\Theta)\rangle_{T_n}$ are bounded martingales,
		\begin{align*}
			E\left[1_{\{t\leq T_n\}}\mc{E}(\Theta)^2_t\right]&\leq E\left[\mc{E}(\Theta)_{t\wedge T_n}^2\right]\\
			&=E\left[1+\langle\mc{E}(\Theta)\rangle_{t\wedge T_n}\right]\\
			&=1+\int_0^t E\bigg[1_{\{s\leq T_n\}}\mc{E}(\Theta)^2_{s-}\left(|\langle\theta\sigma,X_s\rangle|^2+\|D_0(u)X_{s-}-\mathbf{1}\|^2_{X_{s-}}\right)\bigg]ds.
		\end{align*}
		By Gr\"onwall's Inequality and Lemma \ref{boundedness},
		\begin{align*}
			E\left[1_{\{t\leq T_n\}}\mc{E}(\Theta_t)^2\right]&\leq\exp\left[\int_0^t\left(|\langle\theta\sigma,X_s\rangle|^2+\|D_0(u)X_{s-}-\mathbf{1}\|^2_{X_{s-}}\right)ds\right]\\
			&\leq\exp\left[\int_0^tC_2\phantom{|}ds\right]=e^{C_2t}.
		\end{align*}
		By monotone convergence, it follows that
		\begin{align*}
			E\left[\mc{E}(\Theta_t)^2\right]\leq e^{C_2T}<\infty,
		\end{align*}
		which proves the result.
	\end{proof}
	%\fi
	\begin{mydef}\label{balanced driver}
		Consider a driver $F:\mc{T}\times\mc{S}\times\textsf{U}\times\mathbb{R}\times\mathbb{R}^N\mapsto\mathbb{R}$ and the maps $\varphi_1:\Omega\times\mc{T}\mapsto\mathbb{R}$ and $\varphi_2,\varphi_2':\Omega\times\mc{T}\mapsto\mathbb{R}^N$.  Suppose there exists a map $\beta:\mc{T}\times\mc{S}\times\textsf{U}\times\mathbb{R}\times\mathbb{R}^N\times\mathbb{R}^N\mapsto\mathbb{R}^N$ such that
		\begin{itemize}
			\item $\beta$ is predictable in $(t,x,u)\in\mc{T}\times\mc{S}\times\textsf{U}$ and Borel measurable in $(\varphi_1,\varphi_2,\varphi_2')\in\mathbb{R}\times\mathbb{R}^N\times\mathbb{R}^N$;
			\item $\beta>-\mathbf{1}$ for all $(\varphi_1,\varphi_2,\varphi_2')$ and $dP\times dt$ almost all $t$; and
			\item for $dP\times dt$ almost all $t$, and for all $(\varphi_1,\varphi_2,\varphi_2')$,
			\begin{align*}
				\ds\sum_{m=1}^N(\varphi^{(m)}_2-\varphi'^{(m)}_2)\beta(t,x,u,\varphi_1,\varphi_2,\varphi_2')^{(m)}\Psi_t^{(m)}=F(t,x,u,\varphi_1,\varphi_2)-F(t,x,u,\varphi_1,\varphi_2'),
			\end{align*}
		\end{itemize}
		where $\varphi^{(m)}_2,\varphi'^{(m)}_2,\beta(t,x,u,\varphi_1,\varphi_2,\varphi_2')^{(m)}$ and $\Psi_t^{(m)}$ are the $m$th element of their respective vectors. Then, $F$ is \textbf{balanced}.
	\end{mydef}
	
	Let $M^2(\mc{T};\mathbb{R}^{N},\mc{G}_t)$ be the set of $\mathbb{R}^{K\times N}$-valued $\mc{G}_t$-adapted square-integrable processes over $\Omega\times\mc{T}$, and $P^2(\mc{T};\mathbb{R}^{N},\mc{G}_t)$ be the set of $\mathbb{R}^N$-valued $\mc{G}_t$-predictable processes $\{\varphi_t\}_{t\in\mc{T}}$ such that $E(\int_0^t\lVert\varphi_s\rVert_{X_{s-}}^2ds)<\infty$. Write
	\begin{align*}
		F(t,x,u,\varphi_1,\varphi_2)&=\varphi_1\langle\theta\sigma,x\rangle+\sum_{m=1}^N\varphi^{(m)}_2[D_0(u)x-\mathbf{1}]^{(m)}\Psi_t^{(m)}.
	\end{align*} We know that $F$ is Lipschitz in $\varphi_1$ and $\varphi_2$ since their coefficients are bounded. In addition, $F$ is balanced since $\beta=D_0(u)\x_{t-}-\mathbf{1}$ satisfies the conditions in Definition \ref{balanced driver}. We then have the following proposition. 
	
	\begin{mypr}\label{J as solution to BSDE}
		For a given control $u\in\mc{U}$, the process
		\begin{equation*}
			\mc{J}(t,z,x,u)=\expoc\left[g(Z_T^u,X_T^u)\right]
		\end{equation*}
		is the unique solution to the BSDE
		\begin{equation*}
			\begin{cases}
				d\mc{J}(t,z,x,u)&=-F\left(t,\x_t,u(t),\varphi_1(t),\varphi_2(t)\right)dt+\varphi_1(t)dW_t+\langle\varphi_2(t-),dM_t\rangle,\\
				\mc{J}(T,z,x,u)&=g(\z_T,\x_T),
			\end{cases}
		\end{equation*}
		such that $(\varphi_1,\varphi_2)\in M^2(\mc{T};\mathbb{R}^{K\times N},\mc{G}_t)\times P^2(\mc{T};\mathbb{R}^{K\times N},\mc{G}_t)$.	
	\end{mypr}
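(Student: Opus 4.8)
The plan is to identify $\mc{J}(\cdot,z,x,u)$ with the $Y$-component of the solution of the stated linear BSDE, using a martingale representation under $Q^u$ together with the two changes of measure built in Sections~\ref{change of measure for markov chains} and~\ref{esscher}, and then to invoke the general uniqueness theory for balanced Lipschitz drivers \cite{cohen:bsde}. First I would set $Y_t:=E^{Q^u}\!\big[g(\z_T,\x_T)\mid\mc{G}_t\big]$, so that $Y_T=g(\z_T,\x_T)$ and $Y$ is a $(\mathbb{G},Q^u)$-martingale; by the joint Markov property of $(\z,\x)$ under $Q^u$ (Proposition~\ref{girsanovchain} for the chain and \eqref{zdynamics} for $\z$) one has $Y_t=\mc{J}(t,\z_t,\x_t,u)$. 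Square-integrability of $Y$ follows from the Lipschitz growth of $g$ in Assumption~\ref{assumption:Lipschitz}, the usual $L^2$-estimates for the linear SDE \eqref{zdynamics}, and Lemma~\ref{squareintegrablemartingale} (the density $\Lambda^u=\mc{E}(\Theta)$ is square-integrable), so $g(\z_T,\x_T)\in L^2(Q^u)$.

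Next I would apply the predictable representation property. Since $\mathbf{W}$ and $\mathbf{X}$ are $P$-independent and $\mathbb{G}$ is generated by them, $(\mathbf{W},\mathbf{M})$ has the PRP for $\mathbb{G}$ under $P$; applying Girsanov, this transfers to $Q^u$ with $(\mathbf{W}^{\theta},\mathbf{M}^u)$ as integrators. Hence there are predictable $\varphi_1,\varphi_2$ — lying in the stated spaces by Assumption~\ref{square integrable} — with
\begin{align*}
Y_t=Y_0+\int_0^t\varphi_1(s)\,dW^{\theta}_s+\int_0^t\langle\varphi_2(s-),dM^u_s\rangle .
\end{align*}
Substituting $dW^{\theta}_s=dW_s-\theta_s\sigma_s\,ds$ and, from comparing \eqref{chain} with the $Q^u$-dynamics \eqref{chainb}, the identity $dM^u_s=dM_s-(B(u(s))-A)X_s\,ds$, one obtains
\begin{align*}
dY_t=-\Big[\varphi_1(t)\theta_t\sigma_t+\big\langle\varphi_2(t-),(B(u(t))-A)X_t\big\rangle\Big]dt+\varphi_1(t)\,dW_t+\langle\varphi_2(t-),dM_t\rangle .
\end{align*}

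It then remains to check that the drift coincides with $-F(t,\x_t,u(t),\varphi_1(t),\varphi_2(t))$. For the first summand, $\theta_t\sigma_t=\langle\theta\sigma,X_t\rangle$ by definition of the regime-switching parameters, reproducing $\varphi_1\langle\theta\sigma,X_t\rangle$. For the second, I would evaluate at $X_{t-}=e_j$ and use the algebra from the proof of Lemma~\ref{boundedness}, namely $\Psi_t\big(D_0(u(t))X_{t-}-\mathbf{1}\big)=B(u(t))X_{t-}$ and $\Psi_tX_{t-}=-AX_{t-}$, together with the gauge invariance $\langle\varphi_2,dM\rangle=\langle\varphi_2+c\mathbf{1},dM\rangle$ arising from $\mathbf{1}^{\top}M_t\equiv 0$; taking the balanced representative of $\varphi_2$ (Definition~\ref{balanced driver} with $\beta=D_0(u)\x_{t-}-\mathbf{1}$) one identifies $\langle\varphi_2(t-),(B(u(t))-A)X_t\rangle$ with the $\Psi_t$-weighted sum $\sum_m\varphi_2^{(m)}[D_0(u(t))X_t-\mathbf{1}]^{(m)}\Psi_t^{(m)}$ in $F$. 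Hence $(Y,\varphi_1,\varphi_2)$ solves the BSDE with the required terminal condition.

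Finally, for uniqueness: $F$ is Lipschitz in $(\varphi_1,\varphi_2)$ since its coefficients are bounded, and it is balanced; the existence-and-uniqueness theory for BSDEs driven by a Brownian motion and a finite-state Markov chain with balanced Lipschitz drivers \cite{cohen:bsde} then yields a unique solution in $M^2(\mc{T};\mathbb{R}^{K\times N},\mc{G}_t)\times P^2(\mc{T};\mathbb{R}^{K\times N},\mc{G}_t)$, which must be the $(Y,\varphi_1,\varphi_2)$ above, and $\mc{J}(\cdot,z,x,u)=Y$. I expect the main obstacle to be the drift-identification step: matching the finite-variation term exactly with $F$ forces careful handling of the non-uniqueness of $\varphi_2$ and of the balancedness normalisation, and one must also verify that the predictable representation under $Q^u$ delivers $(\varphi_1,\varphi_2)$ in the correct $L^2$-type spaces; the remainder is routine Girsanov and SDE bookkeeping.
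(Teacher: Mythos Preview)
Your proposal is correct and reaches the same conclusion, but the route is genuinely dual to the paper's. You start from the $Q^u$-martingale $Y_t=E^{Q^u}[g(\z_T,\x_T)\mid\mc{G}_t]$, invoke the predictable representation property under $Q^u$ with integrators $(W^{\theta},M^u)$, and then pull the representation back to $P$ via Girsanov to read off the BSDE with driver $F$. The paper instead takes a verification approach: it starts from the BSDE (whose solution exists by the balanced--Lipschitz theory of \cite{cohen:bsde}), multiplies the solution by the density $\mc{E}(\Theta)$, and checks via the product rule that the finite-variation parts cancel, so that $\mc{J}\,\mc{E}(\Theta)$ is a $P$-martingale; Bayes' rule then identifies $\mc{J}$ with the conditional expectation. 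Your approach is more constructive and makes the origin of $\varphi_1,\varphi_2$ transparent, but it requires you to justify the PRP under $Q^u$ and to carry out the drift identification $\langle\varphi_2,(B(u)-A)X\rangle=\sum_m\varphi_2^{(m)}[D_0(u)X-\mathbf{1}]^{(m)}\Psi_t^{(m)}$ explicitly (your use of the identities $\Psi_t(D_0(u)X_{t-}-\mathbf{1})=B(u)X_{t-}$ and $\Psi_tX_{t-}=-AX_{t-}$, both extractable from the computation in Lemma~\ref{boundedness}, is the right way to do this, modulo the $\mathbf{1}$-gauge freedom you already noted). The paper's approach sidesteps both issues: it never appeals to a representation theorem and the drift cancellation falls out algebraically from $d[\mc{J},\Theta]_t$, at the cost of being less self-contained about where $(\varphi_1,\varphi_2)$ come from.
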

	%\iffalse
	\begin{proof}
		We know that $F$ is Lipschitz in $\varphi_1$ and $\varphi_2$ since their coefficients are bounded. In addition, $F$ is balanced since $\beta=D_0(u)\x_{t-}-\mathbf{1}$ satisfies the conditions in Definition \ref{balanced driver}.
		
		By the product rule,
		\begin{align*}
			d\left(\mc{J}(t,z,x,u)\mc{E}(\Theta)_{t}\right)&=\mc{J}(t-,z,x,u)d\mc{E}(\Theta)_{t}+\mc{E}(\Theta)_{t-}d\mc{J}(t,z,x,u)+d[\mc{J},\mc{E}(\Theta)]_t
		\end{align*}
		Since $d\langle M\rangle_t=\Psi_tdt$, then 
		\begin{align*}
			\frac{d\left(\mc{J}(t,z,x,u)\mc{E}(\Theta)_{t}\right)}{\mc{E}(\Theta)_{t-}}
			&=\mc{J}(t-,z,x,u)d\Theta_t+d\mc{J}(t,z,x,u)+d[\mc{J},\Theta]_t\\
			&=\mc{J}(t-,z,x,u)\left[\langle\theta\sigma,x\rangle dW_t+\langle D_0(u)x-\mathbf{1}, dM_t\rangle\right]\\
			&\quad-\bigg[\varphi_1(t)\langle\theta\sigma,x\rangle+\sum_{m=1}^N\varphi^{(m)}_2(t)[D_0(u)x-\mathbf{1}]^{(m)}\Psi_t^{(m)}\bigg]dt\\
			&\quad+\varphi_1(t)dW_t+\langle\varphi_2(t-),dM_t\rangle+\varphi_1(t)\langle\theta\sigma,x\rangle dt\\
			&\quad+\sum_{m=1}^N\varphi^{(m)}_2(t)[D_0(u)x-\mathbf{1}]^{(m)}\Psi_t^{(m)}dt\\
			&=\left[\varphi_1(t)+\mc{J}(t,z,x,u)\langle\theta\sigma,x\rangle\right]dW_t\\
			&\quad+\langle\varphi_2(t-)+ \mc{J}(t-,z,x,u)(D_0(u)x-\mathbf{1}),dM_t\rangle.	
		\end{align*}
		
		The right-hand side of the last equality is a local martingale because the standard Brownian motion $\{W_t\}_{t\in\mc{T}}$ and $\{M_t\}_{t\in\mc{T}}$ are martingales by definition. Therefore, $\mc{J}(t,z,x,u)\mc{E}(\Theta)_{t}$ is also a local martingale.
		
		Write
		\begin{align*}
			\ol{\mc{J}}&=\left(\int_0^T\|\mc{J}(t,Z_t,X_t,u(t))\langle\theta\sigma,X_{t}\rangle\|^2dt\right)^{1/2}.
		\end{align*}
		Then,
		\begin{align*}
			\ol{\mc{J}}&\leq N\left(\max_{X_t\in\mc{S}}\sup_{t\in\mc{T}}|\mc{J}(t,Z_t,X_t,u(t))|^2\right)^{1/2}\left(\int_0^T\|\langle\theta\sigma,X_t\rangle\|^2dt\right)^{1/2}\\
			&\leq\frac{N}{2}\left(\max_{X_t\in\mc{S}}\sup_{t\in\mc{T}}|\mc{J}(t,Z_t,X_t,u(t))|^2+\int_0^T\|\langle\theta\sigma,X_t\rangle\|^2dt\right)<\infty.
		\end{align*}
		Write $\mc{H}^1$ for the set of integrable martingales. By the Burkholder-Davis-Gundy (BDG) Inequality,
		$$\left\{\int_0^s(\varphi_1(s)+\mc{J}(s,Z_s,X_s,u(s))\langle\theta\sigma,X_s\rangle)dW_s\right\}_{s\in(0,T]}\in\mc{H}^1.$$
		Using a similar argument,
		\begin{align*}
			\bigg\{\int_0^s\langle\varphi_2(s-)+ \mc{J}(s-,Z_{s-},X_{s-},u(s-))(D_0(u)X_{s-}-\mathbf{1}),dM_s\rangle\bigg\}_{s\in(0,T]}\in\mc{H}^1.
		\end{align*} 
		By Lemma \ref{squareintegrablemartingale}, we know that $\mc{E}(\Theta)_t$ is a square integrable martingale. Thus, 
		\begin{align*}
			\{\mc{J}(t,Z_t,X_t,u(t))\mc{E}(\Theta)_t\}_{t\geq 0}\in\mc{H}^1.
		\end{align*}
		It then follows that
		\begin{align*}
			E^P\left[\mc{E}(\Theta)_Tg(\z_T,\x_T)|\mc{F}_t\right]&=E^P\left[\mc{E}(\Theta)_T\mc{J}(T,z,x,u)|\mc{F}_t\right]\\
			&=\mc{E}(\Theta)_t\mc{J}(t,z,x,u).
		\end{align*}
		By Bayes' Rule,
		\begin{align*}
			\mc{J}(t,z,x,u)&=\frac{1}{\mc{E}(\Theta)_t}E^P[\mc{E}(\Theta)_Tg(\z_T,\x_T)|\mc{F}_t]\\
			&=E^{Q^{u}}[g(\z_T,\x_T)].
		\end{align*}
	\end{proof}
	%\fi
	Define the following value processes
	\begin{equation}\label{valinf}
		\ul{V}(t,z,x):=\essinf_{u\in\mc{U}}\mc{J}(t,z,x,u)
	\end{equation}
	and
	\begin{equation}\label{valsup}
		\ol{V}(t,z,x):=\esssup_{u\in\mc{U}}\mc{J}(t,z,x,u).
	\end{equation}
	
	The next objective is to show that the value processes \eqref{valinf} and \eqref{valsup} have c\`adl\`ag modifications which are solutions to some BSDEs. We first state the following lemma.
	\begin{mylm}\label{existence of inf and sup}
		Recall that $F$ is a standard balanced driver in Proposition \ref{J as solution to BSDE}. Then, for each $t\in\mc{T}$, $x\in\mc{S}$, $\varphi_1\in\mathbb{R}$, $\varphi_2\in\mathbb{R}^N$, and $u\in\textsf{U}$,
		\begin{enumerate}
			\item[(i)] the maps $(x,\varphi_1,\varphi_2)\mapsto F(t,x,u,\varphi_1,\varphi_2)$ have a common uniform Lipschitz constant K;
			\item[(ii)] $\essinf_{u\in\textsf{U}}\left[\frac{b_{ji}(u)}{a_{ji}}-1\right]> -1$ for $i\neq j$, where $D=\left[\frac{b_{ji}(u)}{a_{ji}}\right]$ as defined in Section \ref{change of measure for markov chains};
			\item[(iii)] $\sup_{u\in\textsf{U}}\{|F(t,x,u,0,0)|^2\}$ is bounded by a predictable $dt\times dP$-integrable process;
			\item[(iv)] the maps $u\mapsto\frac{b_{ji}(u)}{a_{ji}}-1$ are continuous, for fixed $(t,x,\varphi_1,\varphi_2)$, and $\textsf{U}$ is a countable union of compact metrizable subsets of itself.
		\end{enumerate}
		Furthermore, there is a version of the mappings
		\begin{align*}
			\ul{F}(t,x,\varphi_1,\varphi_2)&=\essinf_{u\in\textsf{U}}F(t,x,u,\varphi_1,\varphi_2)\\
			\ol{F}(t,x,\varphi_1,\varphi_2)&=\esssup_{u\in\textsf{U}}F(t,x,u,\varphi_1,\varphi_2)
		\end{align*}
		which are standard balanced BSDE drivers.
	\end{mylm}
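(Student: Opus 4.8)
The plan is to verify the four numbered conditions one by one, reading each directly off the explicit form of the driver
\begin{equation*}
F(t,x,u,\varphi_1,\varphi_2)=\varphi_1\langle\theta\sigma,x\rangle+\sum_{m=1}^N\varphi^{(m)}_2[D_0(u)x-\mathbf{1}]^{(m)}\Psi_t^{(m)},
\end{equation*}
together with the boundedness estimates already established, and then to obtain the balanced versions of $\ul F$ and $\ol F$ by a measurable-selection/aggregation argument in the style of \cite{cohen:bsde, bertsekas:measurableselection}. I would dispose of the easy items first. Condition (iii) is immediate: $F(t,x,u,0,0)=0$ for all $(t,x,u)$, so $\sup_{u\in\textsf{U}}|F(t,x,u,0,0)|^2$ is the zero process, which is certainly predictable and $dt\times dP$-integrable. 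Condition (iv) follows from the standing hypotheses on $\textsf{U}$ and $B(u)$: the map $u\mapsto b_{ji}(u)/a_{ji}$ is continuous because $b_{ji}$ is (in the construction of Section \ref{example of rate matrix B} it is the linear map $u_i\mapsto u_i a_{ji}$), while $\textsf{U}$, being a compact subset of $\mathbb{R}^N_+$, is compact metrizable and hence trivially a countable union of compact metrizable subsets of itself.

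For condition (ii) I would combine positivity of the off-diagonal intensities with compactness. Since $a_{ji}>0$ for $i\neq j$ and $u\mapsto b_{ji}(u)/a_{ji}$ is continuous and strictly positive on $\textsf{U}$ (strict positivity is already needed for the equivalence $P^{(u)}\sim P$ in Section \ref{change of measure for markov chains}, and holds e.g.\ with $b_{ji}(u)/a_{ji}=u_i>0$ in Section \ref{example of rate matrix B}), compactness yields $\min_{u\in\textsf{U}}b_{ji}(u)/a_{ji}>0$, whence $\essinf_{u\in\textsf{U}}[b_{ji}(u)/a_{ji}-1]=\min_{u\in\textsf{U}}b_{ji}(u)/a_{ji}-1>-1$. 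For condition (i), note that $F$ is affine in $(\varphi_1,\varphi_2)$: the $\varphi_1$-coefficient is $\langle\theta\sigma,x\rangle$ and the $\varphi_2$-coefficient, in the pairing induced by $\Psi_t$, is $D_0(u)x-\mathbf{1}$. By Lemma \ref{boundedness}, $|\langle\theta\sigma,x\rangle|^2+\|D_0(u)x-\mathbf{1}\|^2_{X_{t-}}<C_1$ uniformly in $(t,x,u)$, so Cauchy--Schwarz for the positive semidefinite form $\Psi_t$ gives
\begin{equation*}
|F(t,x,u,\varphi_1,\varphi_2)-F(t,x,u,\varphi_1',\varphi_2')|\leq\sqrt{C_1}\bigl(|\varphi_1-\varphi_1'|+\|\varphi_2-\varphi_2'\|_{X_{t-}}\bigr),
\end{equation*}
so a common uniform Lipschitz constant (namely $\sqrt{C_1}$) works; Lipschitz dependence on $x\in\mc{S}$ is automatic since $\mc{S}$ is finite.

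For the concluding claim, conditions (i)--(iv) are exactly the hypotheses under which the family $\{F(t,x,u,\cdot,\cdot)\}_{u\in\textsf{U}}$ can be aggregated: continuity in $u$ together with $\textsf{U}$ a countable union of compact metrizable sets permits a predictable measurable selection of $\varepsilon$-optimal controls, so $\ul F$ and $\ol F$ admit predictable versions; the uniform Lipschitz constant from (i) is inherited since an infimum (or supremum) of $K$-Lipschitz functions is $K$-Lipschitz, and (iii) gives integrability of $\ul F(\cdot,0,0)$ and $\ol F(\cdot,0,0)$, so both are standard drivers; and the uniform lower bound from (ii) on $D_0(u)x-\mathbf{1}$ allows one to build balancing maps $\ul{\beta},\ol{\beta}>-\mathbf{1}$ from the selectors witnessing the extremum at a given pair $(\varphi_2,\varphi_2')$. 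I expect this last point --- checking that the essential infimum and supremum preserve the \emph{balanced} structure (Definition \ref{balanced driver}) with an admissible $\beta$, rather than only the Lipschitz property --- to be the main obstacle, and the step where the precise statement of the aggregation result of \cite{cohen:bsde} (and the corresponding development in \cite{elliott:stochastic}) must be matched to the present driver.
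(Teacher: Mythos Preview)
Your verification is correct, and in fact considerably more explicit than the paper's own treatment: the paper's proof of this lemma consists solely of the citation ``See Lemma 19.3.8 of \cite{elliott:stochastic} for the proof.'' What you have written is essentially the check that the hypotheses of that abstract aggregation lemma are met by the specific driver $F$ at hand. Your arguments for (i)--(iv) are sound (in particular the use of Lemma \ref{boundedness} plus Cauchy--Schwarz for (i), and the observation $F(t,x,u,0,0)\equiv 0$ for (iii) are exactly right), and your closing paragraph correctly identifies that the preservation of the \emph{balanced} property under $\essinf/\esssup$ is the nontrivial point, which is precisely what the cited lemma in \cite{elliott:stochastic} supplies via measurable selection. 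So there is no gap; you have simply unpacked what the paper leaves to the reference.
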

	\begin{proof}
		See Lemma 19.3.8 of \cite{elliott:stochastic} for the proof.
	\end{proof}
	
	Using the previous lemma, we have the following result.
	\begin{mylm}\label{existence of H}
		Define the functions
		\begin{equation*}
			\ul{H}(t,z,x,\varphi_1,\varphi_2)=\essinf_{u\in\mc{U}}F(t,x,u,\varphi_1,\varphi_2)
		\end{equation*}
		and
		\begin{equation*}
			\ol{H}(t,z,x,\varphi_1,\varphi_2)=\esssup_{u\in\mc{U}}F(t,x,u,\varphi_1,\varphi_2).
		\end{equation*}
		Then there are versions of $\ul{H}$ and $\ol{H}$ which are balanced Lipschitz drivers for some BSDEs.
	\end{mylm}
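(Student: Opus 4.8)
The plan is to identify $\ul{H}$ with the driver $\ul{F}$ and $\ol{H}$ with $\ol{F}$ from Lemma~\ref{existence of inf and sup}, up to versions, and then to quote that lemma. What makes this identification legitimate is that the driver
$F(t,x,u,\varphi_1,\varphi_2)=\varphi_1\langle\theta\sigma,x\rangle+\sum_{m=1}^N\varphi_2^{(m)}[D_0(u)x-\mathbf{1}]^{(m)}\Psi_t^{(m)}$
depends on the control only through its value at the current time $t$ — there is no dependence on the past of $u$ — and that every constant function $u\equiv v$ with $v\in\textsf{U}$ is $\mathbb{F}^{\mathbf{X}}$-predictable and therefore belongs to $\mc{U}$. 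Since $F$ does not involve $z$, the $z$-slot of $\ul{H}$ and $\ol{H}$ is inert, and it is enough to treat $\ul{F},\ol{F}$ as functions of $(t,x,\varphi_1,\varphi_2)$.

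First I would prove $\ul{H}=\ul{F}$, the identity $\ol{H}=\ol{F}$ being entirely symmetric. For the inequality $\ul{H}\ge\ul{F}$: for any $u\in\mc{U}$ one has, pointwise in $(t,\omega)$, $F(t,x,u(t),\varphi_1,\varphi_2)\ge\essinf_{v\in\textsf{U}}F(t,x,v,\varphi_1,\varphi_2)=\ul{F}(t,x,\varphi_1,\varphi_2)$, and passing to the essential infimum over $u\in\mc{U}$ preserves this. For the reverse inequality, fix a countable dense subset $\{v_k\}_{k\ge1}$ of $\textsf{U}$; each constant control $u\equiv v_k$ lies in $\mc{U}$, so for $dt\times dP$-a.e.\ $(t,\omega)$,
$\ul{H}(t,x,\varphi_1,\varphi_2)\le\inf_{k\ge1}F(t,x,v_k,\varphi_1,\varphi_2)=\essinf_{v\in\textsf{U}}F(t,x,v,\varphi_1,\varphi_2)=\ul{F}(t,x,\varphi_1,\varphi_2),$
where the first equality uses the continuity of $u\mapsto b_{ji}(u)/a_{ji}$ (part (iv) of Lemma~\ref{existence of inf and sup}) — hence of $v\mapsto F(t,x,v,\varphi_1,\varphi_2)$ — together with the density of $\{v_k\}$ and the separability of $\textsf{U}$. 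Hence $\ul{H}=\ul{F}$ for each fixed $(x,\varphi_1,\varphi_2)$ up to a $dt\times dP$-null set, and declaring $\ul{H}:=\ul{F}$ (and $\ol{H}:=\ol{F}$) as the announced versions yields objects that are jointly measurable in all of their arguments and predictable in $(t,\omega)$.

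With $\ul{H}=\ul{F}$ and $\ol{H}=\ol{F}$ in hand, Lemma~\ref{existence of inf and sup} gives exactly what is claimed: $\ul{F}$ and $\ol{F}$ are standard balanced BSDE drivers, Lipschitz in $(\varphi_1,\varphi_2)$ with the common constant $K$ of part (i), with a balancing map obtained as in the proof of Proposition~\ref{J as solution to BSDE} (built from $D_0(\cdot)x-\mathbf{1}$ at an optimising control). The only point calling for genuine care is the passage from the essential infimum over the uncountable process class $\mc{U}$ to the pointwise minimum over the value set $\textsf{U}$; this is where the topological hypotheses on $\textsf{U}$ and the continuity in the control are actually used, and alternatively one could invoke a Filippov-type measurable selection to produce an optimising $u^\star(t,\omega,x,\varphi_1,\varphi_2)\in\textsf{U}$ directly. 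Everything else is routine, since $F$ is affine — hence globally Lipschitz — in $(\varphi_1,\varphi_2)$ with bounded coefficients and has no $z$-dependence.
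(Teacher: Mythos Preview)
Your proposal is correct and follows the same approach as the paper, which simply states that the result is a direct application of Lemma~\ref{existence of inf and sup}. You go further by explicitly justifying the identification of the essential infimum/supremum over the process class $\mc{U}$ with that over the value set $\textsf{U}$ --- a point the paper leaves implicit --- so your argument is in fact more complete than the paper's one-line proof.
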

	\begin{proof}
		This is a direct application of Lemma \ref{existence of inf and sup}.
	\end{proof}
	
	\begin{mypr}[Dynamic Programming Principle]\label{V has version}
		The value processes $\ul{V}$ and $\ol{V}$ have c\`adl\`ag modifications, which are the respective solutions to the BSDEs
		\begin{equation}\label{V inf}
			\begin{cases}
				d\ul{V}(t,z,x)&=-\ul{H}(t,\x_t,\varphi_1(t),\varphi_2(t))dt+\varphi_1(t)dW_t+\langle\varphi_2(t-),dM_t\rangle\\
				\ul{V}(T,z,x)&=g(\z_T,\x_T)
			\end{cases}
		\end{equation}
		and
		\begin{equation}\label{V sup}
			\begin{cases}
				d\ol{V}(t,z,x)&=-\ol{H}(t,\x_t,\varphi_1(t),\varphi_2(t))dt+\varphi_1(t)dW_t+\langle\varphi_2(t-),dM_t\rangle\\
				\ol{V}(T,z,x)&=g(\z_T,\x_T).
			\end{cases}
		\end{equation}
	\end{mypr}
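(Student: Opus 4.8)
The plan is to derive this from the existing BSDE results for the fixed-control functionals $\mc{J}(t,z,x,u)$ together with the comparison/stability theory for balanced Lipschitz BSDEs, exactly in the spirit of the dynamic programming argument in \cite{elliott:stochastic}. The starting point is Proposition \ref{J as solution to BSDE}, which tells us that for each $u\in\mc{U}$ the process $\mc{J}(\cdot,z,x,u)$ solves a BSDE with driver $F(t,\x_t,u(t),\varphi_1,\varphi_2)$, terminal value $g(\z_T,\x_T)$, and square-integrable control pair $(\varphi_1,\varphi_2)\in M^2\times P^2$. Since all these BSDEs share the same terminal condition and the family of drivers $\{F(t,x,u,\cdot,\cdot)\}_{u}$ is uniformly Lipschitz (Lemma \ref{existence of inf and sup}(i)) with the $(0,0)$-values uniformly $dt\times dP$-dominated (Lemma \ref{existence of inf and sup}(iii)), the solutions $\mc{J}(\cdot,z,x,u)$ are bounded in $M^2$ uniformly in $u$; this is what makes the essential infimum and supremum over the uncountable index set $\mc{U}$ well-defined and finite.

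The core step is to show that the candidate driver-minimized (resp. maximized) BSDEs \eqref{V inf} and \eqref{V sup} actually have $\ul V$ and $\ol V$ as solutions. I would proceed in two directions. First, by the comparison theorem for balanced Lipschitz BSDEs (which applies because $\ul H,\ol H$ are balanced Lipschitz drivers by Lemma \ref{existence of H}, and because $F(t,x,u,\cdot,\cdot)\ge \ul H$ pointwise), the unique solution $Y$ of \eqref{V inf} satisfies $Y\le \mc{J}(\cdot,z,x,u)$ for every $u\in\mc{U}$, hence $Y\le \ul V$ after taking the essential infimum. For the reverse inequality one uses a measurable-selection argument: given $\varepsilon>0$, one selects a predictable control $u^\varepsilon$ (using the continuity in $u$ and the $\sigma$-compact structure of $\textsf{U}$ from Lemma \ref{existence of inf and sup}(ii),(iv)) for which $F(t,\x_t,u^\varepsilon(t),\varphi_1^Y(t),\varphi_2^Y(t))\le \ul H(t,\x_t,\varphi_1^Y(t),\varphi_2^Y(t))+\varepsilon$, plug $(\varphi_1^Y,\varphi_2^Y)$ from the solution of \eqref{V inf} into the driver, and invoke comparison again to get $\mc{J}(\cdot,z,x,u^\varepsilon)\le Y+C\varepsilon$; letting $\varepsilon\downarrow 0$ gives $\ul V\le Y$. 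The $\ol V$ case is symmetric with the inequalities reversed. The c\`adl\`ag modification then comes for free from the c\`adl\`ag property of BSDE solutions in this filtration.

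The main obstacle I anticipate is the measurable selection: producing a genuinely $\mathbb{F}^{\mathbf X}$-predictable, $\textsf{U}$-valued control that $\varepsilon$-attains the essential infimum of $u\mapsto F(t,x,u,\varphi_1^Y(t),\varphi_2^Y(t))$ simultaneously for $dt\times dP$-almost every $(t,\omega)$, rather than merely pointwise. This is exactly where hypotheses (ii)--(iv) of Lemma \ref{existence of inf and sup} are needed — continuity of $u\mapsto b_{ji}(u)/a_{ji}$, the uniform lower bound keeping $\beta>-\mathbf 1$, and the $\sigma$-compactness of $\textsf{U}$ — so that a classical measurable-selection theorem (e.g.\ of the type in \cite{bertsekas:measurableselection}) applies to the predictable $\sigma$-field. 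A secondary technical point is checking that the control pair $(\varphi_1,\varphi_2)$ produced by the $\ul H$-BSDE still lies in $M^2(\mc{T};\mathbb{R}^{N},\mc{G}_t)\times P^2(\mc{T};\mathbb{R}^{N},\mc{G}_t)$, which follows from the a priori estimates for balanced Lipschitz BSDEs together with Lemma \ref{existence of inf and sup}(iii); I would cite Lemma 19.3.8 and the surrounding BSDE machinery of \cite{elliott:stochastic} for both the existence/uniqueness and the comparison principle rather than reproving them here.
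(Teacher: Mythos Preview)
Your proposal is correct and follows essentially the same route as the paper: one inequality by comparison of the $\ul H$-BSDE solution against each $\mc J(\cdot,z,x,u)$, the reverse via an $\varepsilon$-optimal predictable control obtained by measurable selection, then $\varepsilon\to 0$. The paper invokes Filippov's Implicit Function Theorem (Theorem 21.3.4 of \cite{elliott:stochastic}) for the selection step and makes the constant explicit by observing that $\ul Y(t,z,x)+\varepsilon(T-t)$ solves the BSDE with driver $\ul H+\varepsilon$, so comparison gives $\mc J(t,z,x,\ul u^{\varepsilon})\le \ul Y(t,z,x)+\varepsilon(T-t)$ directly.
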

	%\iffalse
	\begin{proof}
		As discussed in \cite{pardoux:bsde} for Brownian motions and \cite{cohen:bsde} for Markov chains, BSDEs with drivers $\ul{H}$ and $\ol{H}$ defined in Lemma \ref{existence of H} have c\`adl\`ag solutions, which we denote by $\ul{Y}$ and $\ol{Y}$, respectively. 
		
		By definition, for all $u\in\mc{U}$
		\begin{equation*}
			\ul{H}(t,x,\varphi_1,\varphi_2)\leq F(t,x,u,\varphi_1,\varphi_2)
		\end{equation*}
		and
		\begin{equation*}
			\ol{H}(t,x,\varphi_1,\varphi_2)\geq F(t,x,u,\varphi_1,\varphi_2).
		\end{equation*}
		By the comparison theorem for BSDEs with Brownian motion and Markov chains \cite{cohen:bsde,peng:comparison,cohen:generalcomparison} and Proposition \ref{J as solution to BSDE}, up to indistinguishability,
		\begin{equation*}
			\ul{Y}(t,z,x)\leq \mc{J}(t,z,x,u)
		\end{equation*}
		and
		\begin{equation*}
			\ol{Y}(t,z,x)\geq \mc{J}(t,z,x,u),
		\end{equation*}
		for all $u\in\mc{U}$. By Filippov's Implicit Function Theorem (see Theorem 21.3.4 of \cite{elliott:stochastic}), for every $\epsilon>0$, there exist predictable controls $\ul{u}^{\epsilon},\ol{u}^{\epsilon}\in\mc{U}$ such that
		\begin{equation*}
			F(t,x,\ul{u}^{\epsilon},\varphi_1,\varphi_2)\leq \ul{H}(t,x,\varphi_1,\varphi_2)+\epsilon
		\end{equation*}
		and
		\begin{equation*}
			F(t,x,\ol{u}^{\epsilon},\varphi_1,\varphi_2)\geq \ol{H}(t,x,\varphi_1,\varphi_2)-\epsilon.
		\end{equation*}
		
		Since $\ul{Y}(t,z,x)+\epsilon(T-t)$ and $\ol{Y}(t,z,x)-\epsilon(T-t)$ are the respective solutions to BSDEs with drivers $\ul{H}(t,x,\varphi_1,\varphi_2)+\epsilon$ and $\ol{H}(t,x,\varphi_1,\varphi_2)-\epsilon$, then up to indistinguishability, by the comparison theorem,
		\begin{equation*}
			\mc{J}(t,z,x,\ul{u}^{\epsilon})\leq\ul{Y}(t,z,x)+\epsilon(T-t)
		\end{equation*}
		and
		\begin{equation*}
			\mc{J}(t,z,x,\ol{u}^{\epsilon})\geq\ol{Y}(t,z,x)-\epsilon(T-t).
		\end{equation*}
		Letting $\epsilon\rightarrow 0$, then for every $t\in\mc{T}$
		\begin{equation*}
			\ul{Y}(t,z,x)=\essinf_{u\in\mc{U}}J(t,z,x,u)=\ul{V}(t,z,x)
		\end{equation*}
		and
		\begin{equation*}
			\ol{Y}(t,z,x)=\esssup_{u\in\mc{U}}J(t,z,x,u)=\ol{V}(t,z,x).
		\end{equation*}
		Thus, $\ul{Y}$ and $\ol{Y}$ are versions of $\ul{V}$ and $\ol{V}$, respectively.
	\end{proof}
	%\fi
	The following proposition states the minimum and maximum principles for the control problems.
	\begin{mypr}[Minimum/Maximum Principles]\label{minmax principle}
		Let $\big(\ul{V},\ul{\varphi}_1,\ul{\varphi}_2\big)$ and $\big(\ol{V},\ol{\varphi}_1,\allowbreak\ol{\varphi}_2\big)$ be the resepective solutions to the BSDEs with drivers $\ul{H}$ and $\ol{H}$ and terminal value $g(\z_T,\x_T)$. The controls $\ul{u},\ol{u}\in\mc{U}$ are optimal if and only if
		\begin{equation*}
			F\big(t,\x_t,\ul{u},\ul{\varphi}_1,\ul{\varphi}_2\big)=\ul{H}\big(t,\x_t,\ul{\varphi}_1,\ul{\varphi}_2\big)
		\end{equation*} 
		and
		\begin{equation*}
			F\big(t,\x_t,\ol{u},\ol{\varphi}_1,\ol{\varphi}_2\big)=\ol{H}\big(t,\x_t,\ol{\varphi}_1,\ol{\varphi}_2\big).
		\end{equation*} 
	\end{mypr}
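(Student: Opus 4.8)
The plan is to deduce the minimum/maximum principle directly from the Dynamic Programming Principle (Proposition \ref{V has version}) together with the comparison theorem for BSDEs. The key observation is that, by Proposition \ref{J as solution to BSDE}, for any admissible control $u\in\mc{U}$ the process $\mc{J}(t,z,x,u)$ solves the BSDE with driver $F(t,\x_t,u(t),\cdot,\cdot)$ and terminal value $g(\z_T,\x_T)$, while by Proposition \ref{V has version} the value process $\ul{V}$ solves the BSDE with driver $\ul{H}$ and the same terminal value. Since by definition $\ul{H}(t,x,\varphi_1,\varphi_2)\le F(t,x,u,\varphi_1,\varphi_2)$ for every $u\in\mc{U}$, the comparison theorem for BSDEs (with both a Brownian and a Markov-chain component, as invoked in the proof of Proposition \ref{V has version}, cf.\ \cite{cohen:bsde,peng:comparison,cohen:generalcomparison}) gives $\ul{V}(t,z,x)\le\mc{J}(t,z,x,u)$ up to indistinguishability, so $\ul{V}=\essinf_{u}\mc{J}$. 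Optimality of a control $\ul{u}$ means precisely that $\mc{J}(t,z,x,\ul{u})=\ul{V}(t,z,x)$.

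First I would establish the "if" direction. Suppose $F\big(t,\x_t,\ul{u},\ul{\varphi}_1,\ul{\varphi}_2\big)=\ul{H}\big(t,\x_t,\ul{\varphi}_1,\ul{\varphi}_2\big)$ holds $dP\times dt$-a.e., where $(\ul V,\ul\varphi_1,\ul\varphi_2)$ is the solution triple of the BSDE with driver $\ul H$. Substituting this identity into \eqref{V inf}, the triple $(\ul V,\ul\varphi_1,\ul\varphi_2)$ then also solves the BSDE
\begin{equation*}
	\begin{cases}
		d\ul{V}(t,z,x)&=-F\big(t,\x_t,\ul{u}(t),\ul\varphi_1(t),\ul\varphi_2(t)\big)dt+\ul\varphi_1(t)dW_t+\langle\ul\varphi_2(t-),dM_t\rangle\\
		\ul{V}(T,z,x)&=g(\z_T,\x_T).
	\end{cases}
\end{equation*}
But this is exactly the BSDE characterizing $\mc{J}(t,z,x,\ul u)$ in Proposition \ref{J as solution to BSDE}. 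By the uniqueness of solutions to that BSDE (its driver is Lipschitz and balanced, as checked in Lemma \ref{existence of H}), we conclude $\ul{V}=\mc{J}(\cdot,\ul u)$ up to indistinguishability, i.e.\ $\ul u$ is optimal. The argument for $\ol u$ with $\ol H$ and \eqref{V sup} is identical, reversing the inequality.

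For the "only if" direction, suppose $\ul{u}$ is optimal, so $\mc{J}(t,z,x,\ul u)=\ul V(t,z,x)$ up to indistinguishability. Both processes are solutions of BSDEs: $\mc{J}(\cdot,\ul u)$ with driver $F(t,\x_t,\ul u(t),\cdot,\cdot)$ and $\ul V$ with driver $\ul H$. Since their solution processes coincide, and since the martingale representation is unique, the associated control processes $(\ul\varphi_1,\ul\varphi_2)$ coincide as well, and comparing the finite-variation (drift) parts of the two BSDEs forces $F\big(t,\x_t,\ul u(t),\ul\varphi_1(t),\ul\varphi_2(t)\big)=\ul H\big(t,\x_t,\ul\varphi_1(t),\ul\varphi_2(t)\big)$ for $dP\times dt$-almost every $(t,\omega)$. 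The same reasoning applies to $\ol u$. I expect the main obstacle to be making the last step fully rigorous: one must argue that equality of the two one-dimensional semimartingales (special semimartingales, in fact) forces equality of both their martingale parts — hence of the integrands against $dW$ and $dM$, using the orthogonality of $W$ and $M$ and the predictable representation property — and of their predictable finite-variation parts. This is where uniqueness of the canonical decomposition of a special semimartingale and the injectivity of stochastic integration (up to $dP\times dt$-null sets, and $dP\times d\langle M\rangle$-null sets for the chain part) must be invoked carefully; everything else is a direct consequence of comparison and uniqueness for balanced Lipschitz BSDEs already assembled in the preceding lemmas.
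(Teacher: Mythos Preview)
Your proposal is correct and follows essentially the same route as the paper: the ``if'' direction substitutes the hypothesis into \eqref{V inf} and invokes uniqueness of BSDE solutions, and the ``only if'' direction compares the two BSDE representations of the same process via uniqueness of the canonical semimartingale decomposition and of the martingale representation. The paper's version is marginally more careful in one place: it uses the comparison theorem to upgrade optimality at time $0$ (i.e.\ $\mc{J}(0,z,x,\ul{u})=\ul{V}(0,z,x)$) to equality of the processes $\mc{J}(s,z,x,\ul{u})=\ul{V}(s,z,x)$ for all $s$, before invoking the decomposition argument you outline.
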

	%\iffalse
	\begin{proof}
		By definition, $\mc{J}(t,z,x,u)\geq\ul{V}(t,z,x)$ and $\mc{J}(t,z,x,u)\leq\ol{V}(t,z,x)$ for all $u\in\mc{U}$ and $(t,z,x)\in\mc{T}\times\mathbb{R}\times\mc{S}$, with equality if and only if $u$ is optimal for the respective control problem.
		
		Suppose that we have controls $\ul{u}$ and $\ol{u}$ such that
		\begin{equation*}
			F\big(t,\x_t,\ul{u}(t),\ul{\varphi}_1(t),\ul{\varphi}_2(t)\big)=\ul{H}\big(t,\x_t,\ul{\varphi}_1(t),\ul{\varphi}_2(t)\big)
		\end{equation*} 
		and
		\begin{equation*}
			F\big(t,\x_t,\ol{u}(t),\ol{\varphi}_1(t),\ol{\varphi}_2(t)\big)=\ol{H}\big(t,\x_t,\ol{\varphi}_1(t),\ol{\varphi}_2(t)\big).
		\end{equation*}
		It follows that $\big(\ul{V},\ul{\varphi}_1,\ul{\varphi}_2\big)$ and $\big(\ol{V},\ol{\varphi}_1,\ol{\varphi}_2\big)$ solve the BSDEs with drivers $F(\cdot,\cdot,\ul{u}(t),\cdot,\cdot)$ and 
		$F(\cdot,\cdot,\ol{u}(t),\cdot,\cdot)$, respectively. By uniqueness of solutions of BSDEs, 
		\begin{align*}
			\mc{J}(t,z,x,\ul{u})=\ul{V}(t,z,x)\quad\mbox{and}\quad \mc{J}(t,z,x,\ol{u})=\ol{V}(t,z,x).
		\end{align*}
		This proves that $\ul{u}$ and $\ol{u}$ are optimal controls.
		
		Conversely, suppose $\ul{u}$ and $\ol{u}$ are optimal. Then from Proposition \ref{J as solution to BSDE}, for some $\ul{\varphi}'_1$, $\ul{\varphi}'_2$, $\ol{\varphi}'_1$, and $\ol{\varphi}'_2$, the triples $(\mc{J}(t,z,x,\ul{u}),\ul{\varphi}'_1,\ul{\varphi}'_2)$ and $(\mc{J}(t,z,x,\ol{u}),\ol{\varphi}'_1,\ol{\varphi}'_2)$ are solutions to BSDEs with drivers $F(\cdot,\cdot,\ul{u}(t),\cdot,\cdot)$ and $F(\cdot,\cdot,\ol{u}(t),\cdot,\cdot)$, respectively. Furthermore, 
		\begin{equation*}
			F(t,x,u,\varphi_1,\varphi_2)\geq\ul{H}(t,x,\varphi_1,\varphi_2)\quad\mbox{and}\quad F(t,x,u,\varphi_1,\varphi_2)\leq\ol{H}(t,x,\varphi_1,\varphi_2).
		\end{equation*}
		Using the definition of $\ul{V}$ and $\ol{V}$ in Proposition \ref{V has version} and the comparison theorem, for all $s\in\mc{T}$ we have
		\begin{equation*}
			\mc{J}(0,z,x,\ul{u})=\ul{V}(0,z,x)\quad\mbox{and}\quad \mc{J}(0,z,x,\ol{u})=\ol{V}(0,z,x)
		\end{equation*}
		if and only if
		\begin{equation*}
			\mc{J}(s,z,x,\ul{u})=\ul{V}(s,z,x)\quad\mbox{and}\quad \mc{J}(s,z,x,\ol{u})=\ol{V}(s,z,x).
		\end{equation*}
		Since these processes have a unique canonical semimartingale decomposition, we have, up to indistinguishability,
		\begin{equation*}
			F\big(t,\x_t,\ul{u}(t),\ul{\varphi}'_1(t),\ul{\varphi}'_2(t)\big)=\ul{H}\big(t,\x_t,\ul{\varphi}_1(t),\ul{\varphi}_2(t)\big)
		\end{equation*}
		and
		\begin{equation*}
			F\big(t,\x_t,\ol{u}(t),\ol{\varphi}'_1(t),\ol{\varphi}'_2(t)\big)=\ol{H}\big(t,\x_t,\ol{\varphi}_1(t),\ol{\varphi}_2(t)\big).
		\end{equation*}
		Furthermore,
		\begin{equation*}
			\int_0^T(\ul{\varphi}_1'(t)dW_t+\ul{\varphi}_2'(t)d\ol{J}(t))=\int_0^T(\ul{\varphi}_1(t)dW_t+\ul{\varphi}_2(t)d\ol{J}(t))
		\end{equation*}
		and
		\begin{equation*}
			\int_0^T(\ol{\varphi}_1'(t)dW_t+\ol{\varphi}_2'(t)d\ol{J}(t))=\int_0^T(\ol{\varphi}_1(t)dW_t+\ol{\varphi}_2(t)d\ol{J}(t)).
		\end{equation*}
		Since $(\varphi_1,\varphi_2)\in M^2(\mc{T};\mathbb{R}^{K\times N},\mc{G}_t)\times P^2(\mc{T};\mathbb{R}^{K\times N},\mc{G}_t)$, then by the uniqueness of the martingale representation theorem, we have
		\begin{equation*}
			\|\ul{\varphi}_1-\ul{\varphi}'_1\|^2=0,\quad\|\ul{\varphi}_2-\ul{\varphi}'_2\|^2_{X_{t-}}=0,\quad\|\ol{\varphi}_1-\ol{\varphi}'_1\|=0,\quad\mbox{and}\quad\|\ol{\varphi}_2-\ol{\varphi}'_2\|^2_{X_{t-}}=0.
		\end{equation*}
		Since $F$ and $H$ are continuous with respect to their norms, then the result follows.
	\end{proof}
	%\fi
	The following proposition states that an optimal feedback control exists if an optimal control exists. Optimal feedback controls are controls that depend only on the current values of the state variables $(t,\z_t,\x_t)$.
	
	\begin{mypr}
		Suppose that there exist $\ul{u},\ol{u}\in\textsf{U}$ such that
		\begin{equation*}
			F(t,x,\ul{u},\varphi_1,\varphi_2)=\essinf_{u\in\textsf{U}}F(t,x,u,\varphi_1,\varphi_2)
		\end{equation*}
		and
		\begin{equation*}
			F(t,x,\ol{u},\varphi_1,\varphi_2)=\esssup_{u\in\textsf{U}}F(t,x,u,\varphi_1,\varphi_2).
		\end{equation*}
		Then there exist feedback controls $\ul{u}^*,\ol{u}^*:\mc{T}\times\mathbb{R}\times\mc{S}\mapsto\textsf{U}$ such that
		\begin{align*}
			\ul{u}^*(t,\z_t,\x_t)\quad\mbox{and}\quad \ol{u}^*(t,\z_t,\x_t)
		\end{align*}
		are optimal among all predictable controls.
	\end{mypr}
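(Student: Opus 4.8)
The plan is to convert the pointwise optimisers of $F$ into a jointly measurable selection, to compose that selection with a Markovian representation of the BSDE data produced in Proposition \ref{V has version}, and then to read off optimality directly from the minimum/maximum principle, Proposition \ref{minmax principle}.

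First I would upgrade the standing hypothesis — that for each $(t,x,\varphi_1,\varphi_2)$ the value $\ul H(t,x,\varphi_1,\varphi_2)=\essinf_{u\in\textsf{U}}F(t,x,u,\varphi_1,\varphi_2)$ is attained in $\textsf{U}$, and likewise $\ol H$ — to the existence of Borel maps
\[
\hat u_{\inf},\ \hat u_{\sup}:\mc{T}\times\mc{S}\times\mathbb{R}\times\mathbb{R}^N\longrightarrow\textsf{U}
\]
with $F(t,x,\hat u_{\inf}(t,x,\varphi_1,\varphi_2),\varphi_1,\varphi_2)=\ul H(t,x,\varphi_1,\varphi_2)$ and analogously $F(t,x,\hat u_{\sup}(t,x,\varphi_1,\varphi_2),\varphi_1,\varphi_2)=\ol H(t,x,\varphi_1,\varphi_2)$ everywhere. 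The ingredients are already available: conditions (i) and (iv) of Lemma \ref{existence of inf and sup} give continuity of $u\mapsto F(t,x,u,\varphi_1,\varphi_2)$ and the $\sigma$-compact metrizable structure of $\textsf{U}$, while $F$ depends Borel-measurably on $(t,x,\varphi_1,\varphi_2)$, so Filippov's implicit function theorem (Theorem 21.3.4 of \cite{elliott:stochastic}, already invoked in the proof of Proposition \ref{V has version}) applies — the only difference being that the selection is now carried out with $(\varphi_1,\varphi_2)$ retained as arguments rather than evaluated pathwise.

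Next I would exploit the Markov structure of the problem. The forward pair $(\z,\x)$ solving \eqref{zdynamics} is jointly Markov under $P$ (the chain $\mathbf{X}$ carrying rate matrix $A$ and $\z$ being driven by $W$ and $\x$); moreover the terminal value $g(\z_T,\x_T)$ is a function of the terminal state only, and the drivers $\ul H,\ol H$ depend on $\omega$ through $(t,\x_t)$ alone. Hence the unique solutions $(\ul V,\ul\varphi_1,\ul\varphi_2)$ and $(\ol V,\ol\varphi_1,\ol\varphi_2)$ of \eqref{V inf} and \eqref{V sup} admit deterministic representations: there are measurable functions $\ul\phi_1,\ul\phi_2,\ol\phi_1,\ol\phi_2$ with, up to indistinguishability, $\ul\varphi_1(t)=\ul\phi_1(t,\z_{t-},\x_{t-})$, $\ul\varphi_2(t)=\ul\phi_2(t,\z_{t-},\x_{t-})$, and similarly for the supremum problem (finiteness of $\mc{S}$ reduces this to a coupled system of $N$ equations). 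I would then set
\[
\ul u^*(t,z,x):=\hat u_{\inf}\big(t,x,\ul\phi_1(t,z,x),\ul\phi_2(t,z,x)\big),\qquad
\ol u^*(t,z,x):=\hat u_{\sup}\big(t,x,\ol\phi_1(t,z,x),\ol\phi_2(t,z,x)\big),
\]
which are Borel maps $\mc{T}\times\mathbb{R}\times\mc{S}\to\textsf{U}$; evaluated along the state they yield predictable controls $t\mapsto\ul u^*(t,\z_{t-},\x_{t-})$ and $t\mapsto\ol u^*(t,\z_{t-},\x_{t-})$ in $\mc{U}$, and condition (ii) of Lemma \ref{existence of inf and sup} guarantees $D_0(\ul u^*)\x_{t-}-\mathbf{1}>-\mathbf{1}$, so the density $\Lambda^u_1$ remains a true martingale and $\mc{J}(t,z,x,\ul u^*)$ is well defined.

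Finally, by construction $F\big(t,\x_t,\ul u^*(t,\z_t,\x_t),\ul\varphi_1(t),\ul\varphi_2(t)\big)=\ul H\big(t,\x_t,\ul\varphi_1(t),\ul\varphi_2(t)\big)$ and the corresponding identity with $\ol u^*$ hold $dP\times dt$-a.e., so Proposition \ref{minmax principle} immediately yields $\mc{J}(t,z,x,\ul u^*)=\ul V(t,z,x)$ and $\mc{J}(t,z,x,\ol u^*)=\ol V(t,z,x)$, i.e.\ $\ul u^*$ and $\ol u^*$ are optimal among all predictable controls. I expect the genuine work to be concentrated in the two middle steps — running Filippov's theorem with the frozen variables $(\varphi_1,\varphi_2)$ kept as arguments of the selector, and justifying that the martingale-representation integrands of a Markovian BSDE in this jump--diffusion filtration are deterministic functions of $(t,\z_{t-},\x_{t-})$ — after which the conclusion is a one-line appeal to the minimum/maximum principle.
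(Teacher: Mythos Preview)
Your proposal is correct and follows essentially the same approach as the paper: invoke the Markovian structure of the BSDE so that $(\varphi_1,\varphi_2)$ are Borel functions of $(t,z,x)$, apply Filippov's implicit function theorem to obtain measurable selectors, and conclude optimality via Proposition \ref{minmax principle}. The only cosmetic difference is that you run Filippov with $(\varphi_1,\varphi_2)$ kept as free arguments and then compose, whereas the paper substitutes the Markovian representation into $F$ first and applies Filippov directly to the composite map $(t,z,x,u)\mapsto F(t,x,u,\varphi_1(t,z,x),\varphi_2(t,z,x))$; your extra checks (well-definedness of $\Lambda_1^u$, predictability of the resulting control) are welcome detail but not a different strategy.
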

	%\iffalse
	\begin{proof}
		From Proposition \ref{minmax principle}, a control is optimal if and only if it minimizes or maximizes $F(t,x,u,\varphi_1,\varphi_2)$. Since $\varphi_1$ and $\varphi_2$ come from the solution of a Markovian BSDE, then $\varphi_1$ and $\varphi_2$ are Borel measurable functions. Using Filippov's Implicit Function Theorem, there are $\mc{B}(\mc{T}\times\mathbb{R}\times\mc{S})$-measurable maps $\ul{u}^*$ and $\ol{u}^*$ which minimizes and maximizes, respectively, $F(t,x,u,\varphi_1(t,z,x),\varphi_2(t,z,x))$ for all $(z,x)\in\mathbb{R}\times\mc{S}$ and almost all $t\in\mc{T}$.
	\end{proof}
	%\fi
	We now state the verification theorem.
	
	\begin{mypr}[Verification Theorem]\label{verification theorem}
		Define the integro-differential operator $\mc{L}$ by
		\begin{align*}
			\mc{L}v(t,z,x)&=f(t,z,x)v_z(t,z,x)+\frac{1}{2}\sigma^2(t,z,x)v_{zz}(t,z,x).	
		\end{align*}
		Write $\mathbf{v}:=(v(t,z,e_1),\ldots,v(t,z,e_N))^{\top}$. Consider the following Hamilton-Jacobi-Bellman (HJB) equations:
		\begin{equation}\label{hjbinf}
			\begin{cases}
				v_t(t,z,x)+\mc{L}v(t,z,x)+\ul{H}(t,x,v_z\sigma,\mathbf{v}+\eta(t,z))+\langle\mathbf{v}+\eta(t,z),Ax\rangle=0,\\
				v(T,z,x)=g(\z_T,\x_T)
			\end{cases}
		\end{equation}
		and
		\begin{equation}\label{hjbsup}
			\begin{cases}
				v_t(t,z,x)+\mc{L}v(t,z,x)+\ol{H}(t,x,v_z\sigma,\mathbf{v}+\eta(t,z))+\langle\mathbf{v}+\eta(t,z),Ax\rangle=0,\\
				v(T,z,x)=g(\z_T,\x_T),
			\end{cases}
		\end{equation}
		where $\eta(t,z):=(\eta(t,z,e_1),\ldots,\eta(t,z,e_N))^{\top}\in\mathbb{R}^N$.
		Suppose the HJB equations \eqref{hjbinf} and \eqref{hjbsup} admit $C^{1,2}([0,T)\times\mathbb{R})$ solutions $\ul{v}$ and $\ol{v}$, which satisfy the growth bound condition:
		\begin{align*}
			\|\ul{v}(s,z,x)\|^2+\|\ul{v}_z(s,z,x)\sigma(s,z,x)\|^2+\|\mathbf{v}+\eta(t,z)\|_{X_{t-}}^2\leq(1+|z|^2)
		\end{align*}
		and
		\begin{align*}
			\|\ol{v}(s,z,x)\|^2+\|\ol{v}_z(s,z,x)\sigma(s,z,x)\|^2+\|\mathbf{v}+\eta(t,z)\|_{X_{t-}}^2\leq(1+|z|^2).
		\end{align*}
		Then, for each $x\in\mc{S}$, the value functions $\ul{V}(t,z,x)=\ul{v}(t,z,x)$ and $\ol{V}(t,z,x)=\ol{v}(t,z,x)$ are the respective value functions of the control problems.
	\end{mypr}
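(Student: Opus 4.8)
The plan is to show that the process $s\mapsto\ul{v}(s,\z_s,\x_s)$, together with a suitable pair of integrands, is an admissible solution of the BSDE \eqref{V inf}, and then to conclude $\ul{v}=\ul{V}$ by uniqueness of such solutions and by Proposition \ref{V has version}; the statement for $\ol{v}$ and \eqref{V sup} follows identically with $\ul{H}$ replaced by $\ol{H}$. Fix $(t,z,x)\in[0,T)\times\mathbb{R}\times\mc{S}$ and apply It\^o's formula to $\ul{v}(s,\z_s,\x_s)$ on $[t,T]$, working under $P$ as in the proof of Proposition \ref{J as solution to BSDE}, so that $\x$ has the dynamics $d\x_s=A\x_s\,ds+dM_s$ of \eqref{chain} and $\z$ solves the state equation \eqref{stateequation}/\eqref{zdynamics}. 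Since $\ul{v}\in C^{1,2}$ and $\x\in\mc{S}$, we may write $\ul{v}(s,\z_s,\x_s)=\langle\mathbf{v}(s,\z_s),\x_s\rangle$, and It\^o's formula produces the continuous part $\big(\ul{v}_t+\mc{L}\ul{v}\big)\,ds+\ul{v}_z\sigma\,dW_s$ together with a jump part arising jointly from $\langle\mathbf{v}(s,\z_s),dX_s\rangle$ and the coupled jump $\eta(s,\z_{s-},\x_{s-})\langle\alpha,d\x_s\rangle$ of $\z$.

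The key bookkeeping step is to check that, because \eqref{stateequation} is affine in $\langle\alpha,dX\rangle$, these jump terms reorganize into the finite-variation term $\langle\mathbf{v}(s,\z_s)+\eta(s,\z_s),A\x_s\rangle\,ds$ and the martingale term $\langle\mathbf{v}(s,\z_s)+\eta(s,\z_s),dM_s\rangle$, where $\eta(s,z):=(\eta(s,z,e_1),\dots,\eta(s,z,e_N))^{\top}$. Substituting the HJB equation \eqref{hjbinf} then replaces the drift $\ul{v}_t+\mc{L}\ul{v}+\langle\mathbf{v}+\eta(s,\z_s),A\x_s\rangle$ by $-\ul{H}\big(s,\x_s,\ul{v}_z\sigma,\mathbf{v}+\eta(s,\z_s)\big)$, so that with $\varphi_1(s):=\ul{v}_z(s,\z_s,\x_s)\sigma(s,\z_s,\x_s)$ and $\varphi_2(s):=\mathbf{v}(s,\z_s)+\eta(s,\z_s)$ we obtain
\begin{align*}
d\ul{v}(s,\z_s,\x_s)&=-\ul{H}\big(s,\x_s,\varphi_1(s),\varphi_2(s)\big)\,ds+\varphi_1(s)\,dW_s+\langle\varphi_2(s-),dM_s\rangle,
\end{align*}
with $\ul{v}(T,\z_T,\x_T)=g(\z_T,\x_T)$ from the terminal condition of \eqref{hjbinf}.

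Next I would check admissibility: the assumed growth bound on $\ul{v}$, together with the $L^2$-estimate $E[\sup_{s\in[t,T]}|\z_s|^2]<\infty$ coming from Assumption \ref{assumption:Lipschitz} (cf.\ Lemma \ref{unique}), give $E\int_t^T\big(\|\varphi_1(s)\|^2+\|\varphi_2(s)\|_{\x_{s-}}^2\big)\,ds<\infty$ and $E[\sup_s|\ul{v}(s,\z_s,\x_s)|^2]<\infty$, so $(\varphi_1,\varphi_2)$ lies in the spaces $M^2\times P^2$ of Proposition \ref{J as solution to BSDE} and the stochastic integrals above are genuine (not merely local) martingales. Hence $\ul{v}(\cdot,\z_\cdot,\x_\cdot)$, together with $(\varphi_1,\varphi_2)$, is an admissible solution of \eqref{V inf}. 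Since $\ul{H}$ is a balanced Lipschitz driver by Lemma \ref{existence of H}, the BSDE \eqref{V inf} has a unique solution, and by Proposition \ref{V has version} that solution is the c\`adl\`ag modification of $\ul{V}$. Therefore $\ul{v}(s,\z_s,\x_s)=\ul{V}(s,z,x)$ for every $s\in[t,T]$; evaluating at $s=t$ (so $\z_t=z$, $\x_t=x$) gives $\ul{v}(t,z,x)=\ul{V}(t,z,x)$, and the same argument with \eqref{hjbsup}, $\ol{H}$, Lemma \ref{existence of H} and \eqref{V sup} gives $\ol{v}(t,z,x)=\ol{V}(t,z,x)$.

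The main obstacle is the It\^o expansion: because $\z$ and $\x$ jump at the same instants, one must verify carefully that the compensated jump contribution assembles exactly into the terms $\langle\mathbf{v}+\eta(s,\z_s),A\x_s\rangle\,ds$ and $\langle\mathbf{v}+\eta(s,\z_s),dM_s\rangle$ appearing in \eqref{hjbinf} and \eqref{V inf} — this is precisely where the affine-in-$\langle\alpha,dX\rangle$ structure of \eqref{stateequation} and the identity $d\langle M\rangle_s=\Psi_s\,ds$ from \eqref{psi} are used. The remaining points — the localization needed to upgrade local martingales to martingales, which is exactly the role of the growth bound, and the predictability of $\varphi_1,\varphi_2$ — are routine.
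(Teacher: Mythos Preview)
Your proposal is correct and follows essentially the same approach as the paper: apply It\^o's formula to $\ul{v}$ along the state process, substitute the HJB equation \eqref{hjbinf} to convert the drift into $-\ul{H}$, recognize the result as the BSDE \eqref{V inf}, and invoke the growth bound plus uniqueness (via Lemma \ref{existence of H} and Proposition \ref{V has version}) to identify $\ul{v}$ with $\ul{V}$. You supply considerably more detail than the paper on the jump bookkeeping and the admissibility check, but the strategy is identical.
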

	%\iffalse
	\begin{proof}
		We will prove the infimum case. Using Ito's formula,
		\begin{align*}
			d\ul{v}(t,z,x)&=\big[\ul{v}_t(t,z,x)+\mc{L}\ul{v}(t,z,x)+\langle\mathbf{\ul{v}}+\eta(t,z),Ax\rangle]dt+\ul{v}_z(t,z,x)\sigma(t,z,x)dW_t+\langle\mathbf{\ul{v}}+\eta(t,z),dM_t\rangle.
		\end{align*}
		Since $\ul{v}$ is the solution to \eqref{hjbinf}, then
		\begin{align*}
			d\ul{v}(t,z,x)&=-\ul{H}(t,x,v_z(t,z,x)\sigma(t,z,x),\mathbf{\ul{v}}+\eta(t,z))dt+\ul{v}_z(t,z,x)\sigma(t,z,x)dW_t+\langle\mathbf{\ul{v}}+\eta(t,z),dM_t\rangle.
		\end{align*}
		It follows that $\ul{v}$ solves \eqref{V inf}. Since $\ul{v}$ satisfies the growth bound condition, then by uniqueness $\ul{v}$ and $\ul{V}$ must agree, and similarly for $\varphi_1$ and $\varphi_2$ in their $M^2(\mc{T};\mathbb{R}^{K\times N},\mc{G}_t)$ and $P^2(\mc{T};\mathbb{R}^{K\times N},\mc{G}_t)$, respectively.
	\end{proof}
	%\fi 
	Propositions \ref{V has version}, \ref{minmax principle}, and \ref{verification theorem} yield the following result.
	
	\begin{myth}
		The value processes $\ul{V}$ and $\ol{V}$ defined in \eqref{valinf} and \eqref{valsup} provide models of the bid and ask prices of the European call option.
	\end{myth}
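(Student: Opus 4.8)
The plan is to assemble the preceding results into their economic reading rather than to prove a fresh estimate; indeed the theorem is essentially the assertion that Propositions \ref{V has version}, \ref{minmax principle}, and \ref{verification theorem} together endow $\ul V$ and $\ol V$ with the defining properties of a two-price model. First I would fix the payoff: for the European call of Section \ref{uend} the terminal datum in the objective functional \eqref{objfunc} is $g(\z_T,\x_T)=e^{-\int_0^Tr_u\,du}(\z_T-K)^{+}$ (equivalently the scaled form coming from $S_t=\ol S_t\langle\alpha,X^u_t\rangle$), and the state process $\{\z_t\}$ is the discounted underlying obeying \eqref{zdynamics}. Then for each admissible $u\in\mc U$ the functional $\mc J(t,z,x,u)$ of \eqref{objfunc} is exactly the time-$t$ price of the claim under the scenario $Q^u$, where the Esscher parameter $\theta$ is pinned down by Proposition \ref{emm} so that the discounted underlying is a $(\mathbb G,Q^u)$-martingale. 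Thus $\{Q^u:u\in\mc U\}$ is a family of equivalent local martingale measures indexed by the control governing the rate matrix $B(u)$ of the chain, i.e. by the admissible views on the diffusion/jump regime, and by the fundamental theorem of asset pricing for semimartingales bounded below \cite{delbaenschachermayer:FTOAP} each $\mc J(\cdot,\cdot,\cdot,u)$ is an arbitrage-free price for the option.

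Next I would argue that $\ol V=\esssup_{u}\mc J$ and $\ul V=\essinf_{u}\mc J$ from \eqref{valsup} and \eqref{valinf} deserve the names ask and bid: the seller, hedging against the most adverse admissible regime dynamics, charges the largest consistent valuation $\ol V$; the buyer, symmetrically, pays only the smallest, $\ul V$. Three properties would then be recorded. $(1)$ By the comparison theorem for BSDEs used in the proof of Proposition \ref{V has version}, $\ul V(t,z,x)\le\mc J(t,z,x,u)\le\ol V(t,z,x)$ for every $u\in\mc U$, so the interval $[\ul V,\ol V]$ brackets every no-arbitrage price in the family; in particular $\ul V\le\ol V$, with equality only in the degenerate single-regime case. $(2)$ The terminal conditions in \eqref{V inf} and \eqref{V sup} give $\ul V(T,z,x)=\ol V(T,z,x)=g(\z_T,\x_T)$, so at maturity both prices collapse to the payoff, as any reasonable price model must. $(3)$ Dynamic consistency: Proposition \ref{V has version} exhibits $\ul V$ and $\ol V$ as c\`adl\`ag solutions of the BSDEs \eqref{V inf} and \eqref{V sup} with the standard balanced drivers $\ul H,\ol H$ of Lemma \ref{existence of H}, which is precisely the statement that the bid and ask satisfy a dynamic programming principle and hence define time-consistent nonlinear pricing operators (a lower and an upper expectation).

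Finally I would record the two structural characterizations that make the model operational. Probabilistically, Proposition \ref{minmax principle} identifies the optimal controls $\ul u,\ol u$ pointwise through the minimum/maximum relations $F(t,\x_t,\ul u,\ul\varphi_1,\ul\varphi_2)=\ul H(t,\x_t,\ul\varphi_1,\ul\varphi_2)$ and $F(t,\x_t,\ol u,\ol\varphi_1,\ol\varphi_2)=\ol H(t,\x_t,\ol\varphi_1,\ol\varphi_2)$, so the worst-case regime dynamics realizing the bid and ask are explicit from the BSDE solution triples, while Lemma \ref{existence of inf and sup} guarantees the extremal drivers are themselves balanced Lipschitz, so the extremal BSDEs are well posed. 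Analytically, the Verification Theorem (Proposition \ref{verification theorem}) shows that $C^{1,2}$ solutions of the HJB equations \eqref{hjbinf} and \eqref{hjbsup} satisfying the stated growth bounds coincide with $\ul V$ and $\ol V$, which together with the PDE system \eqref{C as pde} and its homotopy series from Section \ref{uend} supplies a computational route. Combining $(1)$--$(3)$ with these characterizations shows that $\ul V$ and $\ol V$ are well-defined, dynamically consistent, arbitrage-compatible price processes that reduce to the payoff at maturity, i.e. legitimate bid and ask price models for the European call.

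The only substantive point, as opposed to bookkeeping, is $(1)$: one must check that each $Q^u$ is a genuine equivalent local martingale measure, so that taking $\esssup$ and $\essinf$ over $u\in\mc U$ keeps the resulting processes inside the no-arbitrage band. This is exactly the content of Proposition \ref{emm} (the choice of $\theta^u_i$) together with \cite{delbaenschachermayer:FTOAP}, and the fact that passing to the essential extremum preserves well-posedness of the associated BSDEs is what Lemma \ref{existence of inf and sup} and the comparison theorem deliver; everything else is a direct appeal to Propositions \ref{V has version}, \ref{minmax principle}, and \ref{verification theorem}.
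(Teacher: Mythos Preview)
Your proposal is correct and takes essentially the same approach as the paper: the paper offers no formal proof at all, merely stating that the theorem follows from Propositions \ref{V has version}, \ref{minmax principle}, and \ref{verification theorem}, and then adding a remark linking the sup/inf representation to coherent risk measures and nonlinear expectations. Your write-up is a considerably more detailed unpacking of the same three propositions into the economic interpretation (arbitrage-compatibility via Proposition \ref{emm}, bracketing via comparison, terminal consistency, and dynamic consistency via the BSDE characterization), which is entirely in the spirit of the paper's one-line justification; if anything, your version supplies the argument the paper leaves implicit.
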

	
	The result above can provide a connection to risk measures and nonlinear expectations. From \cite{ref2:conic,ref15:conic}, coherent risk measures can be represented as the suprema of expectations over a set of probability measures. Based on this result, the bid and ask prices can be represented as the infimum or supremum, respectively, of expectations over a convex set of probability measures, which is one of the main assumptions in conic finance \cite{ref1:conic}. 
	
	On the other hand, from \cite{ref7:sublinear}, nonlinear expectations can be represented as a solution to a BSDE if the driver satisfies some conditions. One type of nonlinear expectation is the sublinear expectation, which satisfies subadditity and positive homogeneity. Sublinear expectations can be represented as a supremum of a family of expectations \cite{ref2:sublinear,peng:sublinear}.

	\section{Conclusion} A pricing formula for a European call option whose underlying asset has dynamics governed by the state process is modelled using a system of partial differential equations. Modelling the bid and ask prices as the infimum and supremum, respectively, of the objective functional in \eqref{objfunc}, we have shown through a dynamic programming principle that these prices can be described as solutions to the BSDEs \eqref{V inf} and \eqref{V sup}. Optimality conditions for the drivers of the BSDEs are then proved in a minimum/maximum principle. Assuming some conditions are satisfied, we have also shown through a verification theorem that the bid and ask prices are solutions to the HJB equations \eqref{hjbinf} and \eqref{hjbsup}.
		
	Further work may be done to discuss model estimation and model calibration. Numerical methods may be used to obtain estimates of the semi-analytical European option price and the BSDEs associated to the stochastic control problems.

	\section*{Acknowledgements} The authors would like to thank the Australian Research Council and NSERC for continuing support.

	\bibliographystyle{siam}
	\bibliography{UpdatedReferences}

\end{document}